\newtheorem{theorem}{Theorem}
\newtheorem*{theorem*}{Theorem}
\newtheorem{lemma}{Lemma}
\theoremstyle{definition}
\newtheorem{claim}{Claim}
\newcommand{\ie}{\textit{i.e.}\xspace}
\newcommand{\eg}{\textit{e.g.}\xspace}
\newcommand{\ex}[1]{ \mathbb{E} \left[ #1 \right] }
\newcommand{\epsi}[0]{ \varepsilon }
\newcommand{\reals}[0]{ \mathbb{R}^+ }
\newcommand{\func}[2]{ #1 \left( #2 \right) }
\newcommand{\ff}[1]{ f \left( #1 \right) }
\newcommand{\parth}[1]{ \left( #1 \right) }
\newcommand{\marge}[2]{ \delta_{#1} \left( #2 \right) }
\DeclareMathOperator*{\argmax}{arg\,max}
\newcommand{\opt}{\textsc{OPT}\xspace}
\newcommand{\oh}[1]{O\left( #1 \right)}
\newcommand{\ig}{\textsc{Inter\-lace\-Greedy}\xspace}
\newcommand{\lrig}{\textsc{Interpolated\-Greedy}\xspace}
\newcommand{\lrtig}{\textsc{Derandomized\-Interpolated\-Greedy}\xspace}
\newcommand{\add}{\texttt{ADD}\xspace}
\newcommand{\unc}{\textsc{UncMax}\xspace}
\newcommand{\uncalg}{\textsc{UncMaxCompetitiveAlg}\xspace}
\newcommand\numberthis{\addtocounter{equation}{1}\tag{\theequation}}
\newcommand{\ratio}{(2b + 4)(1 + \beta/b) + \epsi}
\newcommand{\ratioapprox}{23.313 + \epsi}
\newcommand{\lcratio}{(2b + 4)(1 + 1/b)}
\newcommand{\lcratioapprox}{11.657}
\newcommand{\um}{\text{um}}
\newcommand{\maxSize}{2\ell(k / b + 1)\log_2(k)}
\newcommand{\ratioPrime}{1 + \beta / b}
\newcommand{\fb}{$\texttt{fb}$\xspace}
\newcommand{\maxcut}{$\texttt{maxcut}$\xspace}
\newcommand{\revmax}{$\texttt{revmax}$\xspace}
\newcommand{\imgsum}{$\texttt{imgsum}$\xspace}
\newcommand{\ba}{$\texttt{ba}$\xspace}
\newcommand{\cif}{$\texttt{CIFAR-10}$\xspace}
\newcommand{\slashdot}{$\texttt{slashdot}$\xspace}
\newcommand{\er}{$\texttt{er}$\xspace}
\newcommand{\pokec}{$\texttt{pokec}$\xspace}
\newcommand{\uni}{\mathcal{U}}
\newcommand{\nmon}{\textsc{SMCC}\xspace}
\newcommand{\fkk}{\textsc{FKK}\xspace}
\newcommand{\aefns}{\textsc{AEFNS}\xspace}
\newcommand{\lrvz}{\textsc{LRVZ}\xspace}
\newcommand{\qs}{\textsc{Linear\-Stream}\xspace}
\newcommand{\qo}{\textsc{Linear\-UncMax}\xspace}
\newcommand{\lc}{\textsc{Linear\-Card}\xspace}
\newcommand{\uu}{\textsc{UncMax\-Update}\xspace}
\newcommand{\qss}{\textsc{LS}\xspace}
\newcommand{\qssp}{\textsc{LS+}\xspace}
\newcommand{\mpl}{\textsc{MultiPass\-Linear}\xspace}
\newcommand{\mpll}{$\textsc{MPL}$\xspace}
\newcommand{\qsmem}{\oh{ k \log(k) \log( 1 / \epsi )}}
\newcommand{\cmark}{\ding{51}}%
\newcommand{\xmark}{\ding{55}}%
\title{Approximation Algorithms for Size-Constrained Non-Monotone Submodular Maximization in Deterministic Linear Time}
\author{
  Yixin Chen \\
  Department of Computer Science \& Engineering \\
  Texas A\&M University \\
  College Station, Texas, USA\\
  \texttt{chen777@tamu.edu} \\
   \And
  Alan Kuhnle \\
  Department of Computer Science \& Engineering \\
  Texas A\&M University \\
  College Station, Texas, USA\\
  \texttt{kuhnle@tamu.edu} \\
}
\begin{document}

\maketitle

\begin{abstract}
  In this work, we
  study the problem of
  finding the maximum value of a non-negative submodular function subject to a limit on the number of items selected,
  a ubiquitous problem that appears in many applications, such as data summarization and nonlinear regression.
  We provide the first deterministic, linear-time approximation algorithms for this problem that do not
  assume the objective is monotone. 
  We present three deterministic, linear-time algorithms:
  a single-pass streaming algorithm with a ratio of $23.313 + \epsilon$, which is the first linear-time streaming algorithm;
  a simpler deterministic linear-time algorithm with a ratio of $11.657$; and a $(4 + O(\epsilon ))$-approximation algorithm.
  Finally, we present a deterministic algorithm
  that obtains ratio of $e + \epsilon$
  in $O_{\epsilon}(n \log(n))$ time, close to the best known expected ratio
  of $e - 0.121$ in polynomial time.
\end{abstract}



\keywords{Submodular Maximization, Deterministic Algorithms, Linear Time}

\section{Introduction} \label{sec:intro}
Within discrete optimization, the submodularity property has
been shown to be a fundamental and useful property.
Intuitively, submodularity captures the idea of
diminishing returns, where the marginal gain in utility decreases
as the set becomes larger. 
Submodular objective functions arise in many learning objectives,
\eg interpreting neural networks \citep{Elenberg2017},
nonlinear sparse regression \citep{Elenberg2018},
among many others (see
\citet{Iyer2020} and references therein). In this work,
we study submodular maximization subject to
a size constraint, defined formally as follows.

\textbf{Submodularity and Problem Definition.} Formally,
a nonnegative, set function $f:2^{\mathcal U} \to \reals$,
where ground set $\mathcal U$ is of 
size $n$, is \textit{submodular}
if for all $S \subseteq T \subseteq \mathcal U$, $u \in \mathcal U \setminus T$, 
$\ff{ T \cup \{ u \} } - \ff{ T } \le \ff{ S \cup \{u \} } - \ff{S}$. 
A function $f$ is \textit{monotone} if $f(S) \le f(T)$ whenever
$S \subseteq T$.
In this work, we study the cardinality-constrained submodular maximization problem (\nmon): given
submodular $f$ and integer $k$, determine $\argmax_{|S| \le k} f(S).$ The function $f$ is not required to be monotone. 
We consider the value query model, in which the function $f$
is available to an algorithm as an oracle that returns, in a single
operation, the value $f(S)$ of any queried set $S$.
Since problem \nmon is NP-hard, we seek approximation algorithms that obtain a performance ratio with respect to an optimal solution.


\textbf{Challenges from Big Data.}
Because of ongoing exponential growth in data size \citep{Mislove2008,Libbrecht2017} over the past decades,
much effort has gone into the design of
algorithms for submodular optimization
with low time complexity, \eg
\citep{Badanidiyuru2014,Mirzasoleiman2014,Buchbinder2015a,Fahrbach2018,Kuhnle2019}.
In addition to the time complexity of an algorithm,
we also consider the number of oracle queries an
algorithm makes or the
query complexity; this information is important
as the function evaluation may be much more expensive
than arithmetic operations. 
Moreover, much effort has also gone into
the design of memory efficient algorithms
that do not need to store all of the data.
In this context, researchers have
studied \emph{streaming algorithms} 
for
submodular optimization \citep{Badanidiyuru2014a,Chakrabarti2015,Chekuri2015,Feldman2018,Mirzasoleiman2018,Alaluf2020a,Haba2020a,Liu2021}. 
A streaming algorithm
takes a constant number of passes 
through the ground set (preferably a single pass) while staying
within a small memory footprint
of $O(k \log n )$, where $k$ is the maximum size
of a solution and $n$ is the size of the ground set.

\textbf{Randomized vs. Deterministic Algorithms.}
Especially in the case that $f$ is non-monotone, randomization
has been shown to be very useful in designing approximation algorithms
for $\nmon$ \citep{Chan2017,Feldman2017,Amanatidis2020}.
Deterministic algorithms for $\nmon$ have been
much less common. In addition to the
theoretical question of the power of deterministic
vs. randomized algorithms, there
are practical disadvantages to randomized algorithms:
an approximation ratio that holds only in expectation
may result in a poor solution with constant probability.
To obtain the ratio with high probability,
$O( \log n )$ repetitions of the algorithm
are typically required, which
may be infeasible or undesirable in practice, especially
in a streaming context.
Moreover, the derandomization of algorithms for submodular
optimization has proven difficult, although
a method to derandomize some algorithms at the cost of a polynomial
increase in time complexity was 
given by \citet{Buchbinder2018}.

\textbf{Prior State-of-the-Art Deterministic Algorithms.}
The fastest
deterministic algorithm of prior literature is the
$4 + \epsi$-approximation in $O_\epsi( n \log k )$ of \citep{Kuhnle2019}.
On the other hand, the
best approximation ratio of a deterministic algorithm
is $e$ in $O(k^3n)$ time of \citep{Buchbinder2018}.
Since the constraint $k$ is not constant (in the worst case, it can be on the order of $n$), neither
of these algorithms are linear-time in the size
of the ground set.
Therefore, in this work we seek to
answer the following questions:
\textit{Q1: Does there exist a deterministic, linear-time algorithm for \nmon with constant approximation factor? If so, what is the best ratio obtainable in linear time? Q2: Does there exist a linear-time, single-pass streaming algorithm for \nmon?}
\begin{table*}[t] \caption{Single-pass streaming algorithms. Two entries for \citet{Alaluf2020a} are shown with different choice of post-processing algorithm, one to give it its best possible ratio and the other to give its fastest runtime.} \label{table:cmp}
\begin{center} 
\begin{tabular}{llllll} \toprule
Reference  &  Ratio & Deterministic? &  Time & Passes & Memory\\ 
  \midrule
  \citep{Alaluf2020a}+\citep{Buchbinder2016} & $3.597 + \epsi$ & \xmark &  $\oh{ \frac{1}{\epsi}\log \left( \frac{k}{\epsi} \right) \left( \frac{n}{\epsi}  + \text{poly}\left( \frac{k}{\epsi}\right) \right) }$ & 1 & $O( k / \epsi^2 )$ \\
  \citep{Alaluf2020a}+\citep{Kuhnle2019} & $5 + \epsi$ & \cmark & ${\scriptstyle \oh{ \frac{1}{\epsi}\log \left( \frac{k}{\epsi} \right) \left(\frac{n}{\epsi}  + \frac{k}{\epsi^2} \log \left( \frac{k}{\epsi^2} \right) \right)} }$ & 1 & $O( k / \epsi^2 )$ \\
  \citep{Liu2021} & $e + \epsi + o(1)$ & \xmark &  $\oh{ \frac{n}{\epsi^3}k^{2.5}}$ & 1 & $O(k/\epsi)$ \\ \midrule 
  \qs, Alg. \ref{alg:quickstream}  & $\ratioapprox$ & \cmark & ${O(n)}$ & 1 & $\oh{ k \log(k)\log\left( \frac{1}{\epsi} \right) }$ \\
  \bottomrule
\end{tabular}
\end{center}
\end{table*}
\begin{table*} \caption{Comparison with state-of-the-art algorithms in terms of ratio and time complexity.} \label{table:cmp2}
\begin{center} 
\begin{tabular}{llllll} \toprule
Reference  &  Ratio & Deterministic? &  Time \\
  \midrule
  Fastest, Randomized \citep{Buchbinder2015a} & $e + \epsi$ & \xmark & $\oh{ \frac{n}{\epsi^2} \log \left( \frac{1}{\epsi} \right) }$ \\ 
  Best Ratio, Randomized \citep{Buchbinder2016} & $e - 0.121$ & \xmark & $O( n^5 )$ \\ \midrule
  Fastest Deterministic \citep{Kuhnle2019} & $4 + \epsi$ & \cmark & $O\left( \frac{n}{\epsi} \log\left( \frac{k}{\epsi}\right) \right)$ \\
  Best Ratio, Deterministic \citep{Buchbinder2018} & $e$ & \cmark & $O(k^3 n)$ \\ 
  \midrule
  \lc, Alg. \ref{alg:linear-card} &  $\lcratioapprox$ & \cmark & $O(n)$ & \\
  \lc + \mpll, Alg. \ref{alg:boostratio}  & $4 + O( \epsi )$ & \cmark & $\oh{\frac{n}{\epsi} \log \left( \frac{1}{\epsi} \right)}$ \\
  Derandomized $\lrig$, Alg. \ref{alg:lrtig}  & $e + \epsi$ & \cmark & $O( \epsi^{-2/\epsi - 1} n \log (k) ) $ \\
  \bottomrule
\end{tabular}
\setcounter{footnote}{0}
\end{center}
\end{table*}

\textbf{Contributions.}
Our first contributions are
deterministic, linear-time
approximation algorithms for
\nmon. The first,
\lc, is
a linear-time, deterministic
approximation algorithm 
with ratio at most $\lcratioapprox$.
Once we have an initial approximation in
linear time, we use it as a subroutine in our
approximation algorithm \mpl, which obtains ratio
$4 + O( \epsi )$ in deterministic, linear time.
As its name suggests, \mpl is in addition a multi-pass
streaming algorithm.

Second, we provide
the first linear-time, single-pass streaming algorithm \qs
  for $\nmon$, with ratio $\ratioapprox$.
  The algorithm \qs requires as
  a subroutine a
  deterministic,
  linear-time algorithm for the unconstrained maximization problem
  that can update its solution in constant time to maintain
  a competitive ratio to the offline optimal.
  Since no algorithm satisfying these properties exists in the literature,
  we also provide \qo, a 4-competitive algorithm satisfying these properties.

  Finally, to obtain an algorithm closer to the
  best known deterministic ratio of $e$,
  we develop the
  deterministic algorithm \lrig, which
  obtains ratio $e + \epsi$ in time $O_{\epsi}(n \log n)$,
  for any $\epsi > 0$. This algorithm is a novel interpolation between
  the standard greedy algorithm \citep{Nemhauser1978}
  and the RandomGreedy algorithm of \citet{Buchbinder},
  as described further below; it
  is our only 
  superlinear-time algorithm. Although this is a significant
  theoretical improvement (from $O(n^4)$ of \citep{Buchbinder2018} to $O_\epsi(n \log n)$ with nearly the same ratio),
  the dependence of the runtime
  on the constant $\epsi$ is exponential, making
  our \lrig algorithm impractical. 

  Table \ref{table:cmp} shows how our algorithms compare
  theoretically to state-of-the-art streaming
  algorithms designed for \nmon, and Table
  \ref{table:cmp2} compares to state-of-the-art
  algorithms in terms of runtime or approximation ratio. 
  An empirical evaluation in Section \ref{sec:exp} shows improvement
  in query complexity and solution value of both our
  single-pass streaming algorithm 
  over the current state-of-the-art streaming algorithms
  on two applications of \nmon.
  

\subsection{Related Work} \label{sec:rw}
Because of the vast literature on submodular optimization,
we focus on the most
closely related works to ours. 

\textbf{The Single-Pass Algorithm of \citep{Kuhnle2021}.}
Our streaming algorithm \qs
may be viewed  as a generalization of the algorithm of
\citep{Kuhnle2021} to non-monotone submodular functions;
this generalization is accomplished by maintaining two
  disjoint candidate solutions $X$ and $Y$
  that compete for elements. The loss due to non-monotonicity can then be bounded using the
  inequality $f( S \cup X ) + f ( S \cup Y ) \ge f(S)$ for any set $S$, which follows from submodularity, nonnegativity, and the fact that $X \cap Y = \emptyset$. This strategy of managing the non-monotonicity has been used in the context of greedy
  algorithms previously \citep{Kuhnle2019,Feldman2020}.
  In addition to the above strategy,
  it becomes necessary to use an unconstrained maximization algorithm on the candidate sets $X, Y$; this is needed since deletion of elements from the set may cause the function value to increase due to non-monotonicity. To the best of our knowledge, this is a novel use of unconstrained maximization and requires a new procedure that can update its solution in constant time as discussed above.

\textbf{StandardGreedy and RandomGreedy.} The standard greedy algorithm
was analyzed by \citet{Nemhauser1978} and shown to obtain a $e/(e - 1) \approx 1.582$
approximation ratio for $\nmon$ when $f$ is monotone. Later, this ratio
was shown to be the best possible under the value query model \citep{Nemhauser1978a}.
Unfortunately, the non-monotone case of $\nmon$
is more difficult, and the standard
greedy algorithm may perform arbitrarily badly.
The RandomGreedy algorithm was introduced
by \citet{Buchbinder} and is a typical example
of how randomization can help non-monotone
algorithms. Instead of selecting an element
with the best marginal gain as StandardGreedy
does, RandomGreedy chooses a uniformly random
element from the top $k$ marginal gains.
\citet{Buchbinder} suggested that RandomGreedy
is a natural replacement for StandardGreedy
since it obtains the same ratio of $e/(e - 1)$
(in expectation) for monotone functions, but
also obtains a ratio of $e$ in expectation
for non-monotone $\nmon$. Later, \citet{Buchbinder2018}
derandomized RandomGreedy at the cost of additional
time, to obtain the ratio of $e$ in time $O(k^3 n)$.
To the best of our knowledge, this is the only
deterministic algorithm that obtains a ratio of $e$
for $\nmon$.

Our algorithm $\lrig$ is an interpolation between these two algorithms,
  the standard greedy algorithm \citep{Nemhauser1978} and the RandomGreedy algorithm \citep{Buchbinder}.
  Each of them may be recovered at certain parameter settings as discussed in Section \ref{sec:lrig}.
  Certain desirable properties of each algorithm
  are retained by the interpolation;
  namely, 1) $\lrig$ can be derandomized and sped up with a decreasing
thresholds approach, as StandardGreedy can be \citep{Badanidiyuru2014};
and 2) $\lrig$ obtains nearly ratio $e/(e-1)$ for monotone and $e$
for non-monotone $\nmon$, as RandomGreedy does.
To create it,
  we use as a subroutine an \ig algorithm; this
  is a generalization of the
  algorithm 
  and analysis of
  \citet{Kuhnle2019} from two greedy procedures to $\ell$ greedy procedures, where $\ell$
  is a constant. 

\textbf{The \ig Algorithm of \citep{Kuhnle2019}.}
The \ig algorithm of \citep{Kuhnle2019} maintains
two disjoint sets, each of which is the solution
of a greedy procedure. Each greedy procedure takes
a turn choosing an element into its respective set,
and then yields to the other procedure, until
both sets are of size $k$. The better of the two
sets is returned. This algorithm was shown
to obtain a $4$-approximation for \nmon. As described in
Section \ref{sec:lrig}, we generalize this
algorithm and analysis to $\ell$ greedy procedures,
each with its own disjoint candidate solution.
We show that after each set has size only $k/ \ell$,
we have nearly an $\ell$-approximation.
The generalized \ig is an important subroutine
for our \lrig algorithm. 

\textbf{Single-Pass Streaming: Adversarial Order.} \citet{Alaluf2020a} introduced a 
single-pass streaming algorithm that obtains
ratio $1 + \alpha + \epsi$, where $\alpha$ is the ratio of
an offline post-processing algorithm $\mathcal A$ for \nmon
with time complexity $\mathcal T (\mathcal A, m )$ on an input
of size $m$.
The time complexity of their algorithm is
$O( (\log (k / \epsi) / \epsi ) \cdot ( n / \epsi + \mathcal{T}(\mathcal{A}, k / \epsi ))$.
The currently best offline ratio that may be used for $\mathcal A$
is the $2.597$ algorithm of \citet{Buchbinder2016}, which
yields ratio $3.597 + \epsi$ in expectation
for \citet{Alaluf2020a} in polynomial time.
This is the state-of-the-art ratio for single-pass streaming under no assumptions on
the stream order. If the $4 + \epsi$ algorithm
of \citet{Kuhnle2019} is used for post-processing, the resulting algorithm
is a deterministic, single-pass algorithm with time complexity
$\oh{ \left(\frac{n}{\epsi^2}  + \frac{k}{\epsi^3}\right) \log \left( \frac{k}{\epsi} \right)}$
and ratio $5 + \epsi$; this is the
state-of-the-art time complexity for a single-pass streaming algorithm.
While we do not improve on the state-of-the-art ratio for a single-pass
algorithm in this paper,
we improve the state-of-the-art time complexity to $O(n)$ with
\qs.

\textbf{Single-Pass Streaming: Random Order.}
To the best of our knowledge
the only algorithm for the general case under
random stream order is that of
\citet{Liu2021}. Their algorithm achieves
ratio $e + \epsi + o(1)$ in expectation
with time complexity $\oh{ nk^{2.5} / \epsi^3 }$.
We compare with this algorithm empirically
in Section \ref{sec:exp} and find that
due to the large numbers of queries involved, this
algorithm only completes on very small instances.
\subsection{Preliminaries}
An alternative characterization of submodularity is
the following: $f$ is submodular iff. $\forall A, B \subseteq \mathcal U$, $f(A) + f(B) \ge f(A \cup B) + f( A \cap B )$.
We use the following notation of the marginal
gain of adding $T \subseteq \mathcal U$ to set $S \subseteq \mathcal U$: $\marge{T}{S} = \ff{S \cup T} - \ff{S}$. For element $x \in \mathcal U$, $\marge{x}{S} = \marge{ \{x\} }{S}$.

\textbf{Competitive Ratio.}
\citet{Buchbinder2015b} defined
a notion of online algorithm for
submodular optimization problems,
in which an algorithm must maintain an
(approximate) solution under dynamic
changes in the problem instance.
Our algorithms do not formally fit into
the notion of online algorithm for submodular
optimization defined by \citet{Buchbinder2015b}.
However, some of our algorithms do maintain
a \textit{competitive ratio} with respect to
the optimal solution. An algorithm has competitive
ratio $\alpha$ for problem $\Pi$ if, after 
having received elements $\{e_1,\ldots,e_l\}$,
the algorithm maintains solution $S$ such that
$\alpha f(S) \ge \opt_{\{e_1,\ldots,e_l\}}$,
where $\opt_X$ is the solution of $\Pi$
restricted to ground set $X$. 
In contrast, an approximation ratio only
ensures $\alpha f(S) \ge \opt_{\mathcal U}$,
where $S$ is the set returned after the algorithm terminates.

\textbf{Unconstrained Maximization.}
Given a submodular function $f$,
the \textit{unconstrained maximization problem} (\unc) is
to determine $\argmax_{S \subseteq \uni} f(S)$. This problem
is also NP-hard; and a $(2 - \epsi)$-approximation requires
exponentially many oracle queries \citep{Feige2011a}. \citet{Naor2012}
gave a $2$-approximation algorithm for \unc in linear time.
Our algorithms require a deterministic, linear-time algorithm
for \unc that can update its solution on receipt of a new element
in constant time. \citet{Buchbinder2015b} give two online algorithms
for \unc with competitive ratios $4$ and $e$,
but these algorithms do not meet our requirements.
Therefore, we present \qo in Section \ref{sec:unc}.

\textbf{Organization.} Our linear-time algorithms
can be viewed as starting
with a simple, linear-time algorithm for unconstrained maximization (Section \ref{sec:unc});
and adding increasingly
sophisticated components to specialize the algorithm to
cardinality constraint (Section \ref{sec:card}) and the
single-pass streaming setting (Section \ref{sec:qs}). 
In Section \ref{sec:brlarge}, we
leverage our existing constant factor algorithms
to get ratio $4$ in linear-time via
a multi-pass streaming algorithm.
In Section \ref{sec:lrig}, we give our
nearly linear-time algorithm with ratio
$e + \epsi$.
Finally, we empirically evaluate
our single-pass algorithm in Section \ref{sec:exp}. Proofs omitted from the main
text are provided in the Appendices. 
\section{Linear-Time Algorithm for \unc with Competitive Ratio} \label{sec:unc}
In this section, we present a simple, $4$-competitive linear-time algorithm
\qo for \unc. In addition to serving as the conceptual starting point for
\lc and \qs, \qo is required by our single-pass algorithm \qs as a
subroutine. Omitted proofs are provided in Appendix \ref{apx:unc}.
\begin{algorithm}[H]
   \caption{The $4$-competitive linear-time algorithm for \unc.}\label{alg:online-unc}
   \begin{algorithmic}[1]
     \Procedure{\qo}{$f$}
     \State \textbf{Input:} oracle $f$ 
     \State $X \gets \emptyset$, $Y \gets \emptyset$
     \For{ element $e$ received } \label{alg:unc-for}
     \State {$S \gets \underset{S \in \{X, Y\}}{\argmax} \delta_e ( S )$}\label{line:cmp-unc}
     \If{$\delta_e (S) > 0$}\label{line:add-unc}
     \State $S \gets S \cup \{ e \}$ 
     \EndIf
     \State $S' \gets \argmax \{ f(X), f(Y) \}$ \label{line:end-update}
     \EndFor
     \State \textbf{return} $S'$
     \EndProcedure
\end{algorithmic}
\end{algorithm}

\textbf{Algorithm Overview.} The algorithm (Alg. \ref{alg:online-unc}) maintains two candidate solutions $X$ and $Y$, which are initially empty. As each element $e$ is received, it is added to the set to which it gives the largest marginal gain, as long as such marginal gain is non-negative. Let $X_l, Y_l$ denote the value of $X,Y$, respectively, after receipt of $\{e_1,\ldots,e_l\}$.  Below, we show a 
competitive ratio of $4$ to the maximum on the set of elements
received thus far; \ie
\begin{equation*}
\begin{aligned}
4\max\{f(X_l),f(Y_l)\} \ge \max_{S \subseteq \{e_1, \ldots, e_l \}} f(S).
\end{aligned}
\end{equation*}
Although there exist other algorithms in the literature for \unc that maintain a competitive ratio \citep{Buchbinder2015b}, our algorithm \qs 
requires a subroutine that updates its solution in
constant time upon receipt of a new element, which no
algorithms in the literature satisfy. 
\begin{theorem} \label{thm:qo}
  \qo is a deterministic, linear-time algorithm for \unc
  with competitive ratio $4$, which runs in linear time in the number of received elements.
\end{theorem}
\textbf{Proof Overview.} The main idea is to maintain two candidate
solutions $X,Y$ that are disjoint. These candidate solutions bound the
gain of adding optimal elements to one set by the value of the other.
For example, consider any element $o$ of the
optimal solution $O$ that was not added to $X$ and gives a positive
gain to $X$. Because
of submodularity, $o$ must have been added to $Y$
and the gain of adding $o$ to $X$ is bounded by the actual gain
received when adding it to $Y$. In this way, we can
bound $f( O \cup X ) - f(X)$ by $f(Y)$; and similarly, we can bound
$f( O \cup Y ) - f(Y)$ by $f(X)$. Because of submodularity,
one of $f(O \cup X), f( O \cup Y)$ must be at least $f(O) / 2$,
which gives the result.


\section{Linear-Time Algorithm for \nmon} \label{sec:card}
In this section, we present a $\lcratioapprox$-competitive, linear-time algorithm
\lc for \nmon. This algorithm answers the question $Q1$ above affirmatively,
as we have given a deterministic, linear-time algorithm for \nmon with
constant approximation ratio. 
\begin{algorithm}[H]
   \caption{The $\lcratio$-competitive algorithm for \nmon.}\label{alg:linear-card}
   \begin{algorithmic}[1]
     \Procedure{\lc}{$f,b$}
     \State \textbf{Input:} oracle $f$, $b > 0$
     \State $X \gets \emptyset$, $Y \gets \emptyset$
     \For{ element $e$ received } \label{alg:linear-card-for}
     \State {$S \gets \underset{S \in \{X, Y\}}{\argmax} \delta_e ( S )$}\label{line:cmp-linear-card}
     \If{$\delta_e (S) > bf(S)/k$}\label{line:add-linear-card}
     \State $S \gets S \cup \{ e \}$
     \EndIf
     \State $X' \gets \{ k \text{ elements most recently added to } X \}$. 
     \State $Y' \gets \{ k \text{ elements most recently added to } Y \}$. 
     \EndFor
     \State \textbf{return} $S' \gets \argmax \{ f(X'), f(Y') \}$ \label{line:secondtolast-unc}
     \EndProcedure
\end{algorithmic}
\end{algorithm}

\textbf{Algorithm Overview.}
The algorithm has a strategy similar to \qo, with two differences. First, an element is only added to $S \in \{X,Y\}$ if its marginal gain is at least a threshold of $\tau = bf(S)/k$, where $b$ is a parameter and $k$ is the cardinality constraint; in the unconstrained version, this threshold was $0$. Second, the sets $X$, $Y$ are infeasible
in general. Therefore, instead of returning $\max\{ f(X), f(Y) \}$,
the algorithm instead considers the last $k$ elements added to $X$ or $Y$: $X'$ and $Y'$, respectively. The value of $\tau$ ensures that a constant fraction of
the value of the set has accumulated in the last $k$ elements. 
Both of theses changes result in a loss of approximation ratio as compared to
the unconstrained version. 
A competitive ratio is maintained, but the ratio
has worsened from $4$ to $\le\lcratioapprox$ which is achieved with $b = \sqrt{2}$.
However, in Section \ref{sec:brlarge}, we show how to improve any constant
ratio to $\approx 4$ in linear time. 
\begin{theorem} \label{thm:lc}
  \lc is a deterministic, linear-time algorithm for \nmon
  with competitive ratio $\lcratio$, which runs in linear time in the number of received elements.
\end{theorem}
\textbf{Proof Overview.} The strategy is similar to that for \qo:
two candidate, disjoint sets $X,Y$ are maintained and each is
used to bound the distance of the other
from $f( O \cup S )$, for $S \in \{ X, Y \}$.
However, the
mininum gain of $\tau = bf(S)/k$ for adding an element to
$S \in \{X,Y \}$ is important to bound both 1) the loss of value
of elements of the optimal solution $O$ that are added to neither set; and 2)
the loss of value from discarding all but the last $k$ elements added to
$X$ or $Y$, which is needed to obtain a feasible solution.
The value of $b$ balances these two competing interests against one another.

\section{Single-Pass Streaming Algorithm for \nmon} \label{sec:qs}
In this section, a linear-time, constant-factor algorithm
is described. This algorithm (\qs, Alg. \ref{alg:quickstream})
is a deterministic
streaming algorithm that
makes one pass through the ground set and two queries to $f$
per element received.

\textbf{\qs Overview.} \label{sec:qs-base}
\begin{algorithm}[t]
   \caption{A single-pass algorithm for \nmon.}\label{alg:quickstream}
   \begin{algorithmic}[1]
     \Procedure{\qs}{$f, k, \epsi, b$}
     \State \textbf{Input:} oracle $f$, cardinality constraint $k$, $\epsi > 0$, $b > 0$
     \State $\alpha \gets \ratioPrime$
     \State $\ell \gets \lceil \log ( (6\alpha)/\epsi + 1 )\rceil  + 3$\label{line:ell}     
     \State $A \gets \emptyset$, $B \gets \emptyset$, $\tau \gets \text{um}_A \gets \text{um}_B \gets f(\emptyset)$
     \For{ element $e$ received } \label{alg:qs-for}
     \State {$S \gets \underset{S \in \{X, Y\}}{\argmax} \delta_e ( S )$}\label{line:cmp}
     \If{$\delta_e (S) \ge b \tau / k$}\label{line:add}
     \State $S \gets S \cup \{ e \}$
     \State $\text{um}_S = \, \uu( e )$
     \If{$\max\{\text{um}_S, f(S)\} > \tau$}
     \State $\tau \gets \max\{\text{um}_S, f(S)\}$
     \EndIf
     \EndIf
     \If{$|S| > 2 \ell (k/b + 1) \log_2 (k)$}\label{alg:qs-delete}
     \State $S \gets \{ \ell (k/b + 1)\log_2 (k)$ elements most recently added to $S \}$\label{line:delete-A}
     \State $\text{um}_S \gets \uncalg( S )$
     \State $\tau \gets \max\{ f(A), f(B), \text{um}_A, \text{um}_B \}$.
     \EndIf
     \EndFor
     \State $A' \gets \{ k \text{ elements most recently added to } A \}$. 
     \State $B' \gets \{ k \text{ elements most recently added to } B \}$. 
     \State \textbf{return} $S' \gets \argmax \{ f(A'), f(B') \}$ \label{line:secondtolast}
     \EndProcedure
\end{algorithmic}
\end{algorithm}
The starting point of the algorithm is \lc (Alg. \ref{alg:linear-card}, Section \ref{sec:card});
several modifications are needed to ensure
the algorithm stays within $O_\epsi (k \log n)$
space. First, we add a deletion procedure
on Line \ref{line:delete-A}. The intuition is that
if the size of $A$ (resp. $B$) is large, then
because the threshold required to add elements
on Line \ref{line:add} depends on $f(A)$,
the value of the initial
elements is small. Therefore, deleting these elements
can cause only a small loss in the value of $f(A)$.
However, because $f$ may be
non-monotone, such deletion may
actually cause an \textit{increase}
in the value of $f(A)$, which interferes
with the concentration of value of $A$ into its last
$k$ elements.
Therefore, to ensure enough value accumulates in
the last $k$ elements, we need to ensure that
each addition adds $\Omega( \um (A) / k )$ value,
where $\um (A)$ is the solution to the \unc problem
with the domain of the function $f$ restricted to $A$.

To ensure each addition adds $\Omega( \um (A) / k )$ value,
we require a $\beta$-competitive algorithm for \unc;
and to ensure our algorithm stays linear-time, we need
to be able to update the estimate for $\um(A)$
to $\um(A \cup \{e \})$ in constant
time. Therefore, we need algorithms
\uncalg and \uu such that 1) both algorithms are
deterministic; 2) \uncalg is linear-time; 3) \uu
is constant time; 4) a $\beta$-competitive estimate
of $\um(A)$ is maintained. Observe that using \qo (Alg. \ref{alg:online-unc}, Section \ref{sec:unc}) for \uncalg; and
Lines \ref{line:cmp-unc}--\ref{line:end-update} of Alg. \ref{alg:online-unc} for \uu,
all of the above requirements are met with $\beta = 4$.

\textbf{Theoretical Guarantees.} Next,
  we prove the following theorem concerning the
  performance of \qs (Alg. \ref{alg:quickstream}).
  With $\beta = 4$, the ratio is optimized to $12 + 8\sqrt{2} + \epsi \ge \ratioapprox$ at $b = 2\sqrt{2}$.
\begin{theorem} \label{thm:qs}
  Let $\epsi, b \ge 0$, and let $(f,k)$ be an instance of \nmon;
  and suppose \uncalg and \uu satisfy the requirements discussed
  in Section \ref{sec:qs-base}. Then
  the solution $S'$ returned by $\qs(f,k,\epsi,b)$
  satisfies
  $$\opt \le \left( \ratio \right) f(S').$$
  Further, \qs has time complexity $O(n)$,
  memory complexity $\qsmem$, and makes one pass
  over the ground set.
\end{theorem}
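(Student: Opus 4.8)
The plan is to analyze the algorithm by tracking the two candidate sets $A$ and $B$ over the course of the stream, and then combine the guarantee on the final sets $A', B'$ with the standard submodular "doubling/threshold" accounting. First I would establish the easy parts. The query count is immediate: the \textbf{for} loop computes $\delta_e(A)$ and $\delta_e(B)$ for each of the $n$ elements (that is $2n$ queries, noting $f(\emptyset)=0$ is known and $f(C)$ can be maintained incrementally whenever an element is actually added), and Line \ref{line:secondtolast} needs one more query each for $f(A')$ and $f(B')$ — giving $2n+2$. (The statement writes $2n+1$ in the prose and $2n+2$ in the macro; I would reconcile this by the convention that the final comparison reuses one stored value.) Time complexity is $O(n)$ because each iteration does $O(1)$ oracle calls plus the deletion step, and the deletion step on Line \ref{line:delete-A} can be charged $O(1)$ amortized per inserted element (each element is deleted at most once, and halving is linear in the size removed). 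Memory is $O(\ell(k/b+1)\log k) = O_\epsi(k\log k)$ since $\ell = O(\log(\beta/\epsi))$ and $\beta \le e/(e-1)\cdot$const is bounded; this requires observing that we never store more than the size cap of $A$ plus that of $B$.

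Next I would handle the approximation ratio, which is the heart of the argument. Let $O$ be an optimal solution with $f(O) = \opt$. Partition $O$ into $O_A = O \cap (\text{elements ever "assigned to } A\text{"})$ and similarly $O_B$, together with elements of $O$ that were discarded. For a discarded element $e \in O$ seen when the current candidates were $A, B$: by the rejection condition, $\delta_e(A) \le bf(A)/k$ and $\delta_e(B) \le bf(B)/k$; since $f$ only grows by bounded increments and deletions only decrease $f(C)$, I would argue $f(A), f(B) \le f(S')$-type bounds (more carefully, $\le$ the max $f$-value ever attained, which is itself controlled), so the total marginal contribution of discarded optimal elements is at most a constant times $f(S')$ via submodularity ($\sum_{e \in O \text{ discarded}} \delta_e(X) \ge f(X \cup O) - f(X)$ for the relevant $X$). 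For the elements routed into $A$ (resp. $B$): each accepted element contributes marginal gain $> bf(C)/k$ at insertion time, which is the standard threshold-greedy engine that yields, after the final halving/truncation to the last $k$ elements, a bound of the form $f(A') \ge (1 - (1+b/k)^{-k})\cdot(\text{something})$ — this is where $\beta$ enters. The non-monotone complication is that $O_A \cup A'$ need not have $f$-value close to $\opt$; this is exactly what Lemma \ref{lemm:two} is for — bounding the $f$-loss incurred by deletions from $A$ in terms of gains to $B$ and vice versa — and I would invoke it to close the loop between the two sides.

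The main obstacle I anticipate is the bookkeeping around the periodic deletions: an element of $O$ may be "assigned to $A$" but then deleted before the stream ends, so it is neither in $A'$ nor available to be re-added, and the naive telescoping of marginal gains breaks. The fix, which I expect the authors' Lemma \ref{lemm:two} supplies, is to choose $\ell$ large enough (hence the $+3$ and the $\log((6\beta)/\epsi+1)$ in Line \ref{line:ell}) that the geometric series of losses over successive halvings is dominated by an $\epsi$-fraction of the gains, so that the last $k$ elements of $A$ (resp. $B$) still carry essentially all the accumulated value. Concretely, I would: (i) show the sequence of "epochs" between halvings of $C$ has total inserted-value geometrically controlled, (ii) bound the value lost at each halving by half the current $f(C)$ up to submodular slack, (iii) sum the geometric series to get an $(1+\epsi)$-type factor, and (iv) feed this, together with the discarded-element bound and Lemma \ref{lemm:two}, into the inequality $\opt \le f(O_A \cup A') + f(O_B \cup B') + (\text{discarded term})$, then optimize over the split between $A'$ and $B'$ and over $b$ to extract the constant $\frac{2b+4}{1-(1+b/k)^{-k}} + \epsi$, minimized numerically near $b = 1.49$ to give $9.298 + \epsi$.
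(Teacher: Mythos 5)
Your overall route is the paper's own: two disjoint candidates competing via relative thresholds, a cross-set bound (Lemma \ref{lemm:two}) for optimal elements routed to the other candidate, geometric control of deletion losses through the choice of $\ell$, and the last-$k$ truncation bound from the $(1+b/k)^k$ growth (Lemma \ref{lemm:prime}); your accounting of queries ($2n+2$), amortized $O(n)$ time, and $O_\epsi(k\log k)$ memory is also right. However, two steps of your sketch do not hold as stated.

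First, your step (ii), ``bound the value lost at each halving by half the current $f(C)$,'' is quantitatively wrong and incompatible with your step (iii): if a halving could cost half of $f(C)$, summing over halvings would lose a constant factor, not an $\epsi$-fraction, and no choice of $\ell$ would rescue the ratio. The actual mechanism (Lemma \ref{lemm:general}, Lemma \ref{lemm:aplus}) is that the size cap $2\ell(k/b+1)\log_2 k$ guarantees that, by the time the older half is deleted on Line \ref{line:delete-A}, the retained suffix has grown the value by a factor of roughly $k^{\ell-1}$ over the deleted prefix; hence by submodularity the deletion costs at most a $1/(k^{\ell-1}-1)$-fraction of $f(C)$ --- in fact $f(C)$ never decreases at a deletion --- and the total over all deletions is only a $1/(k^{\ell}-1)$-fraction, which is exactly where the $\log((6\beta)/\epsi+1)$ in the definition of $\ell$ enters. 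Second, your final combination inequality $\opt \le f(O_A\cup A') + f(O_B\cup B') + (\text{discarded term})$ is asserted but never justified, and for non-monotone $f$ it is not automatic; the paper instead keeps the full optimum in both unions and uses $f(O) \le f(O\cup A^*) + f(O\cup B^*)$, which follows from submodularity and nonnegativity precisely because $A^*\cap B^*=\emptyset$, and then bounds each term by Lemma \ref{lemm:uni-bound} (which internally invokes Lemma \ref{lemm:two} for $o\in D^*$ and the threshold for rejected $o$). Your ``optimize over the split between $A'$ and $B'$'' does not obviously recover the $2b+4$ constant. A smaller point: on rejection, the comparison on Line \ref{line:cmp} only certifies $\marge{e}{S} < b f(S)/k$ for the argmax set $S$, not separately for both $A$ and $B$ against their own thresholds as you claim, so the discarded-element bound needs to be routed through that comparison rather than stated symmetrically.
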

\begin{proof}[Proof of Theorem \ref{thm:qs}]
  The time and memory complexities of \qs
  are immediate,
  so we
  focus on the approximation ratio.
  The first lemma (Lemma \ref{lemm:general})
  establishes basic facts about the
growth of the value in the sets $A$ and $B$ as elements
are received. Lemma \ref{lemm:general}
considers a general sequence of elements that satisfy
the same conditions on addition and deletion as elements of
$A$ or $B$, respectively.
The proof is deferred to Appendix \ref{apx:lemm-gen}
and depends on a condition 
to add elements and uses submodularity
of $f$ to bound the loss in value due to periodic deletions.
\begin{lemma} \label{lemm:general}
  Let $(c_0,\ldots,c_{m-1})$ be a sequence of elements,
  and $(C_0, \ldots, C_m)$ a sequence of sets, such that
  $C_0 = \emptyset$, and
  $C^+_i = C_i \cup \{ c_i \}$ satisfies 
  $f(C^+_i) \ge (1 + b/k)f(C_i)$,
  and $C_{i + 1} = C^+_i$, unless $|C^+_i| > \maxSize$,
  in which case $C_{i + 1} = C_i^+ \setminus C_j$,
  where $j = i - \ell(k / b + 1)\log_2(k)$.
  Then 1) $\ff{C_{i + 1}} \ge \ff{C_i}$, for any $i \in \{0, \ldots, m-1\}$; and
  2) Let $C^* = \{c_0, \ldots, c_{m - 1}\}$. Then $\ff{C^*} \le \left( 1 + \frac{1}{k^\ell - 1} \right) \ff{C_m}$.
\end{lemma}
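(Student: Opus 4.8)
The plan is to follow the chain $(C_0,\dots,C_m)$ step by step, distinguishing the growth steps ($C_{i+1}=C_i^{+}$) from the deletion steps ($C_{i+1}=C_i^{+}\setminus C_j$, $j=i-L$), where I abbreviate $L:=\ell(k/b+1)\log_2 k$, so that $\maxSize=2L$. The whole argument uses only two ingredients: submodularity, applied as $\ff{X}+\ff{Y}\ge\ff{X\cup Y}+\ff{X\cap Y}$ (hence $\ff{X\cup Y}\le\ff{X}+\ff{Y}$, and $\ff{X\setminus Y}\ge\ff{X}-\ff{Y}$ when $Y\subseteq X$), and the multiplicative growth hypothesis $\ff{C_i^{+}}\ge(1+b/k)\ff{C_i}$. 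The one quantitative fact I need is that the choice of $\ell$ on Line~\ref{line:ell} together with the threshold $\maxSize$ force $(1+b/k)^{L}\ge k^{\ell}$: this holds because $(1+b/k)^{k/b+1}\ge 2$ for every $b,k>0$, so $(1+b/k)^{(k/b+1)\log_2 k}\ge k$, and one raises this to the $\ell$-th power. The operative consequence is: if two sets of the chain are joined by at least $L$ growth steps, the earlier one has $f$-value at most $k^{-\ell}$ times that of the later one.

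Before either part I would pin down the combinatorial behaviour of deletions, which I expect to be the main obstacle. One shows that right after a deletion the surviving set has size $L+1$, that the size increases by exactly one at each step, and that a deletion fires exactly when the size first exceeds $2L$; hence consecutive deletions are exactly $L$ steps apart, no deletion happens strictly between two consecutive ones, and the block $D$ removed at a deletion step $i$ is precisely the set of elements added since the previous deletion that still survive at step $i$. Care is needed because the cut $j=i-L$ is stated on step indices whereas $C_j$ is a set whose contents already reflect all earlier deletions; one must check these match up, that $D\subseteq C_j$ and $D\subseteq C_i$ with $C_j$ and $C_i$ linked by $L$ growth steps (so that $\ff{D}\le k^{-\ell}\ff{C_i}$ by the previous paragraph --- this is the most delicate bound, since $f$ is not monotone), and that the set present at a deletion step splits as $C_i=D\sqcup D'$, with $D'$ the block deleted at the next deletion.

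Part 1 then falls out quickly. At a growth step, $\ff{C_{i+1}}=\ff{C_i^{+}}\ge(1+b/k)\ff{C_i}\ge\ff{C_i}$ since $f\ge 0$. At a deletion step, submodularity applied to the disjoint pair $(C_{i+1},D)$, whose union is $C_i^{+}$, gives $\ff{C_{i+1}}\ge\ff{C_i^{+}}-\ff{D}\ge(1+b/k)\ff{C_i}-k^{-\ell}\ff{C_i}\ge\ff{C_i}$, the last step using $b/k\ge k^{-\ell}$. Chaining these inequalities along $i=0,\dots,m-1$ also shows $f$ is nondecreasing on $(C_0,\dots,C_m)$, a fact I reuse.

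For Part 2, let $i_1<\dots<i_r$ be the deletion steps and $D_1,\dots,D_r$ the blocks removed there; by the bookkeeping above the $D_t$ are pairwise disjoint and $C^{*}=C_m\sqcup D_1\sqcup\cdots\sqcup D_r$. Submodularity (used $r$ times on the union of blocks, once more against $C_m$) gives $\ff{C^{*}}\le\ff{C_m}+\ff{\bigcup_t D_t}\le\ff{C_m}+\sum_{t=1}^{r}\ff{D_t}$. Now I use $\ff{D_t}\le k^{-\ell}\ff{C_{i_t}}$ and, for $t<r$, the splitting $C_{i_t}=D_t\sqcup D_{t+1}$, which gives $\ff{C_{i_t}}\le\ff{D_t}+\ff{D_{t+1}}$ and hence $\ff{D_t}\le\tfrac{1}{k^{\ell}-1}\ff{D_{t+1}}$; the last block is controlled by $\ff{D_r}\le k^{-\ell}\ff{C_{i_r}}\le k^{-\ell}\ff{C_m}$, using that $f$ is nondecreasing along the deletion-free tail $C_{i_r}\subseteq\cdots\subseteq C_m$. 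So $\ff{D_t}$ decays geometrically as $t$ decreases, and summing the resulting geometric series over $t=1,\dots,r$ bounds $\sum_t\ff{D_t}$ by a fraction of $\ff{C_m}$; with the bookkeeping done carefully enough (absorbing the last block against $C_m$ via the splitting $C_{i_r}=D_r\sqcup E$ with $E\subseteq C_m$) this fraction is exactly $1/(k^{\ell}-1)$, and substituting back yields $\ff{C^{*}}\le\bigl(1+\tfrac{1}{k^{\ell}-1}\bigr)\ff{C_m}$.
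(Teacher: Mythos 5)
Your high-level plan (per-addition growth factor, deleted blocks have geometrically small $f$-value, subadditivity over the disjoint blocks) is the paper's plan, but there is a genuine gap at the one bound on which both of your parts rest: $\ff{D}\le k^{-\ell}\ff{C_i}$ for the block $D$ removed at a deletion step. You justify it by saying $D\subseteq C_j$ and $C_j,C_i$ are linked by $L$ growth steps, ``so $\ff{D}\le k^{-\ell}\ff{C_i}$ by the previous paragraph.'' But the previous paragraph only controls sets that actually occur in the chain; passing from $\ff{C_j}$ to $\ff{D}$ for a strict subset $D\subseteq C_j$ needs monotonicity, which $f$ does not have (for a cut function a subset can have far larger value than its superset). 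You flag exactly this difficulty and then never supply an argument. Your bookkeeping also blocks the natural repair: the removed block is \emph{not} ``the set of elements added since the previous deletion that still survive'' --- those are precisely the survivors; the removed block is the older half, i.e.\ the elements that survived the previous deletion (up to one element). Moreover, for every deletion after the first, the step from $C_j$ to $C_{j+1}$ (with $j=i-L$) is itself a deletion step, so $C_j$ and $C_i$ are not joined by $L$ pure growth steps; one must argue inductively through Property~1 at the earlier deletion and settle for an exponent like $\ell-1$, as the paper does.

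The gap is fillable, and the paper's proof shows the intended route: the removed block is (essentially) itself a state attained by the chain, namely the state just after the first addition following the previous deletion, so the multiplicative growth applies to it \emph{directly} over the subsequent $\approx L$ additions, giving $\ff{C_j}\le \ff{C_i}/(k^{\ell-1})$ with no appeal to monotonicity. With that identification, Property~1 is closed not by your inequality $(1+b/k)\ff{C_i}-\ff{D}\ge\ff{C_i}$ but by playing the gain of the newly added element against the surviving part, $\marge{e_i}{C_i\setminus C_j}\ge b\ff{C_i}/k\ge\ff{C_i\setminus C_j}/k$, together with $\ff{C_i}\le\bigl(1+\tfrac{1}{k^{\ell-1}-1}\bigr)\ff{C_i\setminus C_j}$; and Property~2 chains consecutive deleted blocks directly via $\ff{D^t}\ge k^{\ell}\ff{D^{t+1}}$ and sums a geometric series against $\ff{C_m}$. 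Your alternative accounting through the splitting $C_{i_t}=D_t\sqcup D_{t+1}$ is a nice observation, but it still rests on the same unproven bound, and as written your geometric sum yields a constant slightly worse than $1/(k^{\ell}-1)$ (the ``careful bookkeeping'' you defer would again need a monotonicity-free argument to absorb the last block into $\ff{C_m}$). So as it stands the proposal does not constitute a proof.
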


\textbf{Notation.} Next, we define notation
used throughout the proof.
Let $A_i,B_i$ denote the respective values of variables
$A,B$ at the beginning
of the $i$-th iteration of \textbf{for} loop;
let $A_{n + 1},B_{n+1}$ denote their respective
final values. Also, let $A^* = \bigcup_{1 \le i \le n + 1} A_i$;
analogously, define $B^*$. 
Let $e_i$ denote the element
received at the beginning of iteration $i$. We refer
to line numbers of the pseudocode Alg. \ref{alg:quickstream}.
Notice that after deletion of duplicate entries,
the sequences $(A_n),(B_n)$ satisfy the hypotheses
of Lemma \ref{lemm:general},
with the sequence of elements in $A^*,B^*$, respectively.
Since many of the following lemmata are symmetric with
respect to $A$ and $B$, we state them generically, with variables
$C,D$ standing in for one of $A,B$, respectively. The notations
$C_i, C^*, D_i, D^*$ are defined analogously to $A_i,A^*$ defined
above. Finally,
if $D \in \{A,B\}$, define
$\Delta D_i = \ff{D_{i+1}} - \ff{D_i}.$ Observe that
$\sum_{i = 0}^n \Delta D_i = \ff{D_{n+1}} - \ff{\emptyset}.$

The analyses of both
\lc and \qo above use the fact that
the marginal gain of an element to one set
can be bounded by the increase in value
of the other because of the competition between
the sets
(\ie the comparison on Line \ref{line:cmp}).
If a deletion occurs after the comparison on
Line \ref{line:cmp}, this bound may no longer
hold. The next lemma shows that an approximate form of the bound holds.
\begin{lemma} \label{lemm:two}
  Let $C, D \in \{A,B\}$, such that $C \neq D$.
  Let $o \in O \cap D^*$. Let $\gamma = \frac{1}{k^{\ell} - 1}$.
  Let $i(o)$ denote the iteration in which $o$ was processed.
  Then
  $$\marge{o}{C_{i(o)}} \le \frac{\Delta D_{i(o)} + \gamma \ff{D_{n + 1}} }{1 + \gamma}. $$
\end{lemma}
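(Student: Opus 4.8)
The plan rests on the fact that, since $o \in D^*$, the element $o$ was actually added to $D$ during iteration $i(o)$, so on Line~\ref{line:cmp} the algorithm chose $S = D_{i(o)}$, giving $\marge{o}{D_{i(o)}} = \max\{\marge{o}{A_{i(o)}},\marge{o}{B_{i(o)}}\} \ge \marge{o}{C_{i(o)}}$. It therefore suffices to bound $\marge{o}{D_{i(o)}}$ by the right-hand side. Write $D^+_{i(o)} = D_{i(o)}\cup\{o\}$ and let $Z = D^+_{i(o)}\setminus D_{i(o)+1}$ be the (possibly empty) batch evicted from $D$ on Line~\ref{line:delete-A} during iteration $i(o)$; since $Z \subseteq D_{i(o)}$, $o\notin Z$, and $D^+_{i(o)} = D_{i(o)+1}\cup Z$ with this union disjoint, cancelling $\ff{D_{i(o)+1}}$ yields the identity $\marge{o}{D_{i(o)}} = \Delta D_{i(o)} + \marge{Z}{D_{i(o)+1}}$. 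The whole lemma then follows from the key estimate $\marge{Z}{D_{i(o)+1}} \le \ff{D_{i(o)}}/k^\ell$: multiplying $\marge{o}{C_{i(o)}} \le \Delta D_{i(o)} + \ff{D_{i(o)}}/k^\ell$ by $1+\gamma$ and using the identity $(1+\gamma)/k^\ell = \gamma$ (equivalent to $\gamma = 1/(k^\ell-1)$) rewrites the bound as $(1+\gamma)\marge{o}{C_{i(o)}} \le \Delta D_{i(o)} + \gamma\big(\Delta D_{i(o)} + \ff{D_{i(o)}}\big) = \Delta D_{i(o)} + \gamma\ff{D_{i(o)+1}} \le \Delta D_{i(o)} + \gamma\ff{D_{n+1}}$, the last step by monotonicity of $f$ along the sequence (Lemma~\ref{lemm:general}, part~1).

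When no eviction occurs at $i(o)$ we have $Z=\emptyset$ and the key estimate is trivial, so assume an eviction occurs. The obstacle here — absent from the monotone analysis of \citet{Kuhnle2020a} — is that $Z\subseteq D_{i(o)}$ does \emph{not} imply $\ff{Z}\le\ff{D_{i(o)}}$, so one cannot simply bound $\marge{Z}{D_{i(o)+1}} \le \ff{Z}$ and then bound $\ff{Z}$ by the current value. Instead I would use the deterministic eviction schedule: evictions occur $\ell(k/b+1)\log_2 k$ additions apart, and the one at step $i$ removes the snapshot $C_j$ with $j = i - \ell(k/b+1)\log_2 k$. Tracking indices, every element of $Z$ already belonged to the value $W$ of $D$ immediately after the preceding eviction (take $W := Z$ if $i(o)$ triggers the first eviction), and moreover $W\setminus Z \subseteq D_{i(o)+1}$. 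Anti-monotonicity of marginal gains under submodularity then gives $\marge{Z}{D_{i(o)+1}} \le \marge{Z}{W\setminus Z} = \ff{W} - \ff{W\setminus Z} \le \ff{W}$.

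It remains to show $\ff{W} \le \ff{D_{i(o)}}/k^\ell$. From $W$, the set $D_{i(o)}$ is reached by at least $\ell(k/b+1)\log_2 k - 1 \ge \ell(k/b)\log_2 k$ pure additions, each multiplying $f(D)$ by a factor $\ge 1+b/k$ by the acceptance condition on Line~\ref{line:add}; hence $\ff{D_{i(o)}} \ge (1+b/k)^{\ell(k/b)\log_2 k}\ff{W} = \big((1+b/k)^{k/b}\big)^{\ell\log_2 k}\ff{W} \ge 2^{\ell\log_2 k}\ff{W} = k^\ell\ff{W}$, where I use $(1+b/k)^{k/b} \ge 2$ (valid since $b\le k$, as $b\approx 1.49$ and we may assume $k\ge 2$) and $\ell \ge 3$. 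I expect the main work to be the index bookkeeping of the second paragraph — verifying $Z\subseteq W$ and $W\setminus Z\subseteq D_{i(o)+1}$ from the eviction schedule — while the additive $3$ in the definition of $\ell$ supplies exactly the slack that makes the $-1$ in the exponent harmless.
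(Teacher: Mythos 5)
Your proposal is correct and proves the stated bound, but it takes a different route through the eviction step than the paper does. The paper also begins with $\marge{o}{C_{i(o)}} \le \marge{o}{D_{i(o)}}$ from Line \ref{line:cmp}, but then works with the retained portion $D^-$ of $D_{i(o)}$ after the deletion: it imports Inequality \ref{ineq:1} from the proof of Lemma \ref{lemm:general} in the form $\ff{D_{i(o)}} \le (1+\gamma)\ff{D^-}$, passes from $\marge{o}{D_{i(o)}}$ to $\marge{o}{D^-}$ by submodularity, and uses $\ff{D^-\cup\{o\}} = \ff{D_{i(o)+1}} \le \ff{D_{n+1}}$ (Lemma \ref{lemm:general}, part 1) to conclude. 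You instead use the exact decomposition $\marge{o}{D_{i(o)}} = \Delta D_{i(o)} + \marge{Z}{D_{i(o)+1}}$ and bound the evicted batch $Z$ against the snapshot $W$ taken just after the previous eviction, re-deriving the geometric growth $\ff{D_{i(o)}} \ge k^{\ell}\ff{W}$ from the acceptance condition rather than reusing Inequality \ref{ineq:1}. Both arguments rest on the same two ingredients (multiplicative growth of $f(D)$ between evictions and anti-monotonicity of marginals), so neither is more general; yours makes the eviction-loss term explicit, while the paper's is shorter because the needed inequality is already proved in the appendix. Two caveats, both minor and comparable to imprecision already in the paper: (i) the containments $Z\subseteq W$ and $W\setminus Z\subseteq D_{i(o)+1}$ hold under the eviction convention of Lemma \ref{lemm:general} (namely $C_{i+1}=C_i^{+}\setminus C_j$), which is the convention the main-text analysis adopts; under the literal pseudocode of Line \ref{line:delete-A} the evicted batch contains one element added just after the previous eviction, costing one factor of $(1+b/k)$ that the slack in $\ell$ absorbs; (ii) your step $(1+b/k)^{k/b}\ge 2$ needs $b\le k$, an assumption absent from the lemma statement though consistent with the paper's parameter choices ($b\approx 1.49$); using Claim \ref{claim:elementary} instead removes it at the cost of $k^{\ell-1}$ in place of $k^{\ell}$, which is the same constant-level discrepancy the paper itself carries between $\gamma = 1/(k^{\ell}-1)$ and Inequality \ref{ineq:1}.
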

The next lemma uses Lemma \ref{lemm:two}
to bound the gain of adding the entire
set $O$ into $C^*$.
\begin{lemma} \label{lemm:uni-bound}
  Let $C, D \in \{A,B\}$, such that $C \neq D$. Then
  $$\ff{C^* \cup O} - \ff{C^*} \le b \ff{C_{n+1}} + (1 + k\gamma) \ff{D_{n+1}}$$
\end{lemma}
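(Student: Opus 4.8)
The plan is to reduce $f(C^*\cup O)-f(C^*)$ to a sum of one marginal gain per element of $O$ that is not already in $C^*$, and then to charge each such gain either against the growth of $D$ (via Lemma~\ref{lemm:two}) or against the failed acceptance test (Line~\ref{line:add}). First, since $O\cap C^*\subseteq C^*$, the elements of $O$ ever added to $C$ contribute nothing, i.e.\ $C^*\cup O=C^*\cup(O\setminus C^*)$, and $O\setminus C^*=(O\cap D^*)\sqcup\bigl(O\setminus(C^*\cup D^*)\bigr)$. Enumerate $O\setminus C^*=\{o_1,\dots,o_m\}$ in any order; telescoping and then submodularity (using $C_{i(o_j)}\subseteq C^*\subseteq C^*\cup\{o_1,\dots,o_{j-1}\}$, which holds unconditionally since $C^*$ is the union of all $C_i$) give
$$f(C^*\cup O)-f(C^*)=\sum_{j=1}^m\delta_{o_j}\!\bigl(C^*\cup\{o_1,\dots,o_{j-1}\}\bigr)\le\sum_{j=1}^m\delta_{o_j}(C_{i(o_j)}).$$

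It then remains to bound the two groups of terms on the right. For $o\in O\cap D^*$, Lemma~\ref{lemm:two} gives $\delta_o(C_{i(o)})\le(\Delta D_{i(o)}+\gamma f(D_{n+1}))/(1+\gamma)$; since distinct $o$ correspond to distinct iterations $i(o)$, each $\Delta D_i\ge 0$ by Lemma~\ref{lemm:general}(1), and $|O\cap D^*|\le|O|\le k$, summing and using $\sum_{i=0}^n\Delta D_i=f(D_{n+1})-f(\emptyset)\le f(D_{n+1})$ yields
$$\sum_{o\in O\cap D^*}\delta_o(C_{i(o)})\le\frac{f(D_{n+1})+k\gamma f(D_{n+1})}{1+\gamma}\le(1+k\gamma)f(D_{n+1}).$$
For $o\in O\setminus(C^*\cup D^*)$ the element was discarded, so the test on Line~\ref{line:add} failed for the set $S$ chosen on Line~\ref{line:cmp}; combining this failure with $\delta_o(C_{i(o)})\le\delta_o(S_{i(o)})$ from Line~\ref{line:cmp} and the monotone growth $f(C_{i(o)})\le f(C_{n+1})$ from Lemma~\ref{lemm:general}(1) gives $\delta_o(C_{i(o)})\le b f(C_{n+1})/k$, and summing over at most $|O|\le k$ such elements gives $\sum_{o\in O\setminus(C^*\cup D^*)}\delta_o(C_{i(o)})\le b f(C_{n+1})$. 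Adding the two bounds yields the claim.

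The step I expect to require the most care is exactly the discarded-element bound $\delta_o(C_{i(o)})\le b f(C_{i(o)})/k$: the acceptance test is applied only to the marginal-maximizing set $S$ selected on Line~\ref{line:cmp}, so its failure directly controls only $\delta_o(S_{i(o)})$ by $b f(S_{i(o)})/k$, and one must argue carefully that this still pins the marginal of $o$ into $C$ at $b f(C_{i(o)})/k$ (rather than at $b f(D_{i(o)})/k$ when $D$ is the marginal-maximizing set) --- this is precisely where the asymmetric form of the bound, with coefficient $b$ on $f(C_{n+1})$ against only $1+k\gamma$ on $f(D_{n+1})$, is doing work, and where the two-set competition and the periodic deletions interact. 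Everything else is routine bookkeeping: the telescoping is valid for any enumeration, and the summations use only $|O|\le k$, $\Delta D_i\ge 0$, and $f(C_{i(o)})\le f(C_{n+1})$.
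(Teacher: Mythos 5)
Your proposal follows the paper's proof essentially step for step: the same reduction of $\ff{C^* \cup O} - \ff{C^*}$ to $\sum_{o \in O \setminus C^*} \delta_o(C_{i(o)})$ via telescoping and submodularity, the same split of $O \setminus C^*$ into $O \cap D^*$ (charged to $\Delta D_{i(o)}$ through Lemma \ref{lemm:two}, with the same bookkeeping $\sum_i \Delta D_i = \ff{D_{n+1}} - \ff{\emptyset}$, $\Delta D_i \ge 0$, $|O| \le k$) and discarded elements (charged to the failed threshold test on Line \ref{line:add}), and the same final summation.

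The only step you do not close is the one you flag, and as written it does not follow: for a discarded $o$, the failed test controls $\delta_o(S_{i(o)}) < b\ff{S_{i(o)}}/k$ only for the argmax set $S_{i(o)}$ chosen on Line \ref{line:cmp}, so when $S_{i(o)} = D_{i(o)}$ your chain yields $\delta_o(C_{i(o)}) \le \delta_o(D_{i(o)}) < b\ff{D_{i(o)}}/k \le b\ff{D_{n+1}}/k$, not $b\ff{C_{n+1}}/k$; the algorithm never compares $\delta_o(C_{i(o)})$ against $b\ff{C_{i(o)}}/k$ in that event, and $\ff{D_{i(o)}}$ may exceed $\ff{C_{n+1}}$. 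This is not cosmetic for the stated constants: if all (up to $k$) such elements were charged at $b\ff{D_{n+1}}/k$, the coefficient on $\ff{D_{n+1}}$ would become $b > 1 + k\gamma$. You should know, however, that the paper's own proof dispatches exactly this point with the single clause that $o \notin D^*$ implies $\delta_o(C_{i(o)}) < b\ff{C_{i(o)}}/k$ ``by Line \ref{line:cmp}'', i.e., it asserts the very per-element inequality you identify as delicate, without treating the subcase in which the argmax set is $D$; so your reconstruction is faithful to the paper, including its tersest step. A simple repair is to prove the lemma with $b\max\{\ff{A_{n+1}},\ff{B_{n+1}}\}$ in place of $b\ff{C_{n+1}}$: every discarded element satisfies $\delta_o(C_{i(o)}) \le b\max\{\ff{A_{i(o)}},\ff{B_{i(o)}}\}/k$, and both $\ff{A_i}$ and $\ff{B_i}$ are nondecreasing by Lemma \ref{lemm:general}; since the proof of Theorem \ref{thm:qs} bounds $\ff{A_{n+1}} + \ff{B_{n+1}} \le 2\max\{\ff{A_{n+1}},\ff{B_{n+1}}\}$ in its last step anyway, this weaker form of the lemma yields the same final ratio.
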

As in the analysis of \lc, we need
to show a concentration of value in the
last $k$ elements added to our sets. The next
lemma accomplishes this by using that each
element gives a gain of $\Omega( \um( C ) / k )$
by using the $\beta$-competitive procedure for \unc.
 \begin{lemma} \label{lemm:prime}
   Let $C \in \{A,B\}$, and let $C' \subseteq C_{n+1}$
   be the set of $\min\{|C_{n+1}|, k \}$ elements most recently
   added to $C_{n+1}$. 
   Then $\ff{C_{n+1}} \le \left( \ratioPrime \right) \ff{C'} $.
 \end{lemma}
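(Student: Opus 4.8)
The plan is to combine the combinatorial structure of how $C$ evolves with a telescoping argument driven by the insertion test $\delta_e(C)\ge b\,f(C)/k$ on Line~\ref{line:add}. First, I would pin down the structure: because every deletion on Line~\ref{line:delete-A} keeps only the most recently inserted $\ell(k/b+1)\log_2 k$ elements of $C$, the contents of $C$ are at all times a contiguous \emph{suffix} of the sequence $a_1,\dots,a_q$ of elements ever inserted into $C$; in particular $C_{n+1}=\{a_p,\dots,a_q\}$ for some $p$. Since $\ell(k/b+1)\log_2 k\ge k$ for the relevant $k$ (the case of small $k$ being handled trivially by keeping the best singleton), no deletion can ever discard one of the last $k$ inserted elements $a_{q-k+1},\dots,a_q$; hence, when $|C_{n+1}|>k$, these are exactly $C'$, and $R:=C_{n+1}\setminus C'=\{a_p,\dots,a_{q-k}\}$. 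If $|C_{n+1}|\le k$ then $C'=C_{n+1}$ and the inequality is immediate since $1/(1-(1+b/k)^{-k})\ge 1$, so assume $|C_{n+1}|>k$ from now on.

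Next, the telescoping on the suffix. For $t=1,\dots,k$ let $\hat C^{(t)}$ be the state of $C$ just before $a_{q-k+t}$ is inserted, and write $E_t=\{a_{q-k+1},\dots,a_{q-k+t}\}$, so $E_k=C'$ and $\hat C^{(t)}\supseteq E_{t-1}$ (these elements are never deleted). The insertion test together with submodularity (marginals shrink on supersets) gives, for each $t$, $f(E_t)-f(E_{t-1})=\delta_{a_{q-k+t}}(E_{t-1})\ge\delta_{a_{q-k+t}}(\hat C^{(t)})\ge (b/k)\,f(\hat C^{(t)})$. Moreover one insertion multiplies $f$ by at least $1+b/k$ and, by the deletion bound used in the proof of Lemma~\ref{lemm:general}, one deletion divides $f$ by at most $1+\gamma$ with $\gamma=1/(k^\ell-1)$; since the buffer length exceeds $k$, at most one deletion takes place while $a_{q-k+1},\dots,a_q$ are being inserted, so $f(\hat C^{(t)})\ge (1+b/k)^{t-1}f(\hat C^{(1)})$ for all $t$, up to a single factor $1+\gamma$. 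Summing the per-step inequality over $t$ and using $\sum_{t=1}^k(1+b/k)^{t-1}=\frac{(1+b/k)^k-1}{b/k}$ yields $f(C')=f(E_k)\ge\bigl((1+b/k)^k-1\bigr)f(\hat C^{(1)})$ (again up to the one factor $1+\gamma$), i.e. $f(\hat C^{(1)})\le\frac{1+\gamma}{(1+b/k)^k-1}\,f(C')$.

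To assemble the bound I would use $\hat C^{(1)}\supseteq R$, the disjointness $\hat C^{(1)}\cap C'=\emptyset$, and submodularity. In the principal case in which no deletion fires during the insertion of $a_{q-k+1},\dots,a_q$ (equivalently, $\hat C^{(1)}=R$, which happens unless $|C_{n+1}|$ lies just above the buffer length), we get cleanly, with no $\gamma$ at all, $f(C_{n+1})=f(C'\sqcup R)\le f(C')+f(R)=f(C')+f(\hat C^{(1)})\le f(C')\bigl(1+\tfrac{1}{(1+b/k)^k-1}\bigr)=\frac{1}{1-(1+b/k)^{-k}}f(C')$, which is the claim.

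The hard part is the remaining case, in which a deletion \emph{does} fire while the last $k$ elements are being inserted, so that $\hat C^{(1)}$ contains ``old'' elements $O=\hat C^{(1)}\setminus R$ that are not part of $C_{n+1}$. Because $f$ is \emph{not} monotone, $O$ cannot be dropped for free: there is no reason that $f(R)\le f(\hat C^{(1)})$ or that $f(C_{n+1})\le f(C'\cup\hat C^{(1)})$, so the clean submodular estimate above is unavailable. Instead I would control $O$ through the deletion inequality $f(\text{before truncation})\le(1+\gamma)f(\text{after truncation})$ of Lemma~\ref{lemm:general}, which bounds the loss from removing $O$ by a factor $1+\gamma$, and propagate this past the subsequently inserted elements using submodularity; the whole point of the logarithmic buffer length, i.e.\ of the choice of $\ell$ on Line~\ref{line:ell}, is to make $\gamma=1/(k^\ell-1)$ small enough that this residual error is absorbed and the stated bound still holds. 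Keeping this accounting tight, rather than leaking an extra $1+O(\gamma)$ factor, is the delicate step I would expect to spend the most effort on.
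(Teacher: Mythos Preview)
Your principal-case argument is correct and reaches the stated bound, but the paper's route is shorter. Instead of telescoping over $E_t=C'_t$, bounding each marginal via $\hat C^{(t)}$, and summing a geometric series, the paper iterates the insertion test of Line~\ref{line:add} directly along the chain $R,\ R\cup\{c_1\},\ \dots,\ R\cup C'=C_{n+1}$ (where $R:=C_{n+1}\setminus C'$): taking the state of $C$ just before $c_i$ to be $R\cup C'_{i-1}$, Line~\ref{line:add} gives $f(R\cup C'_i)\ge(1+b/k)\,f(R\cup C'_{i-1})$ for each $i$, hence $f(C_{n+1})\ge(1+b/k)^k f(R)$ in one stroke. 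Submodularity and nonnegativity then yield $f(C')\ge f(C_{n+1})-f(R)\ge\bigl((1+b/k)^k-1\bigr)f(R)$, and so $f(C_{n+1})\le f(C')+f(R)\le\bigl(1+\tfrac{1}{(1+b/k)^k-1}\bigr)f(C')$. No tracking of $\hat C^{(t)}$ or geometric summation is needed.

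On your ``hard case'' (a deletion fires while $c_1,\dots,c_k$ are being inserted, so the state before some $c_i$ is $R\cup C'_{i-1}\cup O$ with $O\ne\emptyset$): you are right that non-monotonicity blocks a clean transfer of the insertion condition from $R\cup C'_{i-1}\cup O$ to its subset $R\cup C'_{i-1}$. The paper's proof does not treat this case separately---it simply invokes Line~\ref{line:add} on the chain above as though the pre-insertion state were exactly $R\cup C'_{i-1}$---so your caution is well placed, but be aware that the paper does not close this gap either. A $(1{+}O(\gamma))$ correction via the deletion estimate in the proof of Lemma~\ref{lemm:general}, as you propose, would suffice and is absorbed by the $\epsi$ slack when the lemma is applied in the proof of Theorem~\ref{thm:qs}.
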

 \begin{proof}
    For simplicity of notation, let $C = C_{n+1}$.
  If $|C| \le k$, the result follows since $C' = C$. 
  So suppose $|C|>k$, and let $C' = \{c_1, \ldots, c_k\}$
  be ordered by the iteration in which each element was
  added to $C$. Also, let $C'_i = \{c_1, c_2, \ldots, c_i\}$,
  for $1 \le i \le k$, and let $C'_0 = \emptyset$.
  Let $C_i$ denote the value of $C$ at the beginning of
  the iteration in which $c_i$ is added. For any set $X$,
  let $\um(X)$ abbreviate $\unc(X)$.

  Observe that
  $C_i \setminus C' \supseteq C \setminus C'$ for all
  $1 \le i \le k$, regardless of whether a deletion occurs
  at any point during the addition of elements of $C'$.
  From this observation, submodularity, and the condition to
  add an element to $C$ on Line \ref{line:add} and the fact
  that \uncalg is a $\beta$-competitive algorithm, we have that
  \begin{align*}
    \ff{C'_i} - \ff{C'_{i-1}} &\ge \ff{(C_i \setminus C) \cup C_i'} 
                              - \ff{(C_i \setminus C) \cup C_{i-1}'} \\
                              & \ge \frac{b}{\beta k} \um( (C_i \setminus C') \cup C'_{i - 1} )
    \ge \frac{b}{\beta k} \um( C \setminus C' ).
  \end{align*}
  Therefore, $f(C') \ge (b/\beta)\um(C\setminus C')$.
  Hence, by submodularity, nonnegativity of $f$, we have
  \begin{align*}
    f(C) &\le f( C \setminus C' ) + \ff{C'} \\
         &\le \um( C \setminus C' ) + \ff{C'} \le (\ratioPrime)\ff{C'}. \qedhere
  \end{align*}
\end{proof}
By application of Lemma \ref{lemm:uni-bound} with $A=C$ and then again with $B=C$, we obtain
  \begin{align}
  \label{ineq:a} \delta_{O}(A^*) &\le b \ff{A_{n+1}} + (1 + k\gamma) \ff{B_{n+1}}, \text{ and} \\
  \label{ineq:b} \delta_{O}(B^*) &\le b \ff{B_{n+1}} + (1 + k\gamma) \ff{A_{n+1}}. 
\end{align}
Next, we have that
\begin{align*} 
  \ff{O} &\le \ff{A^* \cup O} + \ff{B^* \cup O} \numberthis \label{ineq:14}\\
         &\le \ff{A^*} + \ff{B^*} + 
         (b + 1 + k\gamma)(\ff{A_{n+1}} + \ff{B_{n+1}}), \numberthis \label{ineq:15}
\end{align*}
where Inequality \ref{ineq:14} follows from the fact that $A^* \cap B^* = \emptyset$ and submodularity and nonnegativity of $f$. Inequality
\ref{ineq:15} follows from the summation of Inequalities \ref{ineq:a} and \ref{ineq:b}.
By application of Property 2 of 
Lemma \ref{lemm:general}, we have from Inequality \ref{ineq:15}
\begin{align*}
  \ff{O} &\le (b + 2 + (k + 2)\gamma)(\ff{A_{n+1}} + \ff{B_{n+1}}) \\
  &\le (2b + 4 + 2(k+2)\gamma) \ff{C_{n+1}}, \numberthis \label{ineq:16} 
\end{align*}
where $C_{n + 1} = \argmax \{ f(A_{n + 1} ), f( B_{n + 1} ) \}$.
Observe that the choice of $\ell$ on Line \ref{line:ell} ensures that
$2(k + 2)\gamma < \epsi (\ratioPrime)^{-1}$, by Lemma \ref{lemm:ell-choice}.
Therefore, by application of Lemma \ref{lemm:prime}, we have from Inequality \ref{ineq:16}
$$f(O) \le \left( \ratio \right) \ff{C'}.\qedhere$$

\end{proof}
\subsection{Post-Processing: \qssp} \label{sec:qs-post}
In this section, we briefly describe a
modification to \qs that improves its
empirical performance. Instead of choosing,
on Line \ref{line:secondtolast}, the best of $A'$
and $B'$ as the solution; introduce a third
candidate solution as follows: use an offline algorithm
for \nmon in a post-processing procedure on the
restricted universe $A \cup B$ to select
a set of size at most $k$ to return. This method can only
improve the objective value of the returned solution and
therefore does not compromise the theoretical analysis
of the preceding section. The empirical solution value can
be further improved 
by lowering the parameter $b$ as this increases the size
of $A \cup B$, potentially
improving the quality of the solution found by the selected
post-processing algorithm.

\section{Multi-Pass Streaming Algorithm for \nmon} \label{sec:brlarge}
In this section, we describe a multi-pass streaming algorithm
for \nmon that can be used to improve any constant ratio
to $4 + O(\epsi )$ in linear time and $O(k)$ space. 
\begin{algorithm}[H]
   \caption{A multi-pass algorithm for \nmon.}\label{alg:boostratio}
   \begin{algorithmic}[1]
     \Procedure{\mpl}{$f, k, \epsi, \Gamma, \alpha$}
     \State \textbf{Input:} oracle $f$, cardinality constraint $k$, $\epsi > 0$, parameters $\Gamma, \alpha$, such that $\Gamma \le \opt \le \Gamma / \alpha$.
     \State $\tau \gets \Gamma / (4k\alpha)$ \Comment{Choice satisfies $\tau \ge \opt/(4k)$.}
     \While{ $\tau \ge \epsi \Gamma / (16k)$ }
     \For{ $u \in \mathcal U$ }
     \State {$S \gets \underset{S \in \{X, Y\} \text{ and } |S| < k}{\argmax} \delta_e ( S )$}
     \label{line:cmpbr} \Comment{If $\argmax$ is empty, break from loop.}
     \If{$\delta_u (S) \ge \tau$}\label{line:addbr}
     \State $S \gets S \cup \{ u \}$
     \EndIf
     \EndFor
     \State $\tau \gets \tau (1 - \epsi)$
     \EndWhile
     \State \textbf{return} $S \gets \argmax \{ f(A), f(B) \}$ 
     \EndProcedure
\end{algorithmic}
\end{algorithm}

\textbf{Algorithm Overview.}
The algorithm \mpl (Alg. \ref{alg:boostratio})
starts with $\Gamma$, an initial estimate of $\opt$ obtained by running a constant-factor $\alpha$-approximation algorithm. The value of $(\Gamma, \alpha)$
is used to compute an upper bound for $\opt / (4k)$. Then, a fast greedy approach
with descending thresholds is used, in which
two disjoint sets $A$ and $B$ compete
for elements with gain above a threshold.
To obtain our stated theoretical guarantees, \qs is used to obtain the initial value of $\Gamma, \alpha$.
\begin{theorem} \label{thm:br}
  Let $0 \le \epsi \le 1/2$, and let $(f,k)$ be an instance of \nmon,
  with optimal solution value $\opt$.
  Suppose $\Gamma, \alpha \in \mathbb{R}$
  satisfy $\Gamma \le \opt \le \Gamma / \alpha$.
  The solution $S$ returned by $\mpl(f,k,\epsi,\Gamma, \alpha)$
  satisfies
  $\opt \le (4 + 6\epsi) f(S).$
  Further, $\mpl$ has time and query
  complexity $O \left( \frac{n}{\epsi} \log \left( \frac{1}{\alpha \epsi} \right) \right)$,
  memory complexity $O(k)$, and makes
  $O( \log(1/(\alpha \epsi)) / \epsi )$ passes
  over the ground set.
\end{theorem}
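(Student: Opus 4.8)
The plan is to read the complexity bounds off the pseudocode, and to prove the ratio by the standard two‑candidate submodularity argument, charging each marginal gain either to a gain incurred by the other set or to the (tiny) final threshold. The only genuinely delicate point will be the behavior when $A$ or $B$ reaches size $k$.

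\emph{Complexity.} The outer \textbf{while} loop multiplies $\tau$ by $(1-\epsi)$ each pass, starting from $\Gamma/(4k\alpha)$ and halting once $\tau<\epsi\Gamma/(16k)$; since the ratio of the first to the last threshold is $16/(4\alpha\epsi)$, the number of passes is $\log_{1/(1-\epsi)}\!\big(16/(4\alpha\epsi)\big)=O\big(\epsi^{-1}\log(1/(\alpha\epsi))\big)$ for $\epsi\le 1/2$. Each pass evaluates $\marge{u}{A}$ and $\marge{u}{B}$ for each $u$ (caching $\ff{A}$ and $\ff{B}$), i.e.\ $O(1)$ queries per element, so the total time and query complexity is $O\big(\epsi^{-1}n\log(1/(\alpha\epsi))\big)$, and the memory is $O(k)$ to store the disjoint sets $A,B$.

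\emph{Ratio.} Let $O$ be an optimal solution and $A,B$ the final sets. I would first record that $A$ and $B$ stay disjoint and are grown greedily with non‑increasing thresholds: if $a$ enters $A$ during a pass with threshold $\tau$, then $\marge{a}{A}\ge\tau$ at that moment, and — outside the first pass — $a$ was examined and rejected one pass earlier at threshold $\tau/(1-\epsi)$, so by submodularity $\marge{a}{A}\le\tau/(1-\epsi)$. The backbone is $\ff{O}\le\ff{O\cup A}+\ff{O\cup B}$, which holds by submodularity and nonnegativity together with $A\cap B=\emptyset$ (so $(O\cup A)\cap(O\cup B)=O$); it then suffices to bound each of $\ff{O\cup A},\ff{O\cup B}$ by about $\ff{A}+\ff{B}$ up to an $O(\epsi\,\opt)$ additive error. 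For $\ff{O\cup A}$ I would use $\ff{O\cup A}\le\ff{A}+\sum_{o\in O\setminus A}\marge{o}{A}$ and split $O\setminus A=(O\cap B)\cup\big(O\setminus(A\cup B)\big)$. An element $o\in O\setminus(A\cup B)$ is rejected in the last pass, whose threshold $\tau^*$ satisfies $\tau^*<\epsi\Gamma/(16k(1-\epsi))\le\epsi\,\opt/(8k)$, and the comparison set against which $o$ was rejected dominates $\marge{o}{A}$, so $\marge{o}{A}<\tau^*$ and these terms sum to at most $k\tau^*<\epsi\,\opt/8$. An element $o\in O\cap B$ was, when inserted, preferred for $B$ over $A$, so $\marge{o}{A}$ at that step is at most the gain $B$ received, and since $\marge{o}{A}$ only decreases as $A$ grows, summing over the distinct insertion steps of $B$ gives at most $\ff{B}$. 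Hence $\ff{O\cup A}\le\ff{A}+\ff{B}+\epsi\,\opt/8$, and symmetrically for $\ff{O\cup B}$, so $\ff{O}\le 2(\ff{A}+\ff{B})+\epsi\,\opt/4\le 4\ff{S}+\epsi\,\opt/4$, which rearranges to $\opt\le\tfrac{4}{1-\epsi/4}\ff{S}\le(4+6\epsi)\ff{S}$ for $\epsi\le 1/2$.

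\emph{Main obstacle.} The charging above silently assumed that at the relevant times neither $A$ nor $B$ had already reached size $k$: a ``full'' set no longer absorbs elements, and an $o\in O\setminus(A\cup B)$ (or an $o\in O\cap B$ rejected while $A$ was full) may have a large $\marge{o}{A}$ that cannot be paid for by a gain or by the final threshold. I expect the resolution to follow \citet{Kuhnle2019}: when a set, say $A$, fills up, its $k$ marginal gains already capture a constant fraction of the optimum, because each gain at a given threshold is within a $(1-\epsi)$ factor of the marginal of every still‑uninserted optimal element at that time (in the easiest case, if $A$ fills during the very first pass then $\ff{A}\ge k\tau_1=\Gamma/(4\alpha)\ge\opt/4$ outright); this yields a bound $\ff{O\cup A}\le c\,\ff{A}+O(\epsi\,\opt)$ for an absolute constant $c$, which can be fed back into the $\ff{O\cup A}+\ff{O\cup B}\ge\ff{O}$ split so that the overall ratio stays $4+O(\epsi)$. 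Making this amortized argument — together with the bookkeeping of exactly which pass rejected which element — precise, rather than the submodularity split and the geometric pass count, which are routine, is the hard part, and is also why the stated loss is $6\epsi$ rather than the $2\epsi$ that the full‑set‑free case alone would give.
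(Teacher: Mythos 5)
Your complexity analysis, the split $\ff{O}\le\ff{O\cup A}+\ff{O\cup B}$, the charging of $O\cap B$ against insertion gains of $B$, the final-threshold bound for elements in $O\setminus(A\cup B)$, and the ``fills during the first pass'' case all match the paper. But the case you flag as the main obstacle --- a set reaching size $k$ in a \emph{later} pass --- is precisely where the paper's proof does its real work, and your proposal leaves it as an expectation rather than an argument. Your sketched fix, a bound $\ff{O\cup A}\le c\,\ff{A}+O(\epsi\,\opt)$ for an unspecified absolute constant $c$, would not suffice as stated: fed back into the two-set split it only yields a ratio of order $2c$, so unless you show $c\le 2+O(\epsi)$ you do not recover $4+6\epsi$. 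So there is a genuine gap, and it is exactly the amortized bookkeeping you defer.

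The paper closes it as follows (Lemmas \ref{lemm:five} and \ref{lemm:six}). Assume neither set filled during the first pass (else the direct $\ff{C}\ge k\tau_1\ge\opt/4$ argument applies). For $C$ full at termination and $D$ the other set, define $D'$ to be the value of $D$ at the moment the $k$th element enters $C$. Optimal elements in $O\cap D'$ were inserted into $D$ while $C$ was still eligible for the argmax on Line \ref{line:cmpbr}, so $\marge{o}{C}$ can be charged to $\sum_{i:d_i\in O}\Delta D_i$; this repairs the comparison your proposal loses when $A$ is full. For $o\in O\setminus(C\cup D')$, let $\tau'$ be the threshold at which $C$'s last element was added; since $C$ did not fill in the first pass, $o$ was processed and rejected in the pass at threshold $\tau'/(1-\epsi)$ while $|C|<k$, hence $\marge{o}{C}<\tau'/(1-\epsi)$. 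A counting step then matches these at most $k-|O\cap C|$ elements against the at least $k-|O\cap C|$ gains $\Delta C_i\ge\tau'$ with $c_i\notin O$, giving $\sum_{o\in O\setminus(C\cup D')}\marge{o}{C}\le(1+2\epsi)\sum_{i:c_i\notin O}\Delta C_i+\epsi\opt/16$. Summing the resulting bound on $\marge{O}{C}$ over $C=A$ and $C=B$ keeps the per-set constant at $1+2\epsi$, so the overall ratio is $4+6\epsi$; this explicit matching of rejected optimal elements to non-optimal gains at threshold $\tau'$, via the set $D'$, is the missing ingredient in your write-up.
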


\section{Nearly Linear-Time Algorithm for \nmon with $e+\epsi$ Ratio} \label{sec:lrig}
\begin{algorithm}[t!]
    \caption{An $(e+\epsi)$-approximation algorithm for \nmon} \label{alg:lrig}
    \begin{algorithmic}[1]
    \Procedure{\lrig}{$f, k, \epsi$}
    \State \textbf{Input:} oracle $f$, constraint $k$, $\epsi$
    \State \textbf{Initialize } 
    $\ell \gets \frac{2e}{\epsi}+1$,
    $G_0 \gets \emptyset$
    \For{$m \gets 1$ to $\ell$}
      \State $G_m \gets \ig(f, k, \ell, G_{m-1}$)
    \EndFor
    \State \textbf{return} $G_\ell$
    \EndProcedure
\end{algorithmic}
\end{algorithm}
\begin{algorithm}[t!]
    \caption{A $({\approx}\ell)$-approximation that interlaces $\ell$ greedy procedures together and uses only $1/\ell$ fraction of the budget. This is the main subroutine required by $\lrig$. } \label{alg:rig}
    \begin{algorithmic}[1]
    \Procedure{\ig}{$f, k, \ell, G$}
    \State \textbf{Input:} oracle $f$, constraint $k$, set size $\ell$,
    starting set $G$
      \State $\{a_1, \ldots , a_\ell\} \gets$ top $\ell$ elements \label{line:rig-maxl}
      in $\mathcal{U} \setminus G$ with respect to marginal gains on $G$ 
      \For{$u \gets 0$ to $\ell$ in parallel} \label{line:rig-outerfor} \Comment{$\ell+1$ guesses of max singleton in $O$}
      \If{ $u = 0 $}
        \State $A_{u,l} \gets G\cup \{a_l\}$, for all $1\le l \le \ell$
      \Else
      \State $A_{u,l} \gets G \cup \{a_u\}$, for any $1\le l \le \ell$
    \EndIf
    \While{$j \gets 1$ to $k/\ell-1$} \label{line:lrig_j} \label{line:rig-while-begin}
            \For{$i \gets 1$ to $\ell$}
              \State $x_{j,i} \gets \argmax_{
              x \in \mathcal{U} \setminus \parth{\cup_{l=1}^\ell A_{u,l}^j }} 
              \marge{x}{A_{u,i} }$\label{line:max}
              \State $A_{u,i} \gets A_{u,i} \cup \{x_{j,i}\}$
              \EndFor
              \State $j \gets j + 1$
          \EndWhile \label{line:rig-while-end}
      \EndFor
    \State \textbf{return} $A \gets $ a random set in $\{A_{u,i}:1 \le i \le \ell, 0 \le u \le \ell\}$
    \EndProcedure
\end{algorithmic}
\end{algorithm}
\begin{figure}[t]
\centering
\includegraphics[width=0.45\textwidth]{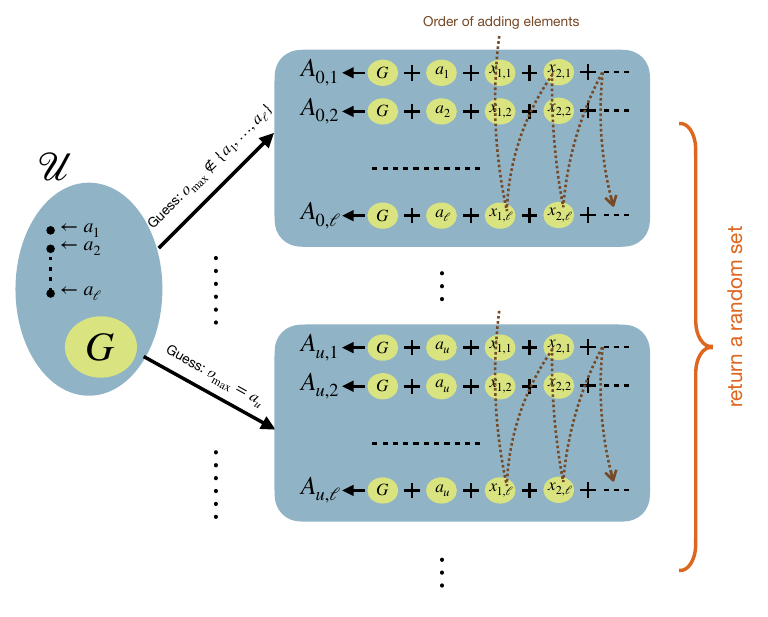}
\caption{This figure depicts the way elements are added to the solution set $A_{u, l}$
in $\ig$, Alg.~\ref{alg:rig}.} \label{fig:rig}
\label{fig:IGframework}
\end{figure}
In this section, we propose a deterministic, nearly linear time
algorithm for \nmon with $e+\epsi$ approximation ratio.
For simplicity, we present and analyze a slower, randomized
version of the algorithm in this section,
which contains the main algorithmic
ideas. We discuss how to derandomize
and speed up the algorithm, with full details provided in
Appendix \ref{sec:derand-alg}.

\textbf{Algorithm Overview.}
The algorithm $\lrig$ (Alg. \ref{alg:lrig}) may be thought of as
an interpolation between
two algorithms in the literature: the standard greedy algorithm
\citep{Nemhauser1978} and the RandomGreedy algorithm of
\citet{Buchbinder} -- see the Related Work section for
a discussion of these algorithms. The degree of interpolation
is controlled by a parameter $\ell$:
If $\ell = 1$, $\lrig$ 
reduces to standard greedy, while if $\ell = k$, it
can be shown that our algorithm
reduces to RandomGreedy.
Since we set $\ell$ to a constant value ($2e/\epsi$),
the algorithm can
considered to be closer to StandardGreedy than RandomGreedy. 

The main idea is to have an algorithm
that we can analyze similarly to
the analysis of RandomGreedy,
except that a constant number $\ell$ of iterations are used,
instead of $k$, as in RandomGreedy.
In order for this
to work, we need a way to add $k / \ell$ elements each
iteration, in such a way that we reduce the
distance to the optimal value\footnote{Precisely, when $f$ is non-monotone, it is not the optimal value but the value of the optimal solution unioned with the current solution of the algorithm.}
by at least a factor of $1/\ell$.
That is, we need an $\ell$-approximation algorithm that
uses only $(1 / \ell)$-fraction of the budget,
which seems like a difficult proposition as this means the algorithm
must use the $(1/\ell)$-fraction of the budget optimally.
If one considers each iteration of the standard greedy algorithm,
it is precisely a $k$-approximation that requires a $(1/k)$-fraction
of the budget -- it is this characteristic of the greedy selection
that must be generalized to selecting a constant fraction of the budget
at once. 
Nevertheless, we show that
the algorithm $\ig$ (Alg. \ref{alg:rig}) is able to satisfy this requirement. 

The operation of $\ig$ is illustrated
in Fig. \ref{fig:rig}. For technical reasons related to the nonmonotonicity of the
function, the actual pool of candidate sets used by $\ig$
consists of $\ell + 1$ pools of $\ell$ candidates,
and there is only a $( \ell + 1 )^{-1}$ probability of success
(that we hit the right pool). This low probability of success
at each iteration leads to a large constant factor
(exponential dependence on $\epsilon$)
in the runtime when
we derandomize the algorithm. Finally,
$\lrig$ works for both monotone and non-monotone
cases: if the function is monotone,
$\lrig$ gets nearly the optimal $e/(e - 1)$ ratio,
as shown in Appendix~\ref{sec:mon-lrig}.

\textbf{Speedup and Derandomization.}
One can use a descending thresholds greedy
approach (a common strategy first seen
in submodular optimization with \citet{Badanidiyuru2014})
to speed up $\ig$, which replaces the factor
of $k$ in the runtime with $O( \log n )$.
Since the randomization is over
the selection of the set from
a pool of $\ell (\ell + 1)$ candidates at each of
the $\ell$ iterations, and $\ell$ is
a constant, there are only $(\ell(\ell + 1))^\ell$
  possible computation paths, which is a constant.
  Hence, the algorithm can be derandomized at
  the cost of a constant factor by
  following each of these paths and selecting
  the best. 

\textbf{Overview of Proof.} The main difficulty of the
proof is showing that $\ig$ is nearly an $\ell$-approximation
(Theorem \ref{thm:ig}).
Once this is established, the proof is similar to
the RandomGreedy analysis, except with $\ell$ iterations
instead of $k$. To show $\ig$ is an $\ell$-approximation,
it is necessary to order the optimal solution $O$ in a
certain way, such that each of the $\ell$ greedy
procedures gets a marginal gain that dominates the
gain of any of the next $\ell$ elements of $O$. If
one of the greedy procedures adds an element of $O$
in the first iteration, this may not be possible;
hence, we have to guess which of the first $\ell$
elements, if any, is the best element that intersects with $O$
and give each greedy procedure the opportunity
to start with this element. This guessing procedure
($\ell + 1$ guesses in total) is responsible for
the low (but constant) probability of success
of each iteration, as only one of the guesses can
be correct. 
\begin{theorem}\label{thm:ig}
  Let $O \subseteq \uni$ be any set of size at most $k$, and suppose
  $\ig$ is called with $(G, f, k, \ell)$.
  Then $\ig$ outputs a set $A$
with $\oh{\ell nk}$ queries and probability $(\ell+1)^{-1}$ such that:
1) $\ex{f(O \cup A)} \ge \left(1-\frac{1}{\ell}\right) f(O \cup G)$;
\\2) $\ex{f(A)} \ge \frac{\ell}{\ell+1} f(G)+
     \frac{1}{\ell+1} \parth{1-\frac{1}{\ell}}f(O \cup G)$.
\end{theorem}
\begin{proof}[Proof of Theorem~\ref{thm:ig}]
  Let $o_{\text{max}} = \argmax_{o \in O\setminus G}\marge{o}{G}$,
  and let $\{ a_1, \ldots, a_\ell \}$ be the largest $\ell$ elements
  of $\{ \marge{x}{G} : x \in \uni \setminus G \}$, as chosen on
  Line \ref{line:rig-maxl}. We consider the following two cases.

\textbf{Case  $(O\setminus G) \cap \{a_1, \ldots, a_\ell\}=\emptyset$.}
Then, $o_{\text{max}} \not \in \{a_1, \ldots, a_\ell\}$ 
which implies that
$\marge{a_u}{G} \ge \marge{o}{G}$,
for all $1 \le u \le \ell$ and $o \in O\setminus G$;
and, after the first iteration of the \textbf{while} loop on Line~\ref{line:lrig_j},
no element of $O \setminus G$ is added into any of
$\{A_{0,i}\}_{i=1}^\ell$. We will analyze the iteration
of the \textbf{for} loop on Line \ref{line:rig-outerfor} with $u = 0$.

Since no element of $O\setminus G$ is added into the collection
when $j=0$, we can order $O\setminus G= \{o_1, o_2,\ldots\}$ such that the first $\ell$ elements
are not selected in any set before we get to $j = 1$,
the next $\ell$ elements are not selected in any set before we get
to $j = 2$,
and so on. Let $i \in \{1, \ldots, \ell \}$.
Let $A_{0,i}^{j}$ be the value of $A_{0,i}$ after $j$ elements are added into it,
and define $A_{0,i} = A_{0,i}^{k/\ell}$, the final value.
Finally, denote by $\delta_j$ the value $\delta_{x_{j, i}}(A_{0,i}^{j})$.
Then,
\begin{align*}
&\func{f}{O \cup A_{0,i}} - \func{f}{A_{0,i}}
\le \sum_{o \in O\setminus A_{0,i}}\marge{o}{A_{0,i}}\tag{submodularity}\\
&\le \sum_{o \in O\setminus G}\marge{o}{A_{0,i}}\tag{$G \subseteq A_{0,i}$}\\
&\le \sum_{l = 1}^{\ell} \marge{o_l}{A_{0,i}^{0}}
+ \sum_{l = \ell+1}^{2\ell} \marge{o_l}{A_{0,i}^{1}}+\ldots \tag{submodularity}\\
&\le \ell \sum_{j=1}^{k/\ell}\delta_j
=\ell (\func{f}{A_{0,i}}-\func{f}{G}),\numberthis \label{inq:ig-rec}
\end{align*}
where the last inequality follows from the ordering of $O$ and the
selection of elements into the sets.
Since, $A_{u,i_1} \cap A_{u,i_2} = G$
for any $1 \le i_1\neq i_2 \le \ell$,
it holds that 
$\parth{O \cup A_{u,i_1}} \cap \parth{O \cup A_{u,i_2}} = O\cup G$.
Then, by repeated application of
submodularity and nonnegativity of $f$, it can be shown that
\begin{equation}\label{inq:ig-deg}
\sum_{i=1}^\ell f(O\cup A_{0,i})\ge (\ell-1)\func{f}{O \cup G} + 
\func{f}{O\cup \parth{\cup_{i=1}^\ell A_{0,i}}}
\ge (\ell-1)\func{f}{O \cup G}.
\end{equation}
By summing up Inequality~\ref{inq:ig-rec} with all $1\le i \le \ell$,
it holds that
\begin{equation}\label{inq:ig-recur}
\frac{\ell+1}{\ell} \sum_{i=1}^\ell f(A_{0,i}) \ge 
\frac{1}{\ell}\sum_{i=1}^\ell f(O\cup A_{0,i}) + \ell f(G)
\ge \frac{\ell-1}{\ell}\func{f}{O \cup G}+\ell \func{f}{G},
\end{equation}
where the second inequality follows from Inequality~\ref{inq:ig-deg}.
Therefore, if we select a random set from $\{A_{0,i}: 1\le i \le \ell\}$,
by Inequalities~\ref{inq:ig-deg} and~\ref{inq:ig-recur},
the two inequalities in the Lemma hold
and we have probability $1/(\ell + 1 )$ of this happening.

\begin{figure}[t]
\centering
\subfigure[\imgsum value, \cif]{ \label{mt:obj-img}
    \includegraphics[width=0.3\textwidth,height=0.165\textheight]{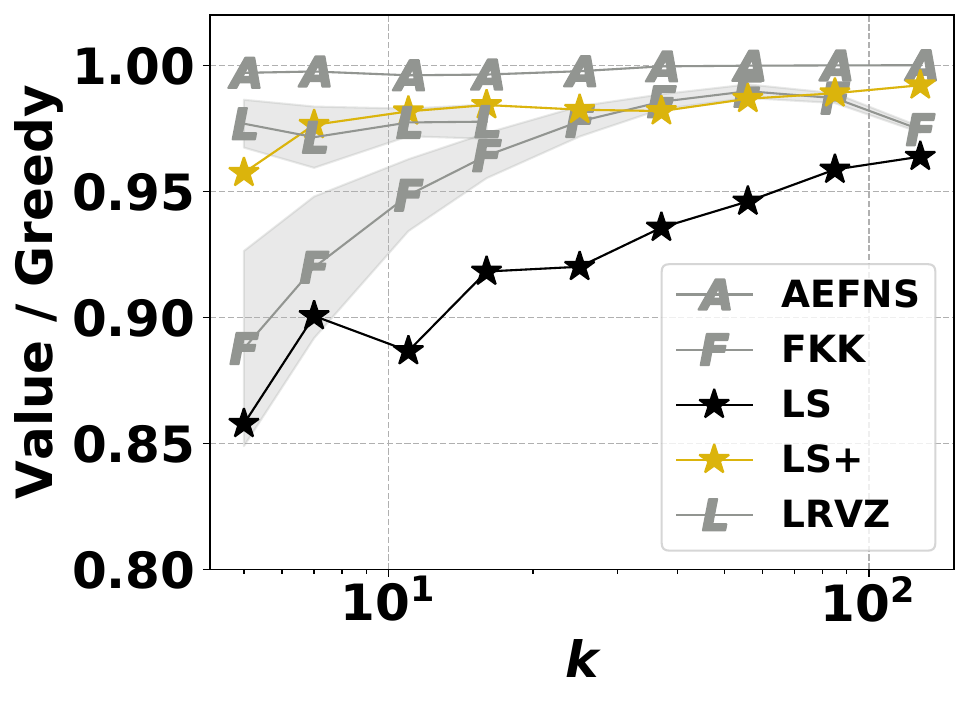}
  }
  \subfigure[\imgsum queries, \cif]{ \label{mt:query-img}
    \includegraphics[width=0.3\textwidth,height=0.165\textheight]{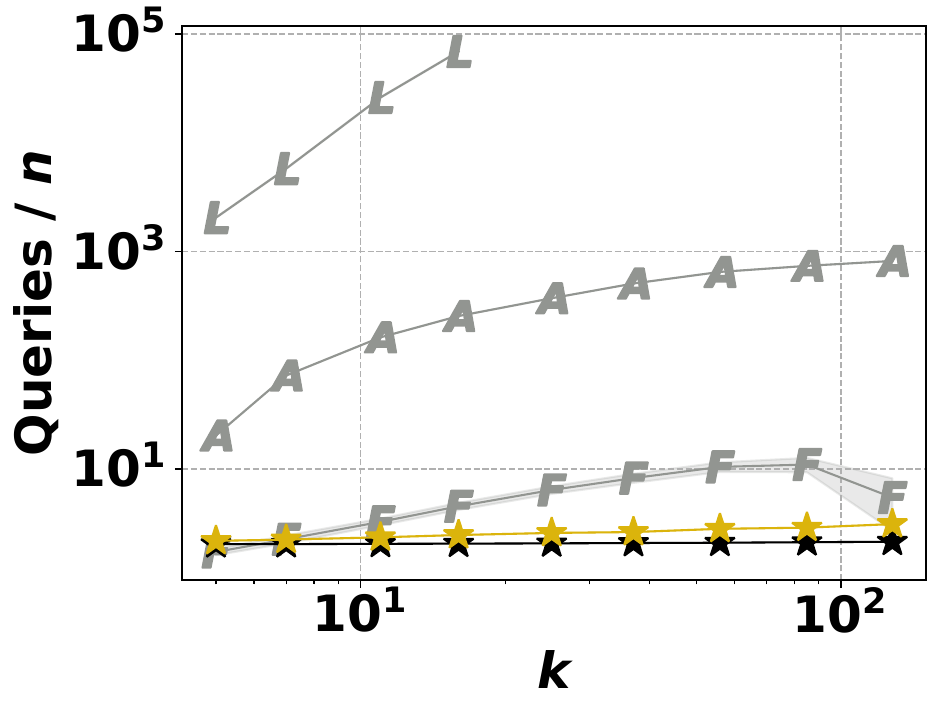}
  }
  \subfigure[\maxcut value, \ba]{ \label{mt:obj-ba}
    \includegraphics[width=0.3\textwidth,height=0.165\textheight]{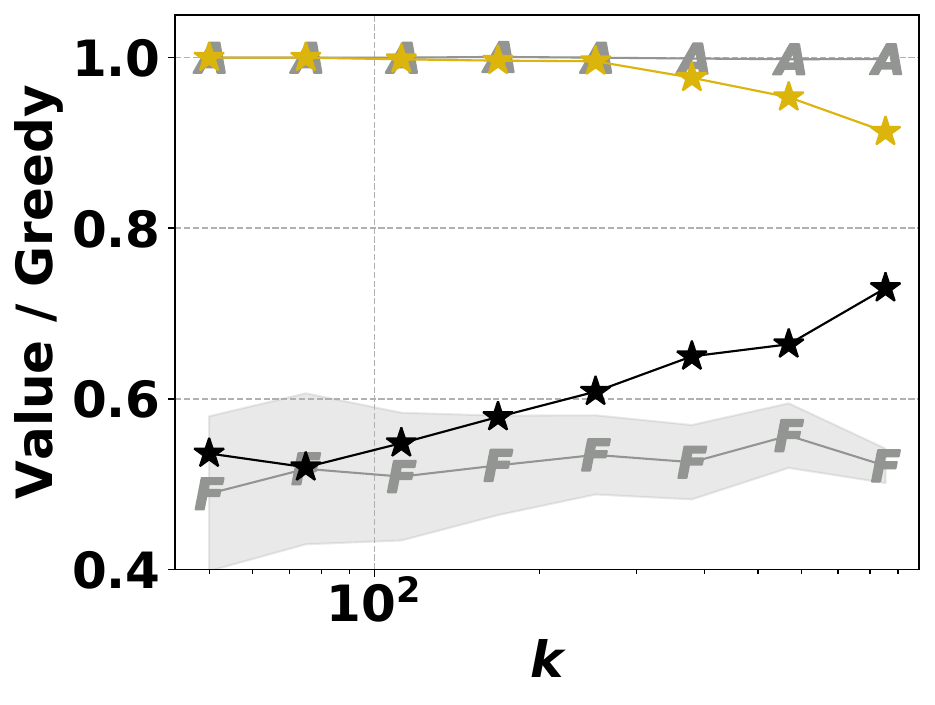}
  }
  \subfigure[\maxcut queries, \ba]{ \label{mt:query-ba}
    \includegraphics[width=0.3\textwidth,height=0.165\textheight]{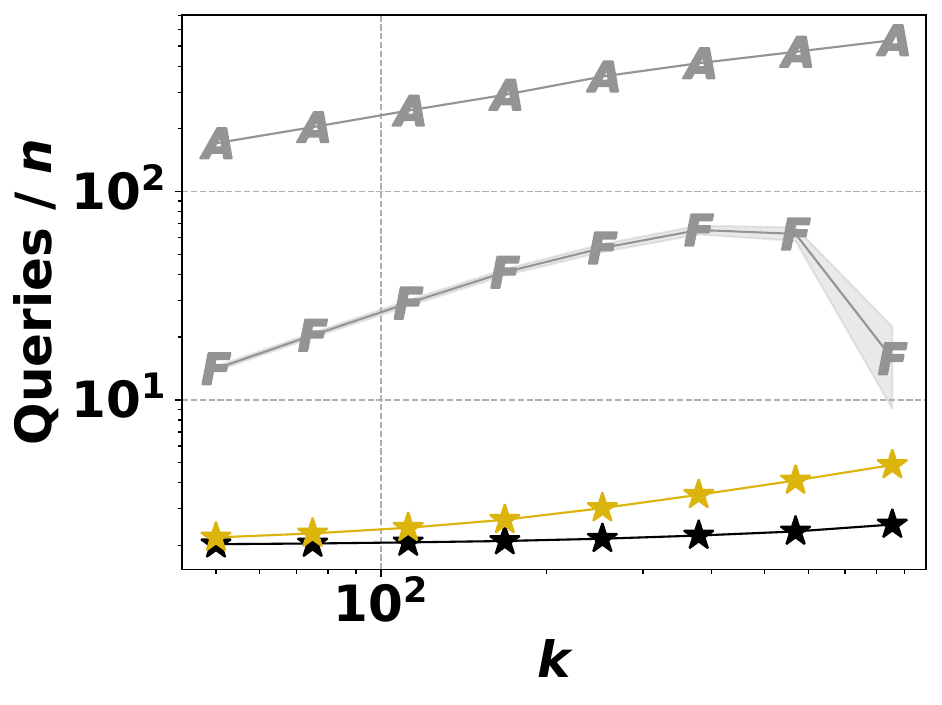}
  }
  \subfigure[\revmax value, \fb]{ \label{mt:obj-fb}
    \includegraphics[width=0.3\textwidth,height=0.165\textheight]{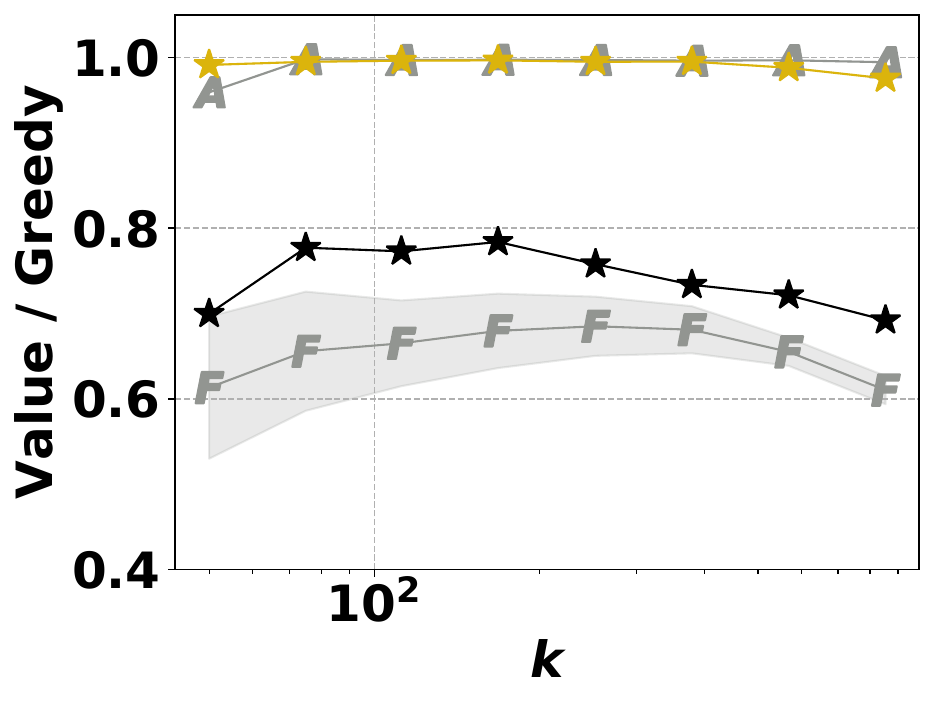}
  }
  \subfigure[\revmax queries, \fb]{ \label{mt:query-fb}
    \includegraphics[width=0.3\textwidth,height=0.165\textheight]{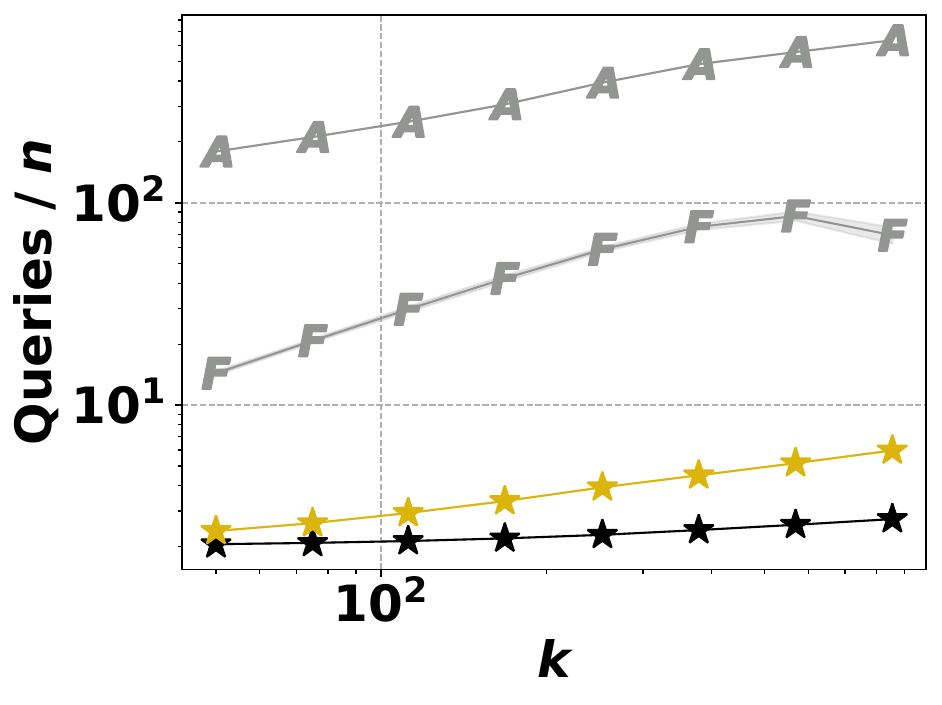}
  }
  \caption{Evaluation of single-pass streaming algorithms on the {\cif, \ba, and \fb datasets with the \imgsum [\textbf{(a),(b)}], \maxcut objectives [\textbf{(c),(d)}], and \revmax [\textbf{(e),(f)}] respectively.
  The legend in \textbf{(a)} applies to all the other subfigures.}}
  \label{fig:single-pass}
\end{figure}
\textbf{Case  $(O\setminus G) \cap \{a_1, \ldots, a_\ell\}\neq\emptyset$.}
Then $o_{\text{max}} \in \{a_1, \ldots, a_\ell\}$,
so $a_u = o_{\text{max}}$, for some $u \in 1, \ldots, \ell$.
we analyze the iteration $u$ of the \textbf{for}
loop on Line \ref{line:rig-outerfor}.
Similarly to the previous case,
let $i \in \uni$, 
define $A_{u,i}^{j}$ be the value of $A_{u,i}$ after we add $j$ elements into it,
and we will use $A_{u,i}$ for $A_{u,i}^{k/\ell}$,
Also, let $\delta_j = \marge{x_{j, i}}{A_{u,i}^{j-1}}$.
Finally, let $x_{1,i} = a_u$ and 
observe $A_{u,i}^{(1)} = G \cup \{a_u\}$.

Then, we can order $O\setminus G= \{o_1, o_2,\ldots\}$
such that: 1) for the first $\ell$ elements $\{o_l\}_{l=1}^\ell$, 
$\marge{o_l}{G} \le \marge{o_{\text{max}}}{G}=\delta_1$;
2) the next $\ell$ elements
$\{o_l\}_{l=\ell+1}^{2\ell}$ are not selected by any set before we get 
to $j = 2$, which implies that $\marge{o_l}{A_{u,i}^{1}} \le \delta_2$,
and so on.
Therefore, analagous to the the previous case,
we have that 
\begin{equation}\label{inq:ig-rec-2}
\func{f}{O \cup A_{u,i}} - \func{f}{A_{u,i}}
\le \ell (\func{f}{A_{u,i}}-\func{f}{G}).
\end{equation}
Since, $A_{u,i_1} \cap A_{u,i_2} = G\cup \{a_u\}$
for any $1 \le i_1\neq i_2 \le \ell$,
and $a_u \in O$,
it holds that 
$\parth{O \cup A_{u,i_1}} \cap \parth{O \cup A_{u,i_2}} = O \cup G$.
Then, by submodularity and nonnegativity of $f$, it holds that
\begin{equation}\label{inq:ig-deg-2}
\sum_{i=1}^\ell f(O\cup A_{u,i}) \ge (\ell-1)\func{f}{O \cup G} + 
\func{f}{O\cup \parth{\cup_{i=1}^\ell A_{u,i}}}
\ge (\ell-1)\func{f}{O \cup G}. 
\end{equation}
By summing up Inequality~\ref{inq:ig-rec-2} with all $1\le i \le \ell$,
it holds that 
\begin{equation}\label{inq:ig-recur-2}
\frac{\ell+1}{\ell} \sum_{i=1}^\ell f(A_{u,i}) \ge 
\frac{1}{\ell}\sum_{i=1}^\ell f(O\cup A_{u,i}) + \ell f(G)
\ge \frac{\ell-1}{\ell}\func{f}{O \cup G}+\ell \func{f}{G},
\end{equation}
where the second inequality follows from Inequality~\ref{inq:ig-deg-2}.
Therefore, if we select a random set from $\{A_{u,i}: 1\le i \le \ell\}$,
by Inequalities~\ref{inq:ig-deg-2} and~\ref{inq:ig-recur-2},
the two inequalities in the lemma holds, and
this happens with probability $(\ell+1)^{-1}$.
\end{proof}
Once we have Theorem \ref{thm:ig}, the analysis
of the next theorem is similar to the RandomGreedy
analysis and is relegated to Appendix \ref{apx:lrig}.
\begin{theorem}\label{thm:lrig}
Let $\epsi \ge 0$, and
$(f,k)$ be an instance of \nmon,
with optimal solution value \opt.
Algorithm $\lrig$ outputs a set $G_\ell$
with $\oh{\epsi^{-2}nk}$ queries such that
$\opt \le \parth{e+\epsi}\ex{f(G_\ell)}$ with
probability $(\ell+1)^{-\ell}$, where $\ell = \frac{2e}{\epsi}+1$.
\end{theorem}

\section{Empirical Evaluation} \label{sec:exp} 
In this section, we evaluate our single-pass
algorithm \qs in two variants:
without post-processing (\qss) and
using our algorithm \mpl for
post-processing (\qssp), as described
in Section \ref{sec:qs-post}. We compare with
1) Algorithm 3 (\lrvz) of \citet*{Liu2021},
  which achieves ratio
  $e + \epsi + o(1)$ in expectation
  in time $\oh{ \frac{n}{\epsi^3}k^{2.5}}$, if the stream is in random order.
  This is the best ratio achieved by a single-pass streaming algorithm,
  although if the stream is in adversarial order it has no ratio.
  Our implementation is an idealized implementation that uses more memory
  to run faster than the actual algorithm does, as discussed
  in Appendix \ref{apx:exp}. This optimization only advantages
  \lrvz in the comparison. 
2) Algorithm 2 (\fkk) of \citet*{Feldman2018};
  this algorithm achieves ratio $5.828$ in expectation
  and has $O(kn)$ time complexity.
3) Algorithm 1 (\aefns) of \citet*{Alaluf2020a};
  the implementation of this algorithm
  requires choice of a post-processing algorithm.
  For fair comparison, \aefns and \qssp used the same post-processing algorithm \mpl as discussed in Appendix \ref{sec:apx-impl}, which for \aefns
  yields
  ratio $5 + \epsi$ and time complexity $O_{\epsi}(n \log k )$.

  Randomized algorithms were repeated 40 times;
  plots show sample mean (symbol) and standard deviation (shaded region) of each metric.
  A timeout of four hours was used for
  each repetition. \lrvz received the stream in uniformly random order for each repetition; all other algorithms used the stream order determined by the data representation.

  
\textbf{Applications and Datasets.}
The algorithms were evaluated on
{three} applications of \nmon: 
cardinality constrained maximum cut
(\maxcut),
revenue maximization on social networks
(\revmax),
{and image summarization (\imgsum).}
A variety of network topologies from the
Stanford Large Network Dataset Collection
\citep{snapnets} were used, as well as
synthetic random graphs.
For more details on the applications and datasets, see
Appendix \ref{apx:exp}
\footnote{The source code is available at 
\url{https://gitlab.com/luciacyx/dtm-linear-code.git}.}.
  
\textbf{Results.}
Results for the objective value (normalized
by the standard greedy value)
and total queries (normalized by the number of vertices $n$ in the graph) for each application
are shown in Fig. \ref{fig:single-pass} as the cardinality
constraint $k$ varies, for the \ba dataset, a synthetic
BA random graph with $n=5000$, the \fb dataset,
a small section of the Facebook social network
with $n=4039$,
{and the \cif dataset, 
a random collection of images with $n=500$. }
These datasets are chosen since
all algorithms are able to complete within the time
limit on at least some
instances. Further exploration of the
scalability of our algorithms is given
in Appendix \ref{apx:exp}, where we show results
on datasets with $n > 10^6$. For the cardinality constraint,
we used a range of $k$ values
{which increased by a factor of 1.5.
The starting value of $k$ for \ba and \fb datasets was set to 50,
while for \cif, it was set to 5.}

\textbf{Discussion.} {On almost all instances}, \qssp returned nearly the greedy value
({typically $\ge 95\%$, except one instance in Fig.~\ref{mt:obj-ba}}) while using 
{ $< 3n$ oracle queries for \cif and \ba datasets,
and $< 6n$ oracle queries for \fb datasets.}
The only algorithm with competitive objective value
to \qssp is \aefns, which 
requires more than an order of magnitude more queries than \qssp.
Even without post-processing,
the objective value of \qss is competitive with 
{\fkk on some instances; see Fig.~\ref{mt:obj-ba} and~\ref{mt:obj-fb}}.
The algorithm \lrvz, despite having the best
theoretical performance ratio,
{returns similar solutions compared to \qssp; see Fig.~\ref{mt:obj-img};}
moreover, \lrvz is
the least scalable algorithm, often requiring more than
$1000n$ queries of the objective. 

Our algorithms \qss and \qssp used the fewest oracle queries on every evaluated
instance of \nmon. Moreover, our algorithms exhibited the best scaling
with the size constraint $k$; \qss has no $k$ dependence at all.
After \qss and \qssp, the next most scalable algorithm on the
evaluated instances is \fkk, which scales linearly with $k$
and used more than an order of
magnitude more queries than our algorithms. 
In addition, \fkk consistently returns the lowest objective value
of any algorithm. 
Finally, the post-processing of \qssp adds only a small
amount of extra queries over \qss, as shown in 
{Figs.~\ref{mt:query-img},}
\ref{mt:query-ba}, and \ref{mt:query-fb}, but
results in a large improvement in objective value.

\section{Conclusion and Future Work}
In this work, we have presented deterministic,
linear-time algorithms for \nmon, which are the first linear-time
algorithms for this problem that yield
constant approximation ratio with high probability.
A natural question for future work
is if the ratio of $4 + \epsi$ could be improved in deterministic,
linear time; can we remove the $\log(n)$ factor of
\lrig ? Also, could the constant factors of \lrig be improved? 
Finally, heuristic improvements to our single-pass algorithm
obtained nearly state-of-the-art objective value empirically
while using fewer oracle queries than any other streaming algorithm,
frequently by an order of magnitude.



\bibliographystyle{plainnat}
\bibliography{main}

\clearpage

\clearpage
\onecolumn
\appendix


\section{Proofs for Section \ref{sec:unc}} \label{apx:unc}
\begin{proof}[Proof of Theorem \ref{thm:qo}.]
  Suppose \qo has received elements $\{e_1,\ldots,e_l\}$; and let
  $X, Y$ have their values after processing these elements; let
  $S = \argmax \{ f(X), f(Y) \}$. Let $O \subseteq \{ e_1, \ldots, e_l \}$ satisfy $f(O) = \unc\left( \{e_1, \ldots, e_l \} \right)$.
  We will show that $4f(S) \ge f(O)$.
  For each $o \in O \cap Y$, let $Y_{i(o)}$ denote the value of
  $Y$ at the beginning of the iteration in which $o$ was added to $Y$.
  We have
  \begin{align*}
    f( O \cup X ) - f(X) &\le \sum_{o \in O \setminus (X \cup Y)} \delta_o ( X ) + \sum_{o \in O \cap Y} \delta_o(X) \\
    &\le 0 + \sum_{o \in O \cap Y} \delta_o( Y_{i(o)} ) \le f(Y),
  \end{align*}
  where the second inequality follows from submodularity and the comparison on Line \ref{line:cmp-unc}.
  Analagously, $f(O \cup Y) - f(Y) \le f(X)$. Hence
  $$f(O) \le f(O \cup X) + f(O \cup Y) \le 2(f(X) + f(Y)) \le 4f(S),$$
  where the first inequality follows by submodularity, nonnegativity of $f$ and the fact that $X \cap Y = \emptyset$.
\end{proof}
\section{Proofs for Section \ref{sec:card}}
\begin{proof}[Proof of Theorem \ref{thm:lc}]
  Let $X, Y$ have their values after receiving a set $\uni$ of
  elements; let
  $O \subseteq \uni$ be an optimal solution to \nmon$(f, \uni, k)$.
  First, we will bound $f(O)$ in terms of $M_1 = \max \{ f(X), f(Y) \}$;
  subsequently, we will bound $M_1$ in terms of $M_2 = \max \{ f(X'), f(Y') \}$. Observe that $f(X),f(Y)$ do not decrease during the execution of the algorithm; so that, at any point during the execution, we have $\max \{ f(X), f(Y) \} \le M_1$.
  
  For each $o \in O \cap Y$, let $Y_{i(o)}$ denote the value of
  $Y$ at the beginning of the iteration in which $o$ was added to $Y$.
  We have
  \begin{align*}
    f( O \cup X ) &- f(X) \le \sum_{o \in O \setminus (X \cup Y)} \delta_o ( X ) + \sum_{o \in O \cap Y} \delta_o(X) \\
    &\le bM_1 + \sum_{o \in O \cap Y} \delta_o( Y_{i(o)} ) \le bM_1 + f(Y),
  \end{align*}
  where the second inequality follows from submodularity and the comparisons on Lines \ref{line:cmp-linear-card} and \ref{line:add-linear-card}.
  Analagously, $f(O \cup Y) - f(Y) \le bM_1 + f(X)$. Hence
  \begin{align*} 
    f(O) &\le f(O \cup X) + f(O \cup Y) \\
    &\le 2(f(X) + f(Y)) + 2bM_1 \\
    &\le (4 + 2b)M_1, \numberthis \label{ineq:lc-part1}
  \end{align*}
  where the first inequality follows by submodularity, nonnegativity of $f$ and the fact that $X \cap Y = \emptyset$.

  Next, we turn to the bound of $M_1$ in terms of $M_2$.
  Consider $X' \subseteq X$: we will show that
  $f(X) \le (1 + 1/b)f(X')$.
  First, if $|X'| < k$, then $X' = X$. So assume that
  $|X'| = \{x_1,\ldots,x_k\}$, where the order is by when
  these elements were added to $X$. Let $X'_i = \{x_1, \ldots, x_i \}$.
  For $i \in \{1,\ldots,k\}$, we have
  \begin{align*}
    f(X'_i) - f(X'_{i - 1}) &\ge \ff{(X \setminus X') \cup X'_i} 
                             -  \ff{(X \setminus X') \cup X'_{i - 1}} \numberthis \label{ineq:lc-1}\\
                            &\ge \frac{b}{k}\ff{ (X \setminus X') \cup X'_{i-1} } \numberthis \label{ineq:lc-2}\\
                            &\ge \frac{b}{k}\ff{X\setminus X'}, \numberthis \label{ineq:lc-3}
  \end{align*}
  where Inequality \ref{ineq:lc-1} follows from submodularity of $f$; Inequality \ref{ineq:lc-2} follows from the addition of $x_i$ on Line \ref{line:add-linear-card}; and Inequality \ref{ineq:lc-3} follows from the fact that $f(X)$ does not decrease during the execution of the algorithm. The summation of these inequalities yields $f(X') - f( \emptyset ) \ge b f(X \setminus X')$. By submodularity and nonnegativity of $f$,
  \begin{align*}
    \ff{X} \le \ff{X \setminus X'} + \ff{X'} \le (1 + 1/b)\ff{X'}.
  \end{align*}
  Symmetrically, $\ff{Y} \le (1 + 1/b) \ff{Y'}$, so we have
  $M_1 \le (1 + 1/b)M_2$. Together with Inequality \ref{ineq:lc-part1}, we have
  $$f(O) \le (4 + 2b)M_1 \le (4+2b)(1 + 1/b)M_2.$$
\end{proof}
\section{Proofs for Section \ref{sec:qs}} 
\subsection{Proof of Lemma \ref{lemm:general}} \label{apx:lemm-gen}
\begin{claim} \label{claim:elementary}
For any $y \ge 1$, $b > 0$, 
if $i \ge (k / b + 1) \log y$, then $(1 + b/k)^i \ge y$.
\end{claim}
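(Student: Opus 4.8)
The plan is to take logarithms and reduce the claim to the elementary inequality $\ln(1 + x) \ge x/(1 + x)$, valid for all $x > 0$.

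First, since $1 + b/k > 1$ and $y \ge 1$, applying the (monotone increasing) natural logarithm shows that $(1 + b/k)^i \ge y$ is equivalent to $i \ln(1 + b/k) \ge \ln y$. Because $\ln(1 + b/k) > 0$ and $\ln y \ge 0$, multiplying the hypothesis $i \ge (k/b + 1)\ln y$ through by $\ln(1 + b/k)$ yields $i \ln(1 + b/k) \ge (k/b + 1)\ln(1 + b/k)\,\ln y$. Hence it suffices to establish $(k/b + 1)\ln(1 + b/k) \ge 1$, since then the right-hand side is at least $\ln y$, completing the argument.

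Writing $x = b/k > 0$, the remaining inequality $(1/x + 1)\ln(1 + x) \ge 1$ is exactly $\ln(1 + x) \ge x/(1 + x)$. I would prove this by a short calculus argument: set $g(x) = \ln(1 + x) - x/(1 + x)$; then $g(0) = 0$ and $g'(x) = \tfrac{1}{1+x} - \tfrac{1}{(1+x)^2} = \tfrac{x}{(1+x)^2} \ge 0$ for $x \ge 0$, so $g$ is nondecreasing and therefore nonnegative on $[0, \infty)$. (Equivalently, one may substitute $t = x/(1 + x) \in [0,1)$ and invoke $-\ln(1 - t) \ge t$.)

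There is essentially no obstacle in this proof; the only point worth noting is the base of the logarithm. The argument above is stated for $\ln$, but it applies verbatim for any base $\beta \le e$ — in particular for $\log_2$, the base appearing elsewhere in Lemma \ref{lemm:general} — since replacing $\ln$ by $\log_\beta = \ln/\ln\beta$ with $\ln\beta \le 1$ only enlarges the left-hand side $(k/b + 1)\log_\beta(1 + b/k)$ and hence preserves the inequality $\ge 1$.
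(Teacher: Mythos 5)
Your proof is correct and follows essentially the same route as the paper: reduce to $(k/b+1)\log(1+b/k) \ge 1$ and invoke the inequality $\log x \ge 1 - 1/x$ (your $\ln(1+x) \ge x/(1+x)$ is exactly this at $x+1$, merely proved by calculus instead of cited). Your remark about the base of the logarithm is a sensible added precaution and does not change the argument.
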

\begin{proof}
Follows directly
from the inequality $\log x \ge 1 - 1/ x$ for $x > 0$.
\end{proof}
\begin{proof}[Proof of Property 1 of Lemma \ref{lemm:general}]
  If no deletion is made at element $i$
  of the sequence, then the result follows directly from
  $f(C_i^+) \ge (1 + b/k)f(C_i)$. So suppose 
  deletion of set $C_j$ from $C_i$ occurs. Observe that $C_{i + 1} = (C_i \setminus C_j) \cup \{ c_i \}$,
  because the deletion is triggered by the addition of $c_i$ to $C_i$.
  \begin{claim}
    From index $j$ to index $i$,
    there have been $\ell(k / b + 1) \log_2 (k) - 1 \ge (\ell - 1)(k / b + 1) \log_2 (k)$ additions and no deletions in the sequence.
  \end{claim}
  \begin{proof}
    The criterion for deletion at index $l$ is $|C_{l-1}^+| > 2 \ell( k/b + 1 )\log_2(k)$.
    Since initially $C_0 = \emptyset$, a deletion occurs only at
    indices $l$ for which $|C_{l-1}^+| = 2 \ell( k/b + 1 )\log_2(k) + 1$; so $|C_{l}| = \ell( k/b + 1 )\log_2(k) + 1$.
    Therefore, there are at least $\ell(k / b + 1) \log_2 (k) - 1$ indices between successive deletions.
  \end{proof}
    If $\ff{C_i \setminus C_j} \ge \ff{C_i}$, the lemma follows
    from submodularity and the condition
    $f(C_i^+) \ge (1 + b/k)f(C_i)$.
    Therefore, for the rest of the proof, suppose $\ff{C_i \setminus C_j} < \ff{C_i}$.

  It holds that
  $$\ff{ C_i \setminus C_j } \overset{(a)}{\ge} \ff{ C_i } - \ff{ C_j } \overset{(b)}{\ge} \left( 1 + \frac{b}{k} \right)^{(\ell - 1)(k / b + 1) \log k} \cdot f( C_j ) - f (C_j) \overset{(c)}{\ge} (k^{\ell - 1} - 1 ) f( C_j ),
  $$
  where Inequality a follows from submodularity and nonnegativity of $f$,
  Inequality b follows from the fact that each addition from $C_j$ to $C_i$ increases
  the value of $f(C)$ by a factor of at least $(1 + b/k)$, and Inequality c
  follows from Claim
  \ref{claim:elementary}.
  Therefore
  \begin{equation} \label{ineq:1}
    f(C_i) \le \ff{C_i \setminus C_j} + \ff{ C_j } \le \left( 1 + \frac{1}{k^{\ell - 1} - 1} \right) \ff{C_i \setminus C_j}.
  \end{equation}
  Next,
  \begin{equation} \label{ineq:2}
    \ff{ (C_i \setminus C_j) \cup \{ c_i \} } - \ff{ C_i \setminus C_j } \overset{(d)}{\ge} \ff{ C_i \cup \{ c_i \} } - \ff{ C_i } \overset{(e)}{\ge} b\ff{C_i} / k  \overset{(f)}{\ge} \ff{C_i \setminus C_j } / k, \end{equation}
  where Inequality d follows from submodularity; Inequality e is by
  the condition $f(C_i^+) \ge (1 + b/k)f(C_i)$;
  and Inequality f holds since $b \ge 1$ and $\ff{C_i} > \ff{C_i \setminus C_j}$.
  Finally, using Inequalities (\ref{ineq:1}) and (\ref{ineq:2}) as indicated below, we have
  \begin{align*}
    \ff{ C_{i + 1}} = \ff{ C_i \setminus C_j \cup \{ e_i \} } \overset{\text{By (\ref{ineq:2})}}{\ge} \left( 1 + \frac{1}{k} \right) \ff{ C_i \setminus C_j } \overset{\text{By (\ref{ineq:1})}}{\ge} \frac{1 + \frac{1}{k}}{1 + \frac{1}{k^{\ell - 1} - 1}} \cdot f(C_i) \ge f( C_i ), 
  \end{align*}
  where the last inequality follows since $k \ge 2$ and $\ell \ge 3$.
\end{proof}
\begin{proof}[Proof of Property 2 of Lemma \ref{lemm:general}]
\begin{lemma} \label{lemm:aplus}
  $\ff{C^*} \le \left( 1 + \frac{1}{k^\ell - 1} \right)\ff{ C_{n + 1} }. $
\end{lemma}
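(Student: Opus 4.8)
The plan is to prove the abstract Property~2 of Lemma~\ref{lemm:general}, namely $f(C^{*})\le\bigl(1+\tfrac{1}{k^{\ell}-1}\bigr)f(C_m)$ for any sequence meeting its hypotheses; Lemma~\ref{lemm:aplus} is then the instantiation to the (de-duplicated) sequence of values taken by $C\in\{A,B\}$, for which $C^{*}$ is exactly the set of all elements ever added to $C$ and $C_m=C_{n+1}$. If no deletion ever occurs then $C^{*}=C_m$ and we are done; otherwise let deletions happen at iterations $t_1<\dots<t_r$, removing blocks $R_1,\dots,R_r$. Two structural facts come first: (i) the deletion rule produces a sliding window — immediately after each deletion exactly $\ell(k/b+1)\log_2 k+1$ elements survive, and the window is refilled one element at a time until it next exceeds $2\ell(k/b+1)\log_2 k$, so each $R_s$ has size $\ell(k/b+1)\log_2 k$ and is preceded/followed by a clean run of that many additions with no intervening deletion; and (ii) the $R_s$ are pairwise disjoint and disjoint from $C_m$, with $C^{*}=R_1\sqcup\dots\sqcup R_r\sqcup C_m$.

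From (ii), submodularity, and nonnegativity, peeling the blocks off one at a time gives $f(C^{*})\le\sum_{s=1}^{r}f(R_s)+f(C_m)$, so it remains to bound the deleted mass $\sum_s f(R_s)$ by $\tfrac{1}{k^{\ell}-1}f(C_m)$. For one block: while the elements of $R_s$ sit in the running set, $\ell(k/b+1)\log_2 k$ further elements are appended, each accepted on Line~\ref{line:add} and hence multiplying $f(C)$ by at least $1+b/k$; Claim~\ref{claim:elementary} with $y=k^{\ell}$ then gives $f(C_{t_s}^{+})\ge k^{\ell}f(R_s)$, where $C_{t_s}^{+}$ is the full set just before deletion~$s$ (the base-$2$ logarithm in the window length versus the natural logarithm in Claim~\ref{claim:elementary} supplies exactly the slack that lets the exponent reach $\ell$ rather than $\ell-1$). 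Together with the submodular split $f(C_{t_s}^{+})\le f(C_{t_s+1})+f(R_s)$ and monotone growth $f(C_{t_s+1})\le f(C_m)$ (Property~1, already proven) this already yields $f(R_s)\le\tfrac{1}{k^{\ell}-1}f(C_m)$ for a single block; to avoid an extra factor of $r$ on summing, I would instead chain the same estimate through the surviving windows $N_s$ — $f(N_s)\le k^{-\ell}f(C_{t_{s+1}}^{+})$ and $f(C_{t_{s+1}}^{+})\le\tfrac{k^{\ell}}{k^{\ell}-1}f(N_{s+1})$, with $f(N_r)\le f(C_m)$ — which makes the $f(R_s)$ decay geometrically in $r-s$, so their sum is at most $\tfrac{1}{k^{\ell}-1}f(C_m)$.

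The main obstacle is the interplay of non-monotonicity with the nested deletions: because $f$ is submodular but not monotone, one cannot reason ``superset $\Rightarrow$ larger value'', so every bound on an $f(R_s)$ and the chaining across deletions has to be charged to additions the algorithm actually accepted, i.e.\ routed through the construction order exactly as in Inequality~\ref{ineq:1} of the proof of Property~1. The secondary, bookkeeping obstacle is pinning down fact~(i): one must verify that after every deletion the surviving window genuinely has length $\Theta(\ell(k/b+1)\log_2 k)$ and that the $R_s$ line up as consecutive segments of the insertion order, since it is this that guarantees the $k^{\ell}$ factor (hence the $1/(k^{\ell}-1)$ in the statement) and the clean telescoping above.
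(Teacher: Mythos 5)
Your overall route is the paper's own: decompose $C^*$ into the pairwise-disjoint deleted blocks plus $C_{n+1}$, peel them off by submodularity and nonnegativity to get $f(C^*)\le\sum_s f(R_s)+f(C_{n+1})$, and then show the deleted blocks' values decay geometrically because each accepted addition multiplies the running value by at least $1+b/k$ and Claim \ref{claim:elementary} converts the window length into a factor $k^{\ell}$ (this is the paper's Lemma \ref{lemm:aplus} together with Claim \ref{claim:deletion}). The off-by-one in your bookkeeping is immaterial but reversed: Line \ref{line:delete-A} keeps $\ell(k/b+1)\log_2 k$ elements, so each deleted block has $\ell(k/b+1)\log_2 k+1$ elements, not the other way around.

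The one step that does not deliver what you claim is the final chaining through the surviving windows $N_s$. Using $f(N_s)\le k^{-\ell}f(C^{+}_{t_{s+1}})$ together with $f(C^{+}_{t_{s+1}})\le \frac{k^{\ell}}{k^{\ell}-1}f(N_{s+1})$ loses a factor $\frac{k^{\ell}}{k^{\ell}-1}$ at every deletion, so your chain only yields $f(N_s)\le (k^{\ell}-1)^{-(r-s)}f(C_{n+1})$, hence $f(R_s)\le (k^{\ell}-1)^{-(r-s+1)}f(C_{n+1})$ and $\sum_s f(R_s)\le \frac{1}{k^{\ell}-2}f(C_{n+1})$, i.e.\ the bound $f(C^*)\le\left(1+\frac{1}{k^{\ell}-2}\right)f(C_{n+1})$, slightly weaker than the stated constant. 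The repair is to compare consecutive deleted blocks directly, as in Claim \ref{claim:deletion}: between the moment the running set equals $R_s$ and the moment it equals $R_{s+1}$ (or equals $C_{n+1}$, for $s=r$) there are at least $\ell(k/b+1)\log_2 k$ threshold-passing additions and exactly one deletion event, and by Property 1 of Lemma \ref{lemm:general} the deletion does not decrease the running value; hence $f(R_{s+1})\ge k^{\ell}f(R_s)$ and $f(C_{n+1})\ge k^{\ell}f(R_r)$, so $\sum_s f(R_s)\le f(C_{n+1})\sum_{j\ge 1}k^{-\ell j}=\frac{1}{k^{\ell}-1}f(C_{n+1})$, which is exactly the claimed inequality. (The weaker constant would still suffice downstream after adjusting $\ell$, but as written your last sentence does not follow from the chain you set up.)
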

\begin{proof}
  Observe that $C^* \setminus C_{n + 1}$ may be written as the
  union of pairwise disjoint sets, each of which is size
  $\ell(k/b+1) \log_2(k)$.
  Suppose there were $m$ sets deleted during the sequence; write 
  $C^* \setminus C_{n + 1} = \{ D^i : 1 \le i \le m \}$, ordered such that $i < j$
  implies $D^i$ was deleted after $D^j$ (the reverse order in which they were deleted); finally, let $D^0 = C_{n + 1}$.
  \begin{claim} \label{claim:deletion}
    Let $0 \le i \le m$. Then $\ff{D^i} \ge k^{\ell} \ff{D^{i+1}}$.
  \end{claim}
  \begin{proof}
    There are at least $\ell(k/b + 1) \log k + 1$ elements added to $C$ 
    and exactly one deletion
    event during the period between starting when $C = D^{i + 1}$ until $C = D^i$. 
    Moreover, each addition except possibly one (corresponding to the deletion event)
    increases $f(C)$ by a factor
    of at least $1 + b/k$. Hence, by Lemma \ref{lemm:general} and Claim \ref{claim:elementary},
  $\ff{D^i} \ge k^{\ell} \ff{D^{i+1}}$.
  \end{proof}
  By Claim \ref{claim:deletion}, for any $0 \le i \le m$,
  $\ff{ C_{n + 1} = D^0 } \ge k^{\ell i} \ff{ D^i }$.
  Thus, 
  \begin{align*}
    \ff{ C^* } \le \ff{ C^* \setminus C_{n + 1} } + \ff{ C_{n + 1} } &\le \sum_{i = 0}^m \ff{ D^i } && \tag{Submodularity, Nonnegativity of $f$}\\
    &\le \ff{ C_{n + 1} } \sum_{i = 0}^\infty k^{-\ell i} \tag{Claim \ref{claim:deletion}} \\
                                                                     &=  \ff{ C_{n + 1} } \left( \frac{1}{1 - k^{-\ell }} \right) \tag{Sum of geometric series} 
  \end{align*} \renewcommand{\qedsymbol}{}
\end{proof} \end{proof}
\subsection{Proof of Lemma \ref{lemm:two}}
\begin{proof}
  Since $o \in O \cap D^*$, we know that
  $o$ is added to the set $D$ during iteration $i(o)$;
  therefore, by
  the comparison on Line \ref{line:cmp} of Alg. \ref{alg:quickstream},
  it holds that
  \begin{equation} \label{eq:marge}
    \marge{o}{C_{i(o)}} \le \marge{o}{D_{i(o)}}.
  \end{equation}
  If no deletion from $D$ occurs
  during iteration $i(o)$, the lemma follows from
  the fact that
  $\Delta D_{i(o)} = \marge{o}{D_{i(o)}}.$

  For the rest of the proof, suppose that a deletion from $D$
  does occur during iteration $i(o)$.
  For convenience, denote by $D^-$ 
  the value of $D$ after the deletion from $D_{i(o)}$.
  By Inequality \ref{ineq:1} in the proof of Lemma
  \ref{lemm:general}, 
  it holds that
  \begin{equation} \label{eq:gamma}
    \ff{D_{i(o)}} \le (1 + \gamma ) \ff{D^-}
  \end{equation}
  Hence,
  \begin{align*}
    \Delta D_{i(o)} &+ \gamma \ff{D_{n + 1}} = \ff{D^- + o } - \ff{ D_{i(o)} } + \gamma \ff{D_{n + 1}} \\
    &\ge (1 + \gamma)\ff{ D^- + o } - \ff{D_{i(o)}} \numberthis \label{ineq:aa} \\
    &\ge (1 + \gamma)\ff{ D^- + o } - (1 + \gamma) \ff{D^-} \numberthis \label{ineq:bb} \\
    &= (1 + \gamma) \marge{o}{D^-} \\
    &\ge (1 + \gamma) \marge{o}{D_{i(o)}} \numberthis \label{ineq:cc}\\
    &\ge (1 + \gamma) \marge{o}{C_{i(o)}}. \numberthis \label{ineq:dd}
  \end{align*}
  where Inequality
  \ref{ineq:aa} follows from Lemma \ref{lemm:general},
  Inequality \ref{ineq:bb} follows from Inequality \ref{eq:gamma},
  Inequality \ref{ineq:cc} follows from submodularity of $f$,
  and Inequality \ref{ineq:dd} follows from Inequality \ref{eq:marge}.
\end{proof}
\subsection{Proof of Lemma \ref{lemm:uni-bound}}
\begin{proof}
  \begin{align*}
    &\ff{ C^* \cup O } - \ff{C^*} \le \sum_{o \in O \setminus C^*} \marge{o}{C^*} \numberthis \label{iq:a}\\
    &\le \sum_{o \in O \setminus C^*} \marge{o}{C_{i(o)}} \numberthis \label{iq:b}\\
    &\le \sum_{o \in O \setminus C^*} \frac{b \ff{C_{i(o)}}}{k} + \frac{\Delta D_{i(o)} + \gamma \ff{D_{n + 1}}}{1 + \gamma} \numberthis \label{iq:c}\\
    &\le b \ff{C_{n + 1}} + \frac{1 + k\gamma}{1 + \gamma} \ff{D_{n+1}} \numberthis \label{ineq:ten}\\
    &\le b\ff{C_{n + 1}} + (1 + k\gamma) \ff{D_{n+1}},
  \end{align*}
  where Inequalities \ref{iq:a}, \ref{iq:b} follow
  from submodularity of $f$.
  Inequality \ref{iq:c} holds by the following
  argument: let $o \in O \setminus C^*$. If $o \not \in D^*$,
  then it holds that $\marge{o}{C_{i(o)}} < b \ff{ C_{i(o)} } / k$
  by Line \ref{line:cmp}. Otherwise, if $o \in D^*$, Lemma 
  \ref{lemm:two} yields $\marge{o}{C_{i(o)}} \le
  \frac{\Delta D_{i(o)} + \gamma \ff{D_{n + 1}} }{1 + \gamma}.$
  Inequality \ref{ineq:ten} follows from the fact that
  $|O \setminus C^*| \le k$, Lemma \ref{lemm:general},
  and the fact that
  $\ff{D_{n+1}} - \ff{\emptyset} = \sum_{i=0}^{n} \Delta D_i$,
  where each $\Delta D_i \ge 0$.
\end{proof}
\subsection{Justification of choice of $\ell$}
\begin{lemma} \label{lemm:ell-choice}
  Let $\epsi > 0$, and let $\alpha = \ratioPrime$.
  Choose $\ell \ge 1 + \log ( (6\alpha)/\epsi + 1 )$, 
  and let $\gamma = 1 / (k^\ell - 1)$.
Then
$$2(k + 2) \gamma < \epsi \alpha^{-1} .$$
\end{lemma}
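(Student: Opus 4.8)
The plan is to rewrite the target inequality in the equivalent form $k^{\ell} - 1 > 2\beta(k+2)/\epsi$ (recall $\gamma = 1/(k^{\ell}-1)$) and then verify this directly from the choice of $\ell$ using only elementary estimates. Since the quantity $2(k+2)\gamma$ is decreasing in $\ell$ while the right-hand side $\epsi\beta^{-1}$ does not depend on $\ell$, it suffices to prove the inequality at the smallest admissible value, $\ell = 1 + \log_2\!\big((6\beta)/\epsi + 1\big)$; here I read $\log$ as $\log_2$, consistently with its use in Alg.~\ref{alg:quickstream}, and note that the algorithm's actual choice $\lceil \log_2((6\beta)/\epsi+1)\rceil + 3$ is even larger, so proving the bound at this $\ell$ suffices.

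Next I would record two trivial facts: $\beta = 1/\big(1-(1+b/k)^{-k}\big) > 1$ because $(1+b/k)^{-k}\in(0,1)$, and $k \ge 2$ (the small-$k$ instances being handled separately, as noted after Theorem~\ref{thm:qs}). From $\ell = 1 + \log_2((6\beta)/\epsi+1)$ we get $2^{\ell-1} = (6\beta)/\epsi + 1$, and since $k \ge 2$ and $\ell \ge 1$,
\[
  k^{\ell} \;=\; k\cdot k^{\ell-1} \;\ge\; k\cdot 2^{\ell-1} \;=\; k\left(\frac{6\beta}{\epsi}+1\right).
\]
It then remains to check the purely algebraic claim $k\big(6\beta/\epsi + 1\big) - 1 > 2\beta(k+2)/\epsi$; the difference of the two sides simplifies to $(k-1)\big(4\beta/\epsi + 1\big)$, which is strictly positive since $k \ge 2$, $\beta > 1$, and $\epsi > 0$. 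Chaining this with the displayed bound on $k^{\ell}$ yields $k^{\ell} - 1 > 2\beta(k+2)/\epsi$, i.e., $2(k+2)\gamma < \epsi\beta^{-1}$.

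There is essentially no hard step: the whole argument is a short chain of elementary inequalities, and the main work is bookkeeping. The only two points that genuinely require care are (i) treating the logarithm with the same base as the pseudocode, so that it is $2^{\ell-1}$ (not $e^{\ell-1}$) that the choice of $\ell$ controls — this distinction matters precisely in the edge case $k = 2$ — and (ii) not discarding the extra factor of $k$ produced by the leading ``$1+$'' in the exponent of $\ell$, since that factor is exactly what allows a right-hand side growing linearly in $k$ to be absorbed.
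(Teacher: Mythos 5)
Your proof is correct and follows essentially the same route as the paper's: both reduce the claim to the key comparison $(2k+4)\beta/\epsi + 1 \le k\left(6\beta/\epsi + 1\right)$ (your difference $(k-1)(4\beta/\epsi+1)\ge 0$ is exactly the paper's bound $(2k+4)\beta/\epsi + 1 \le (6\beta/\epsi+1)k$), with the paper phrasing the final step by dividing logarithms by $\log k$ while you work directly with $k^{\ell} \ge k\cdot 2^{\ell-1}$. Your explicit handling of the log base and the $k\ge 2$ requirement is slightly more careful than the paper's ``since $k\ge 1$'' remark, but the substance is the same.
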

\begin{proof}
  First, one may verify that $\ell > \frac{\log \left( \frac{(2k + 4)\alpha }{\epsi} + 1 \right) }{ \log k } \implies 2(k + 2) \gamma < \epsi \alpha^{-1} $. 
  Next, since $k \ge 1$,
  \begin{align*}
    \frac{1}{\log k} \left( \log \left( \frac{(2k + 4)\alpha }{\epsi} + 1 \right)\right)
    &\le \frac{1}{\log k} \left( \log \left( \left( \frac{(2 + 4)\alpha}{\epsi } + 1 \right) k \right) \right) \\
    &= \frac{1}{\log k} \left( \log \left( \frac{6 \alpha }{\epsi } + 1\right) + \log(k)\right) \\
    &\le 1 + \log\left( \frac{6 \alpha}{\epsi } + 1 \right).
  \end{align*}
  Hence it suffices to take $\ell$ greater than the last expression.
\end{proof}
\section{Proofs for Section \ref{sec:brlarge}} \label{apx:proofs-br}
\begin{proof}
  To establish the approximation ratio, consider
  first the
  case in which $C \in \{A,B\}$ satisfies $|C| = k$ 
  after the first iteration of the \textbf{while}
  loop. Let $C = \{c_1,\ldots,c_k\}$ be ordered
  by the order in which elements were added to $C$
  on Line \ref{line:addbr}, let $C_i = \{c_1,\ldots,c_i\}$,
  $C_0 = \emptyset$, 
  and let $\Delta C_i = \ff{C_i} - \ff{C_{i - 1}}$.
  Then
  $f(C) = \sum_{i=1}^k \Delta C_i \ge \Gamma / (4 \alpha) \ge \opt / 4,$
  and the ratio is proven.

  Therefore, for the rest of the proof,
  suppose $|A| < k$ and $|B| < k$
  immediately after the execution of the first iteration
  of the \textbf{while} loop.
  First, let $C, D \in \{A,B\}$, such that $C \neq D$
  have their values at the termination of the algorithm.
  For the definition of $D'$ and the proofs of the next two
  lemmata, see Appendix \ref{apx:proofs-br}. These lemmata
  together establish an upper bound on $\marge{O}{C}$ in
  terms of the gains of elements added to $C$ and $D$.
  \begin{lemma} \label{lemm:five}
    $$\sum_{o \in O \setminus (C \cup D')} \marge{o}{C} \le (1 + 2\epsi)\sum_{i : c_i \not \in O} \Delta C_i + \epsi \opt / 16.$$
  \end{lemma}

  \begin{lemma} \label{lemm:six}
    $$\marge{O}{C} \le \sum_{i: d_i \in O} \Delta D_{i} + (1 + 2 \epsi )\sum_{i : c_i \not \in O} \Delta C_i + \epsi \opt / 16.$$
  \end{lemma}
  Applying Lemma \ref{lemm:six} with $C=A$ and separately with
  $C=B$ and summing the resulting inequalities yields
  \begin{align*}
    \marge{O}{A} + \marge{O}{B} &\le (1 + 2\epsi) \left[ \sum_{i = 1}^k \Delta B_{i} + \sum_{i = 1}^k \Delta A_{i} \right] + \frac{\epsi \opt}{8}\\
    &= (1 + 2\epsi)\left[\ff{A} + \ff{B}\right] + \epsi \opt / 8.
  \end{align*}
  Thus,
  \begin{align*}
    \ff{O} &\le \ff{O \cup A} + \ff{O \cup B} \\
           &\le (2 + 2\epsi)( \ff{A} + \ff{B} ) + \epsi \opt / 8,
  \end{align*}
  from which the result follows.
\end{proof}

  Let $C, D \in \{A,B\}$, such that $C \neq D$
  have their values at the termination of the algorithm.
  If $|C| = k$, let $D'$ have the value of its corresponding variable 
  when the $k$th element is added to $C$; otherwise, if $|C| < k$
  let $D' = D$.
  \begin{proof}[Proof of Lemma \ref{lemm:five}]
    Suppose $|C| = k$. Let $\tau'$ be the value of $\tau$
    during the iteration of the \textbf{while} loop in which
    the last element was added to $C$.
    Let $o \in O \setminus (C \cup D')$. Then, since $o$ was not added to $C$ or $D'$
    during the previous iteration of the \textbf{while} loop, $\marge{o}{C} < \tau' / (1 - \epsi)$. Further, $\Delta C_i \ge \tau'$ for all
    $i$. Hence,
    \begin{align*}
      \sum_{o \in O \setminus (C \cup D')} \marge{o}{C} &\le \frac{1}{1- \epsi}\sum_{i : c_i \not \in O} \Delta C_i \\
                                                         &\le (1 + 2\epsi) \sum_{i : c_i \not \in O} \Delta C_i.
    \end{align*}
    Next, suppose that $|C| < k$. In this case, the last threshold
    $\tau$ of the \textbf{while} loop ensures that $\sum_{o \in O \setminus (C \cup D')} \marge{o}{C} < \epsi \Gamma / 16 \le \epsi \opt / 16.$  
  \end{proof}
  \begin{proof}[Proof of Lemma \ref{lemm:six}]
  Observe that
  \begin{align}
    \marge{O}{C} &\le \sum_{o \in O \cap D'} \marge{o}{C} + \sum_{o \not \in (C \cup D')} \marge{o}{C} \label{ineq:19}\\
    &\le \sum_{i : d_i \in O} \Delta D_i + \sum_{o \not \in (C \cup D')} \marge{o}{C}, \label{ineq:20}
  \end{align}
  where Inequality \ref{ineq:19} follows from submodularity
  and Inequality \ref{ineq:20} follows from submodularity
  and the comparison on Line \ref{line:cmpbr} for each element
  $o \in O \cap D'$. From Inequality \ref{ineq:20}, the lemma
  follows from application of Lemma \ref{lemm:five}.
  \end{proof}

  \section{Proofs for Section \ref{sec:lrig}} \label{apx:lrig}
  \begin{proof}[Proof of Theorem~\ref{thm:lrig}]
Suppose that, at each iteration $m$, we select a random set from the 
correct guess of $u$ in Alg.~\ref{thm:ig}.
Then, the two inequalities of Theorem~\ref{thm:ig} holds for each iteration,
and it happens with probability $(\ell+1)^{-\ell}$.

By unfixing $G_{m-1}$ and utilizing Inequality (1) of Theorem~\ref{thm:ig}
multiple times, it holds that
\begin{align*}
\ex{\ff{O \cup G_m}} &\ge \left(1-\frac{1}{\ell}\right)\ex{\ff{O \cup G_{m-1}}}\\
&\ge \left(1-\frac{1}{\ell}\right)^m f(O).
\end{align*}
Then, the approximation ratio can be bounded as follows
\begin{align*}
\ex{f(G_\ell)} &\ge \parth{1-\frac{1}{1+\ell}}
\ex{f(G_{\ell-1})} + \frac{1}{1+\ell}
\parth{1-\frac{1}{\ell}} \ex{f(O\cup G_{\ell-1})}\\
& \ge \parth{1-\frac{1}{\ell}}
\ex{f(G_{\ell-1})} + \frac{1}{1+\ell}
\parth{1-\frac{1}{\ell}}^\ell f(O)\\
&\ge \frac{\ell}{1+\ell}
\parth{1-\frac{1}{\ell}}^\ell f(O)\\
&\ge \parth{1-\frac{2}{\ell+1}}e^{-1}f(O) \tag{$\parth{1-\frac{1}{\ell}}^{\ell-1}\ge e^{-1}$}\\
&= \frac{1}{e+\epsi}f(O).\tag{$\ell = \frac{2e}{\epsi}+1$}
\end{align*}

\end{proof}
  
\section{Derandomized and Fast Algorithm for \lrig}\label{sec:derand-alg}
  \begin{algorithm}[t!]
    \caption{An $(e+\epsi)$-approximation algorithm for \nmon}\label{alg:lrtig}
    \begin{algorithmic}[1]
    \Procedure{\lrtig}{$f, k, \epsi$}
    \State \textbf{Input:} oracle $f$, cardinality constraint $k$, $\epsi>0$,
    \State \textbf{Initialize } 
    $\epsi' \gets \frac{\epsi}{2(e+1)(e+\epsi)}$,
    $\ell \gets \frac{4e}{\epsi}+3$,
    $G_0 \gets \{\emptyset\}$
    \For{$m \gets 1$ to $\ell$}
      \State $G_{m} \gets \emptyset$,
      \For{$S_{m-1} \in G_{m-1}$},
      \State $\{a_1, \ldots , a_\ell\} \gets$ top $\ell$ elements
      in $\mathcal{U} \setminus S_{m-1}$ with respect to marginal gains on $S_{m-1}$ 
      \For{$u \gets 0$ to $\ell$ in parallel} \Comment{$\ell+1$ guesses of max singleton in $O$}
        \If{$u = 0$}
        \State $M \gets \marge{a_\ell}{S_{m-1}}$
        \State $A_{u,l} \gets S_{m-1}\cup \{a_l\}$, for any $l \in [\ell]$ \label{line:A_ini}
        \Else 
        \State $M \gets \marge{a_u}{S_{m-1}}$
        \State $A_{u,l} \gets S_{m-1} \cup \{a_u\}$, for any $l \in [\ell]$ \label{line:A_ini_2}
        \EndIf
        \State $\tau_l \gets M$, $I_l\gets \textbf{true}$, for any $l \in [\ell]$
        \While{$\lor_{l=1}^\ell I_l $}
          \For{$i \gets 1$ to $\ell$}
          \If{$I_i$}
          \State $V\gets \uni\setminus \cup_{l=1}^\ell A_{u,l}$
            \State $A_{u,i}, \tau_i \gets \add(f, V, A_{u,i},\epsi', \tau_i, \frac{\epsi' M}{k\ell})$
            \If{$|A_{u,i}\setminus S_{m-1}|=k/\ell \lor \tau_i < \frac{\epsi' M}{k\ell}$}
             \State $I_i \gets \textbf{false}$
            \EndIf
            \EndIf
          \EndFor
          \EndWhile
          \State $G_m \gets G_m \cup \{A_{u, 1}, A_{u, 2}, \ldots, A_{u, \ell}\}$
      \EndFor
    \EndFor
    \EndFor
    \State \textbf{return} $S^* \gets \argmax \{f(S_\ell): S_\ell \in G_\ell\}$
    \EndProcedure
\end{algorithmic}
\end{algorithm}
\begin{algorithm}[t!]
    \caption{Add one element from $V$ to $A$ with threshold in between $[\tau_{\min},\tau]$}
    \begin{algorithmic}[1]
    \Procedure{\add}{$f, V, A, \epsi, \tau, \tau_{\min}$}
    \State \textbf{Input:} oracle $f$,
    candidate set $V$,
    solution set $A$,
    $\epsi$, threshold $\tau$,
    and its lower bound $\tau_{\min}$
    \While{$\tau \ge \tau_{\text{min}}$}
    \For{$x \in V$}
      \If{$\marge{x}{A} \ge \tau$}
      \textbf{return} $A \gets A \cup \{x\}$, $\tau$
      \EndIf
    \EndFor
    \State $\tau \gets (1-\epsi)\tau$
    \EndWhile
    \State \textbf{return} $A$, $\tau$
    \EndProcedure
\end{algorithmic}
\end{algorithm}
In this section, we introduce the derandomized and fast version of \lrig,
which gives a deterministic $(e+\epsi)$-approximation with $\oh{n\log (k)}$ query complexity.
\begin{theorem}\label{thm:lrtig-de}
Let $\epsi \ge 0$, and
$(f,k)$ be an instance of \nmon,
with optimal solution value \opt.
Algorithm \lrtig (Alg.~\ref{alg:lrtig}) outputs a set $S^*$
with $\oh{\epsi^{-2/\epsi-1}n\log (k)}$ queries such that
$\opt \le (e+\epsi)f(S^*)$.
\end{theorem}
\textbf{Algorithm Speedup.}
Inspired by \textsc{AcceleratedGreedy} in~\cite{Badanidiyuru2014},
we replace the greedy selection in Alg.~\ref{alg:rig} with the threshold selection.
Instead of querying $\oh{nk/\ell}$ times for each candidate set $A_{u,\ell}$
during the while loop on Line~\ref{line:rig-while-begin}-\ref{line:rig-while-end},
threshold procedure only queries $\oh{n}$ times for each set with $\oh{\log(k)}$ iterations.

\textbf{Derandomization.}
Unlike typical randomized algorithms that make choices
from $\oh{n}$ possibilities,
the only randomization in
\lrig is that it randomly picks a set from $\ell(\ell+1)$ candidates.
Therefore, 
it is possible to derandomize the algorithm easily
by retaining all the candidate sets.
Then, the algorithm returns the solution with deterministic $(e+\epsi)-$approximation
according to $\oh{\ell^{2\ell}}$ candidate sets.

\begin{proof}[Proof of Theorem~\ref{thm:lrtig-de}]
\begin{lemma}\label{lemma:lrtig_rec1}
At iteration $m$, given $S_{m-1}$,
there exists $0 \le u \le \ell$ such that, for any $1 \le l \le \ell$,
it holds that 
\begin{align*}
\left(\frac{\ell}{1-\epsi'}+1\right)\func{f}{A_{u,l}}
\ge \func{f}{O \cup A_{u,l}}+
\frac{\ell}{1-\epsi'}\func{f}{S_{m-1}}
-\frac{\epsi'}{\ell} f(O).
\end{align*}
and for any $1 \le l_1 \neq l_2 \le \ell$,
\[\parth{O \cup A_{u,l_1}} \cap \parth{O \cup A_{u,l_2}} = O \cup S_{m-1}.\]
\end{lemma}
\begin{proof}
At iteration $m$ of the first for loop,
condition on $S_{m-1}$ of the second for loop.
For any $0 \le u \le \ell$, by submodularity, it holds that 
$\marge{a_u}{S_{m-1}}\le f(a_u) \le f(O)$.
Let $o_{\text{max}} = \argmax_{o \in O\setminus S_{m-1}}\marge{o}{S_{m-1}}$.
We consider the following two cases.

\textbf{Case  $(O\setminus S_{m-1}) \cup \{a_1, \ldots, a_\ell\}=\emptyset$.}
In this case, we analyze that
lemma holds with sets $\{A_{0,l}\}_{l=1}^\ell$.

For any $l \in [\ell]$, 
let $A_{0,l}^{(j)}$ be $A_{0,l}$ after we add $j$ elements into it,
$\tau_l^{(j)}$ be $\tau_l$ when we adopt $j$-th elements into $A_{0,l}$,
and $\tau_l^{(1)} = M$.
By Line~\ref{line:A_ini}, it holds that $A_{0,l}^{(1)}=S_{m-1}\cup \{a_l\}$.
Since $(O\setminus S_{m-1}) \cup \{a_1, \ldots, a_\ell\}=\emptyset$,
and we add elements to each set in turn,
we can order $O\setminus S_{m-1} = \{o_1, o_2, \ldots\}$ 
such that the first $\ell$ elements are not selected by any set 
before we get $A_{0,l}^{(1)}$,
the next $\ell$ elements are not selected in any set before we get $A_{0,l}^{(2)}$,
and so on.
Therefore, 
for any $j \le |A_{0,l} \setminus S_{m-1}|$ 
and $\ell (j-1)+1 \le i \le \ell j$,
$o_i$ are filtered out by $A_{0,l}$
with threshold $\tau_l^{(j)}/(1-\epsi')$, 
which follows that 
$\marge{o_i}{A_{0,l}^{(j)}} < \tau_l^{(j)}/(1-\epsi') \le (f(A_{0,l}^{(j)})- f(A_{0,l}^{(j-1)}))/(1-\epsi')$;
for any $\ell |A_{0,l} \setminus S_{m-1}| < i \le |O \setminus S_{m-1}|$,
$o_i$ are filtered out by $A_{0,l}$
with threshold $\frac{\epsi' M}{k\ell}$, 
which follows that 
$\marge{o_i}{A_{0,l}} < \epsi' M/k$.
Thus,
\begin{align*}
&\func{f}{O \cup A_{0,l}} - \func{f}{A_{0,l}}
\le \sum_{o \in O\setminus A_{0,l}}\marge{o}{A_{0,l}}\tag{submodularity}\\
&\le \sum_{o \in O\setminus S_{m-1}}\marge{o}{A_{0,l}}\tag{$S_{m-1} \subseteq A_{0,l}$}\\
&\le \sum_{i = 1}^{\ell} \marge{o_i}{A_{0,l}^{(1)}}
+ \sum_{i = \ell+1}^{2\ell} \marge{o_i}{A_{0,l}^{(2)}}+\ldots 
+ \sum_{i > \ell |A_{0,l} \setminus S_{m-1}|} \marge{o_i}{A_{0,l}}
\tag{submodularity}\\
&\le \ell \cdot \frac{\func{f}{A_{0,l}}-\func{f}{S_{m-1}}}{1-\epsi'}
+\frac{\epsi' M}{\ell}\\
&\le \ell \cdot \frac{\func{f}{A_{0,l}}-\func{f}{S_{m-1}}}{1-\epsi'}
+\frac{\epsi'}{\ell}f(O).
\end{align*}
Since, $A_{u,l_1} \cap A_{u,l_2} = S_{m-1}$
for any $1 \le l_1 \neq l_2 \le \ell$,
it holds that 
$\parth{O \cup A_{u,l_1}} \cap \parth{O \cup A_{u,l_2}} = O\cup S_{m-1}$.

\textbf{Case  $(O\setminus S_{m-1}) \cup \{a_1, \ldots, a_\ell\}\neq\emptyset$.}
Then $o_{\text{max}} \in \{a_1, \ldots, a_\ell\}$.
Suppose that $a_u = o_{\text{max}}$.
We analyze that lemma holds with sets $\{A_{u,l}\}_{l=1}^\ell$.

Similar to the analysis of the previous case, 
let $A_{0,l}^{(j)}$ be $A_{0,l}$ after we add $j$ elements into it,
$\tau_l^{(j)}$ be $\tau_l$ when we adopt $j$-th elements into $A_{0,l}$,
and $\tau_l^{(1)} = M$.
By Line~\ref{line:A_ini_2}, it holds that $A_{u,l}^{(1)}=S_{m-1}\cup \{a_u\}$.
Then, we can order $O\setminus \left(S_{m-1}\cup \{a_u\}\right) = \{o_1, o_2, \ldots\}$ 
such that the first $\ell$ elements are not selected by any set 
before we get $A_{u,l}^{(1)}$,
the next $\ell$ elements are not selected in any set before we get $A_{u,l}^{(2)}$,
and so on.
Therefore, we can get the same result as the previous case for sets
$\{A_{u,l}\}_{l=1}^\ell$.
Since, $a_u \in O$, and
$A_{u,l_1} \cap A_{u,l_2} = S_{m-1}\cup \{a_u\}$
for any $1 \le l_1\neq l_2 \le \ell$,
it holds that 
$\parth{O \cup A_{u,i_1}} \cap \parth{O \cup A_{u,i_2}} = O \cup S_{m-1}$.

Overall, since either one of the above cases happens,
the lemma holds.
\end{proof}
\begin{lemma}\label{lemma:lrtig_rec2}
For any $1 \le m \le \ell$,
there exists $G_m' \subseteq G_m$ such that
\begin{align*}
&\frac{\sum_{S_m \in G_m'} f(S_m)}{|G_m'|} \ge \parth{\frac{m}{\frac{\ell}{1-\epsi'}+1}
\parth{1-\frac{1}{\ell}}^m - \epsi' \parth{1-\parth{1-\frac{1}{\ell}}^m}}\opt,\\
&\frac{\sum_{S_m \in G_m'} f(O \cup S_m)}{|G_m'|} \ge \parth{1-\frac{1}{\ell}}^m \opt.
\end{align*}
\end{lemma}
\begin{proof}
Given any $S_{m-1} \in G_{m-1}$, let Lemma~\ref{lemma:lrtig_rec1} holds with $u_0$.
Then,
\begin{align*}
&\frac{1}{\ell}\sum_{l=1}^\ell \func{f}{O \cup A_{u_0,l}}
\ge \frac{1}{\ell} \parth{(\ell-1)\func{f}{O \cup S_{m-1}} + 
\func{f}{O\cup \parth{\cup_{i=1}^\ell A_{u_0,l}}}}
\ge \left(1-\frac{1}{\ell}\right)\func{f}{O \cup S_{m-1}}
\numberthis \label{inq:Ounion}\\
\Rightarrow &\frac{1}{\ell}\sum_{l=1}^{\ell}f(A_{u_0, l})
\ge \frac{\frac{\ell}{1-\epsi'}}{\frac{\ell}{1-\epsi'}+1}f(S_{m-1}) +
\frac{1}{\frac{\ell}{1-\epsi'}+1}\frac{1}{\ell}\sum_{l=1}^\ell \func{f}{O \cup A_{u_0,l}}
- \frac{\epsi'}{\ell}f(O)\\
&\ge \parth{1-\frac{1}{\ell}}f(S_{m-1}) +
\frac{1-\frac{1}{\ell}}{\frac{\ell}{1-\epsi'}+1}f(O \cup S_{m-1})
- \frac{\epsi'}{\ell}f(O) \numberthis \label{inq:sol}
\end{align*} 
When $m=1$ and $S_0 = \emptyset$, based on the analysis above,
there exists $\ell$ sets in $G_1$ as $G_1'$ such that
\begin{align*}
 &\frac{\sum_{S_1 \in G_1'} f(O \cup S_1)}{\ell} \ge \left(1-\frac{1}{\ell}\right)\opt,\\
 & \frac{\sum_{S_1 \in G_1'} f(S_1)}{\ell} \ge \parth{\frac{1-\frac{1}{\ell}}{\frac{\ell}{1-\epsi'}+1}-\frac{\epsi'}{\ell}}\opt.
 \end{align*}
 Lemma holds with $m=1$ immediately.

 Suppose that lemma holds with iteration $1, \ldots, m-1$.
For any $S_{m-1} \in G_{m-1}'$,
there exists $\ell$ sets which follow Inequality~\ref{inq:Ounion} and~\ref{inq:sol},
and form the set $G_m'$.
Then,
\begin{align*}
\frac{\sum_{S_m \in G_m'} f(O \cup S_m)}{|G_m'|} &\ge 
\left(1-\frac{1}{\ell}\right) \frac{\sum_{S_{m-1} \in G_{m-1}'} f(O \cup S_{m-1})}{|G_{m-1}'|}
\ge \parth{1-\frac{1}{\ell}}^m \opt,
\end{align*}
and
\begin{align*}
\frac{\sum_{S_m \in G_m'} f(S_m)}{|G_m'|} &\ge 
\parth{1-\frac{1}{\ell}}\frac{\sum_{S_{m-1} \in G_{m-1}'} f(S_{m-1})}{|G_{m-1}'|}
 +\frac{1-\frac{1}{\ell}}{\frac{\ell}{1-\epsi'}+1}\frac{\sum_{S_{m-1} \in G_{m-1}'} f(O \cup S_{m-1})}{|G_{m-1}'|}
- \frac{\epsi'}{\ell} f(O)\\
&\ge \parth{\frac{m}{\frac{\ell}{1-\epsi'}+1}
\parth{1-\frac{1}{\ell}}^m - \epsi' \parth{1-\parth{1-\frac{1}{\ell}}^m}}\opt.
\end{align*}
\end{proof}
By Lemma~\ref{lemma:lrtig_rec2}, the approximation ratio for \lrtig can be bounded as follows,
\begin{align*}
f(S^*) &\ge \frac{\sum_{S_\ell \in G_\ell'} f(S_\ell)}{|G_\ell'|} \\
&\ge \parth{\frac{\ell}{\frac{\ell}{1-\epsi'}+1}
\parth{1-\frac{1}{\ell}}^\ell - \epsi' \parth{1-\parth{1-\frac{1}{\ell}}^\ell}}\opt\\
&\ge \parth{\frac{\ell-1}{\frac{\ell}{1-\epsi'}+1}
e^{-1} - \epsi' }\opt \tag{$\parth{1-\frac{1}{x}}^{x-1}\ge e^{-1}$}\\
&\ge \frac{1}{e+\epsi} \opt \numberthis \label{inq:num-cal}
\end{align*}
\end{proof}
\begin{proof}[Proof of Inequality~\ref{inq:num-cal}]
With $\epsi'= \frac{\epsi}{2(e+1)(e+\epsi)}$
and $\ell= \frac{4e}{\epsi}+3$,
\begin{align*}
\frac{\ell-1}{\frac{\ell}{1-\epsi'}+1} e^{-1} - \epsi' -\frac{1}{e+\epsi}
&= \frac{1}{\frac{1}{1-\epsi'}+\frac{1}{\ell}}\parth{\parth{1-\frac{1}{\ell}}e^{-1}-\epsi' \parth{\frac{1}{1-\epsi'}+\frac{1}{\ell}}-\frac{\frac{1}{1-\epsi'}+\frac{1}{\ell}}{e+\epsi}}\\
&= \frac{1}{\frac{1}{1-\epsi'}+\frac{1}{\ell}}
\parth{e^{-1}-\frac{\epsi'}{1-\epsi'}-\frac{1}{(1-\epsi')(e+\epsi)}-\frac{1}{\ell}\parth{e^{-1}+\epsi'+\frac{1}{e+\epsi}}}\\
&= \frac{1}{\frac{1}{1-\epsi'}+\frac{1}{\ell}}
\parth{\frac{\epsi-(e+1)(e+\epsi)\epsi'}{e(1-\epsi')(e+\epsi)} - \frac{2e+\epsi+e(e+\epsi)\epsi'}{\ell e(e+\epsi)}}\\
&= \frac{1}{\frac{1}{1-\epsi'}+\frac{1}{\ell}}
\parth{\frac{\epsi}{2e(1-\epsi')(e+\epsi)} - \frac{\frac{\epsi}{2}\parth{\frac{4e}{\epsi}+2+\frac{e}{e+1}}}{\ell e(e+\epsi)}}\\
&\ge \frac{1}{\frac{1}{1-\epsi'}+\frac{1}{\ell}}
\parth{\frac{\epsi}{2e(e+\epsi)} - \frac{\epsi}{2e(e+\epsi)}}=0.
\end{align*}
\end{proof}
\section{Analysis of \lrig for monotone \nmon}\label{sec:mon-lrig}
In this section, we show that \lrig also works for 
monotone objectives and obtains the best ratio of $\frac{e}{e-1}+\epsi$
in expectation with probability $(\ell + 1)^{-\ell}$. The derandomization
and speed up also works for the monotone analysis, although we omit
the proofs here. 
\begin{theorem}\label{thm:ig-mon}
  Let $O \subseteq \uni$ be any set of size at most $k$, and suppose
  $\ig$ is called with $(G, f, k, \ell)$, where $f$ is monotone.
  Then $\ig$ outputs a set $A$
with $\oh{\ell nk}$ queries and probability $(\ell+1)^{-1}$ such that:
\begin{align*}
\ex{f(A)} \ge \frac{\ell}{\ell+1} f(G)+
    \frac{1}{\ell+1} f(O).
\end{align*}
\end{theorem}
\begin{proof}[Proof of Theorem~\ref{thm:ig-mon}]
  Let $o_{\text{max}} = \argmax_{o \in O\setminus G}\marge{o}{G}$,
  and let $\{ a_1, \ldots, a_\ell \}$ be the largest $\ell$ elements
  of $\{ \marge{x}{G} : x \in \uni \setminus G \}$, as chosen on
  Line \ref{line:rig-maxl}. We consider the following two cases.

\textbf{Case  $(O\setminus G) \cap \{a_1, \ldots, a_\ell\}=\emptyset$.}
Then, $o_{\text{max}} \not \in \{a_1, \ldots, a_\ell\}$ 
which implies that
$\marge{a_u}{G} \ge \marge{o}{G}$,
for all $1 \le u \le \ell$ and $o \in O\setminus G$;
and, after the first iteration of the \textbf{while} loop on Line~\ref{line:lrig_j},
no element of $O \setminus G$ is added into any of
$\{A_{0,i}\}_{i=1}^\ell$. We will analyze the iteration
of the \textbf{for} loop on Line \ref{line:rig-outerfor} with $u = 0$.

Since no element of $O\setminus G$ is added into the collection
when $j=0$, we can order $O\setminus G= \{o_1, o_2,\ldots\}$ such that the first $\ell$ elements
are not selected in any set before we get to $j = 1$,
the next $\ell$ elements are not selected in any set before we get
to $j = 2$,
and so on. Let $i \in \{1, \ldots, \ell \}$.
Let $A_{0,i}^{j}$ be the value of $A_{0,i}$ after $j$ elements are added into it,
and define $A_{0,i} = A_{0,i}^{k/\ell}$, the final value.
Finally, denote by $\delta_j$ the value $\delta_{x_{j, i}}(A_{0,i}^{j})$.
Then,
\begin{align*}
&\func{f}{O \cup A_{0,i}} - \func{f}{A_{0,i}}
\le \sum_{o \in O\setminus A_{0,i}}\marge{o}{A_{0,i}}\tag{submodularity}\\
&\le \sum_{o \in O\setminus G}\marge{o}{A_{0,i}}\tag{$G \subseteq A_{0,i}$}\\
&\le \sum_{l = 1}^{\ell} \marge{o_l}{A_{0,i}^{0}}
+ \sum_{l = \ell+1}^{2\ell} \marge{o_l}{A_{0,i}^{1}}+\ldots \tag{submodularity}\\
&\le \ell \sum_{j=1}^{k/\ell}\delta_j
=\ell (\func{f}{A_{0,i}}-\func{f}{G}),
\end{align*}
where the last inequality follows from the ordering of $O$ and the
selection of elements into the sets.
By summing up the above inequality with all $1\le i \le \ell$,
and the repeated application of monotonicity,
it holds that
\begin{align*}
\frac{\ell+1}{\ell} \sum_{i=1}^\ell f(A_{0,i}) &\ge 
f(O) + \ell f(G)
\end{align*}
Therefore, if we select a random set from $\{A_{0,i}: 1\le i \le \ell\}$,
by the above inequality,
Lemma holds
and we have probability $1/(\ell + 1 )$ of this happening.

\textbf{Case  $(O\setminus G) \cap \{a_1, \ldots, a_\ell\}\neq\emptyset$.}
Then $o_{\text{max}} \in \{a_1, \ldots, a_\ell\}$,
so $a_u = o_{\text{max}}$, for some $u \in 1, \ldots, \ell$.
we analyze the iteration $u$ of the \textbf{for}
loop on Line \ref{line:rig-outerfor}.
Similarly to the previous case,
let $i \in \uni$, 
define $A_{u,i}^{j}$ be the value of $A_{u,i}$ after we add $j$ elements into it,
and we will use $A_{u,i}$ for $A_{u,i}^{k/\ell}$,
Also, let $\delta_j = \marge{x_{j, i}}{A_{u,i}^{j-1}}$.
Finally, let $x_{1,i} = a_u$ and 
observe $A_{u,i}^{(1)} = G \cup \{a_u\}$.

Then, we can order $O\setminus G= \{o_1, o_2,\ldots\}$
such that: 1) for the first $\ell$ elements $\{o_l\}_{l=1}^\ell$, 
$\marge{o_l}{G} \le \marge{o_{\text{max}}}{G}=\delta_1$;
2) the next $\ell$ elements
$\{o_l\}_{l=\ell+1}^{2\ell}$ are not selected by any set before we get 
to $j = 2$, which implies that $\marge{o_l}{A_{u,i}^{1}} \le \delta_2$,
and so on.
Therefore, analagous to the the previous case,
we have that 
\begin{equation*}
\func{f}{O \cup A_{u,i}} - \func{f}{A_{u,i}}
\le \ell (\func{f}{A_{u,i}}-\func{f}{G}).
\end{equation*}
By summing up Inequality~\ref{inq:ig-rec-2} with all $1\le i \le \ell$,
and the repeated application of monotonicity,
it holds that 
\begin{align*}
\frac{\ell+1}{\ell} \sum_{i=1}^\ell f(A_{u,i}) &\ge 
f(O) + \ell f(G)
\end{align*}
where the second inequality follows from Inequality~\ref{inq:ig-deg-2}.
Therefore, if we select a random set from $\{A_{u,i}: 1\le i \le \ell\}$,
by the above inequality,
Lemma holds, and
this happens with probability $(\ell+1)^{-1}$.
\end{proof}
\begin{theorem}\label{thm:lrig-mon}
Let $\epsi \ge 0$, and
$(f,k)$ be an instance of \nmon,
with optimal solution value \opt.
Algorithm $\lrig$ outputs a set $G_\ell$
with $\oh{\epsi^{-2}nk}$ queries such that
$\opt \le \parth{\frac{e}{e-1}+\epsi}\ex{f(G_\ell)}$ with
probability $(\ell+1)^{-\ell}$, where $\ell = \frac{1}{1+\log\parth{1-\frac{1}{\frac{e}{e-1}+\epsi}}}-1$.
\end{theorem}
\begin{proof}[Proof of Theorem~\ref{thm:lrig-mon}]
Suppose that, at each iteration $m$, we select a random set from the 
correct guess of $u$ in Alg.~\ref{thm:ig}.
Then, the inequality in Theorem~\ref{thm:ig} holds for each iteration,
and it happens with probability $(\ell+1)^{-\ell}$.

By unfixing $G_{m-1}$ and utilizing the recursion of Theorem~\ref{thm:ig}, 
it holds that
\begin{align*}
\ex{\ff{O \cup G_m}} &\ge \left(1-\frac{1}{\ell}\right) \ff{O \cup G_{m-1}}\\
&\ge \left(1-\frac{1}{\ell}\right)^m f(O).
\end{align*}
Then, the approximation ratio can be bounded as follows
\begin{align*}
\ex{f(G_\ell)} &\ge \parth{1-\frac{1}{1+\ell}}
\ex{f(G_{\ell-1})} + \frac{1}{1+\ell}
 \ex{f(O)}\\
 &\ge \parth{1-\parth{1-\frac{1}{\ell+1}}^\ell}f(O) \\
&\ge \parth{1-e^{-1+\frac{1}{\ell+1}}}f(O) \tag{$1-x \le e^{-x}$}\\
&\ge \frac{1}{\frac{e}{e-1}+\epsi}f(O) \tag{$\ell = \frac{1}{1+\log\parth{1-\frac{1}{\frac{e}{e-1}+\epsi}}}-1$}
\end{align*}

\end{proof}
  \section{Empirical Evaluation} \label{apx:exp}
\subsection{Environment}
All experiments were run on a linux server
running Ubuntu 20.04, with 2 $\times$ Intel(R) Xeon(R) Gold 5218R CPU @ 2.10GHz and 504 GB RAM.
\subsection{Implementation and Parameter Settings} \label{sec:apx-impl}
All algorithms were implemented in C++ and used
the same code for evaluation of the application
oracle. An optimized marginal gain computation
was available to the algorithms that could
benefit from such optimization and when the application permitted such optimization.

All algorithms used lazy evaluations whenever possible as follows.
Suppose $\marge{x}{S}$ has already been computed, and the algorithm
needs to check if $\marge{x}{T} \ge \tau$, for some
$\tau \in \mathbb R$ and $T \supseteq S$. Then if $\marge{x}{S} < \tau$, this evaluation may be safely skipped due to the submodularity
of $f$. The single-pass streaming algorithms evaluated do not benefit from lazy evaluations, except
for those algorithms (\qssp and \aefns)
that use post-processing.

The accuracy parameter
$\epsi$ of each algorithm is set to $0.1$.
The parameter $b$ of \qss is set to $1$;
while for \qssp, $b$ is set to $0.1$.
These choices for $b$ worked well empirically,
although they yield worse theoretical
guarantees than choosing $b = 2\sqrt 2$
as discussed in Section \ref{sec:qs}.
The smaller value of $b$ for \qssp
yields a larger universe for post-processing,
which empirically improves the solution value. 

The algorithm \lrvz is an idealized implementation
that uses more memory (that is, exceeds the memory
bound required for a streaming
algorithm) and is faster than the
actual algorithm described in \citet{Liu2021}; the
same optimization is used in the experimental
evaluation of \citet{Liu2021}. Briefly, the histories
required by the algorithm are saved instead of computed
on the fly. For more information, see comments
in the source code
released with the paper of \citet{Liu2021}.
Since we do not compare the memory usage of the
algorithms, this optimization only gives an
advantage to \lrvz in our experimental comparison.

As mentioned in Section \ref{sec:exp},
both \qssp and \aefns used \mpl
for post-processing. Recall that
\mpl requires an input of $\Gamma$
and $\alpha$. \qssp used its solution value (before
  post-processing) and its approximation ratio
  for $\Gamma$ and $\alpha$, respectively (which means that \mpl will run in linear-time for its post-processing). However, \aefns does not have an approximation ratio before post-processing, so the maximum singleton value and $k$ were used for $\Gamma$
  and $\alpha$, respectively.
  Both algorithms used their respective value for accuracy parameter $\epsi$ for the same parameter in \mpl.

\begin{figure}
  \centering
  \subfigure[\er]{ 
    \includegraphics[width=0.30\textwidth,height=0.18\textheight]{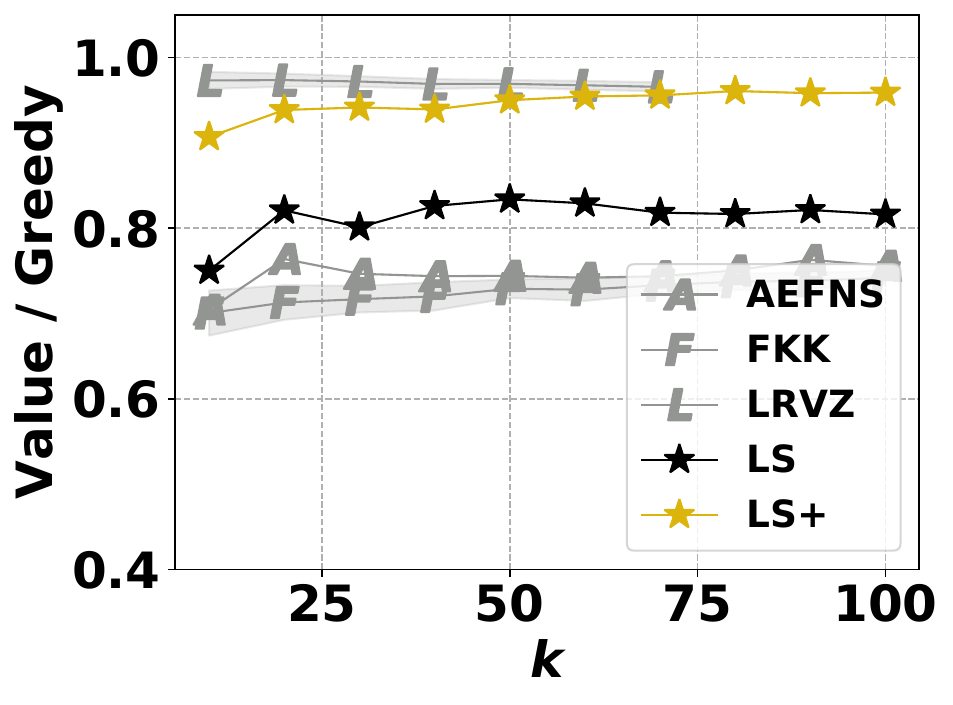}
  }
  \subfigure[\ba]{ 
    \includegraphics[width=0.30\textwidth,height=0.18\textheight]{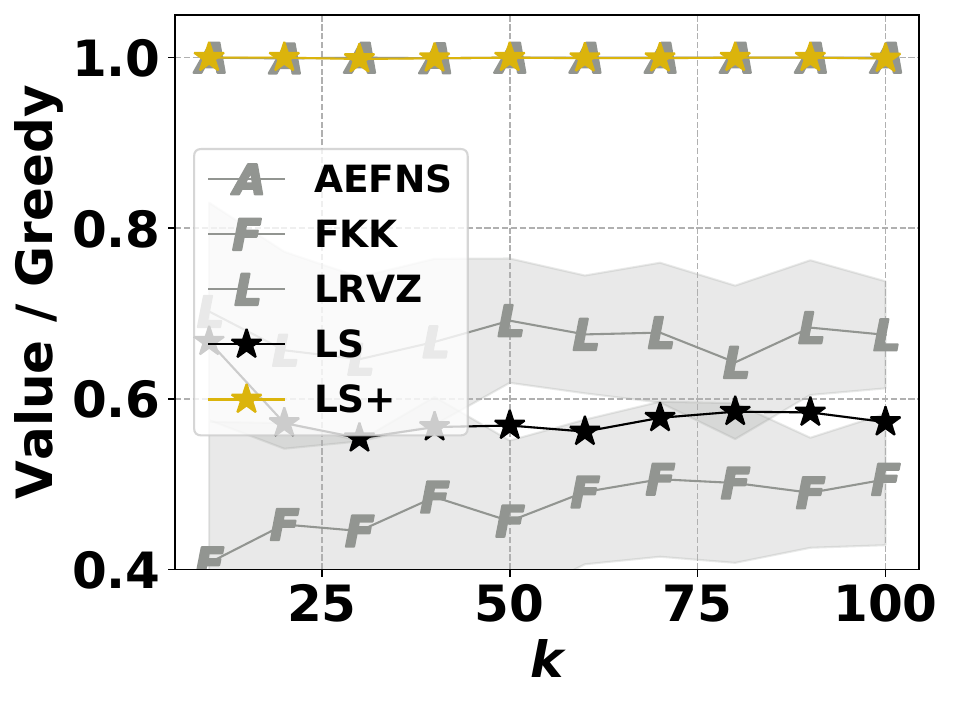}
  }
  \subfigure[\fb]{ 
    \includegraphics[width=0.30\textwidth,height=0.18\textheight]{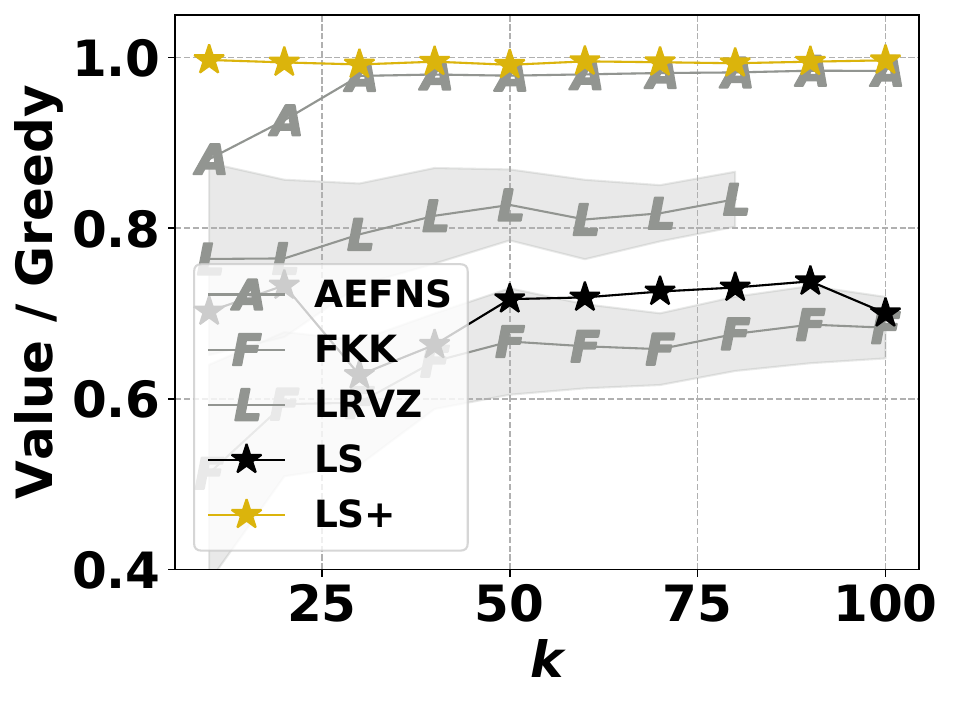}
  }
  
  \subfigure[\slashdot]{ 
    \includegraphics[width=0.30\textwidth,height=0.18\textheight]{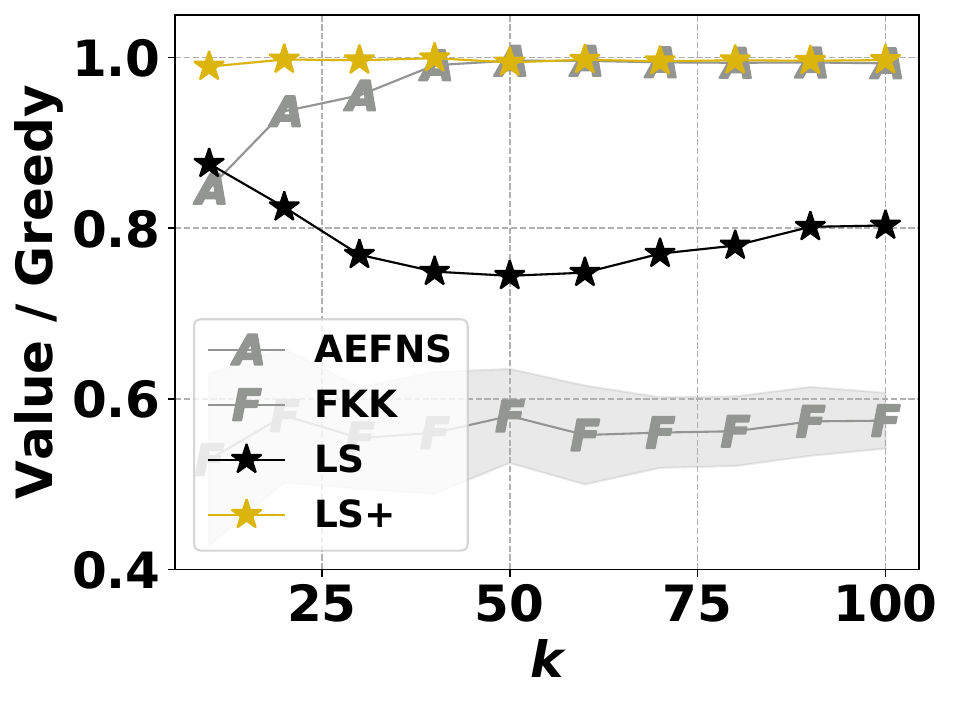}
  }
  \subfigure[\pokec]{ 
    \includegraphics[width=0.30\textwidth,height=0.18\textheight]{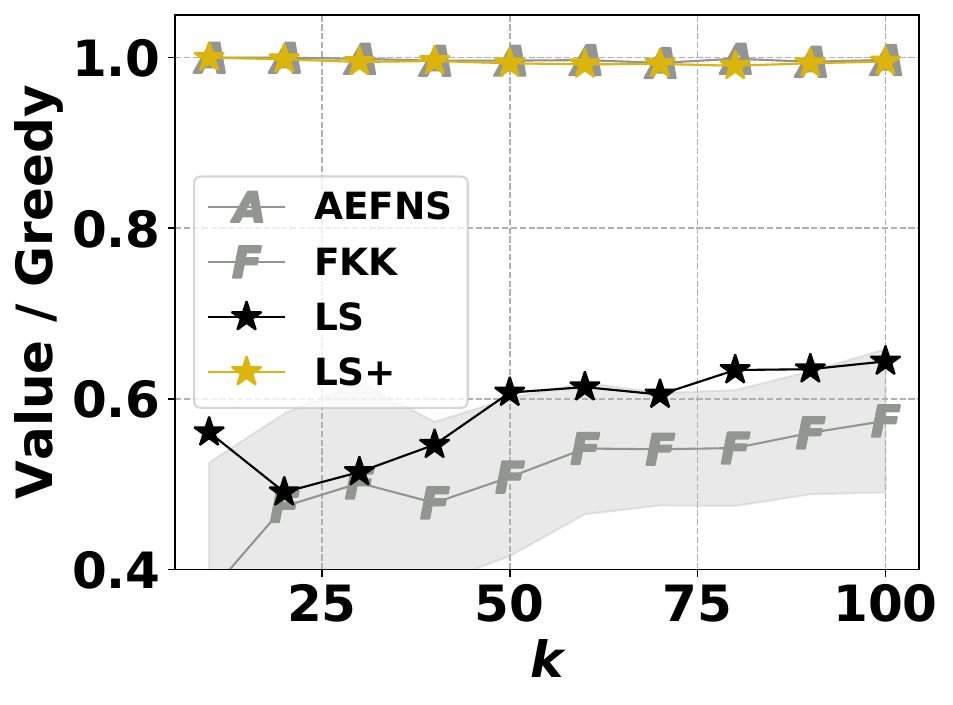}
  }
  \caption{Solution value vs. $k$ for single-pass algorithms for the \maxcut application on each dataset.} \label{fig:maxcut-val}
\end{figure}
\begin{figure}
  \centering
  \subfigure[\er]{ 
    \includegraphics[width=0.30\textwidth,height=0.18\textheight]{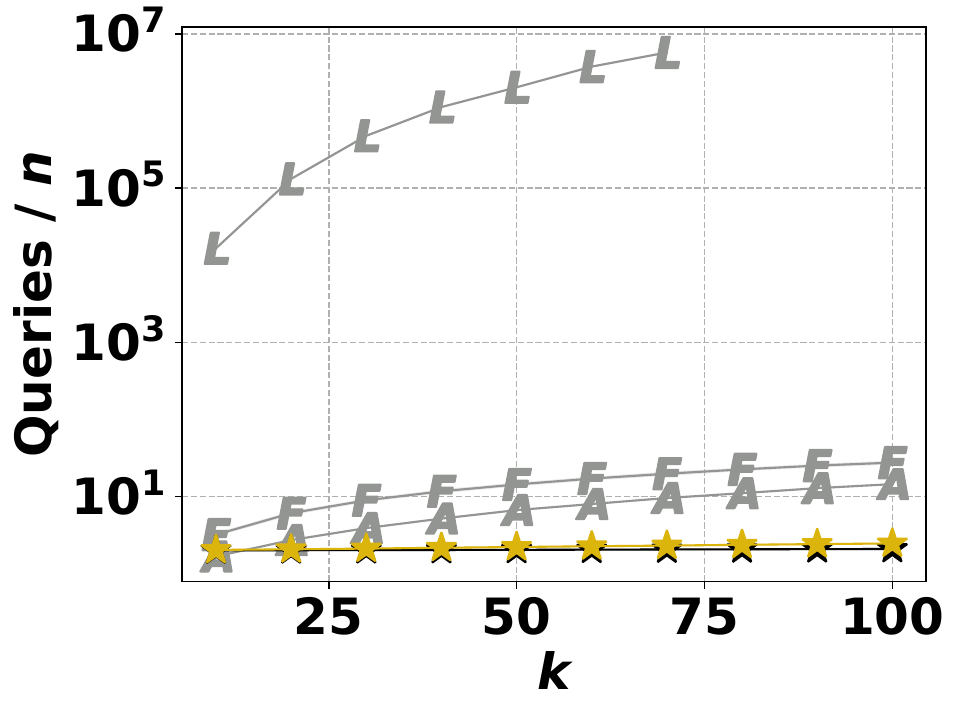}
  }
  \subfigure[\ba]{ 
    \includegraphics[width=0.30\textwidth,height=0.18\textheight]{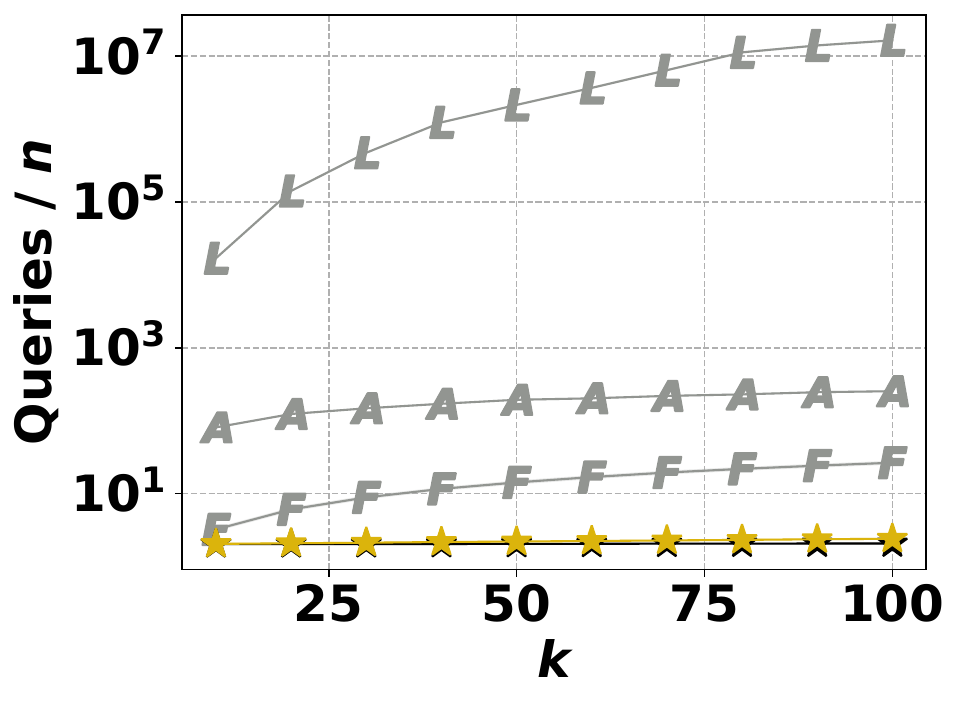}
  }
  \subfigure[\fb]{ 
    \includegraphics[width=0.30\textwidth,height=0.18\textheight]{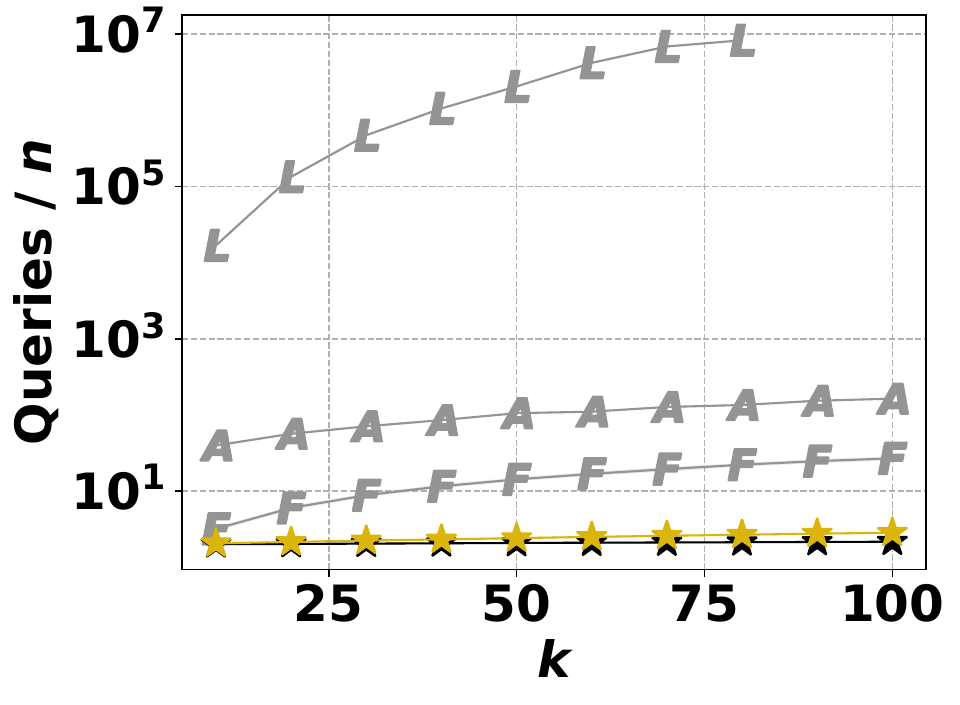}
  }
  
  \subfigure[\slashdot]{ 
    \includegraphics[width=0.30\textwidth,height=0.18\textheight]{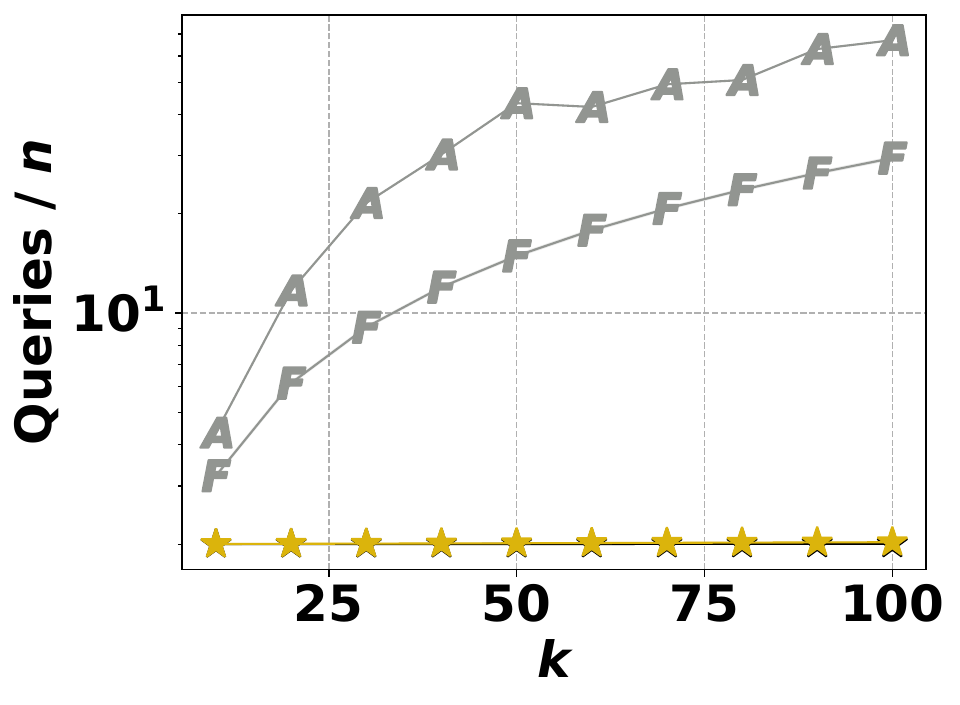}
  }
  \subfigure[\pokec]{ 
    \includegraphics[width=0.30\textwidth,height=0.18\textheight]{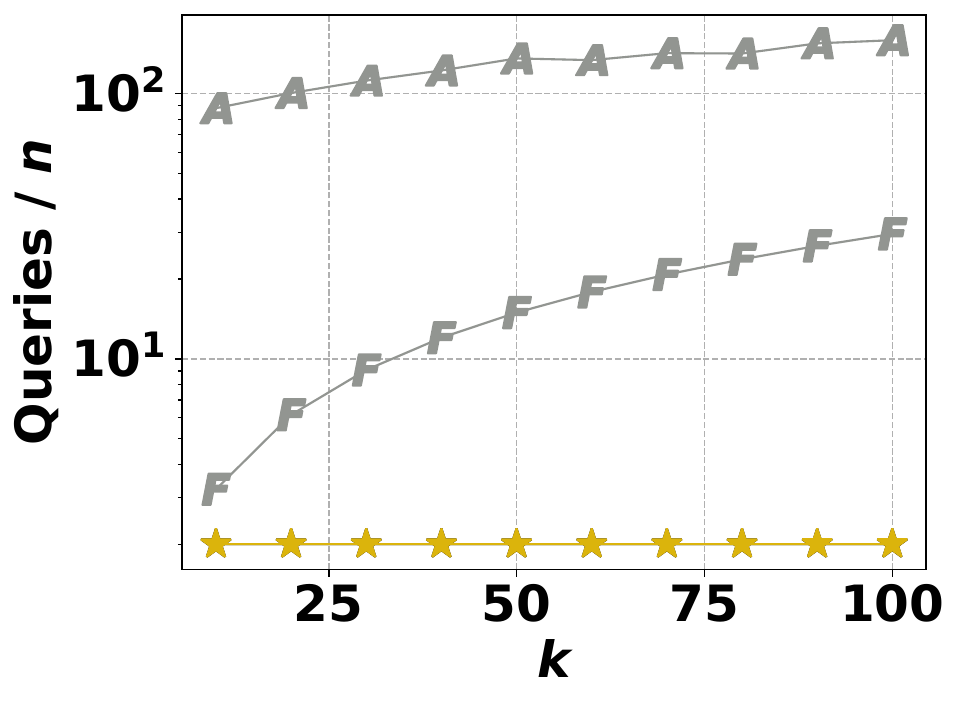}
  }
  \caption{Oracle queries vs. $k$ for single-pass algorithms for the \maxcut application on each dataset.} \label{fig:maxcut-query}
\end{figure}
\begin{figure}
  \centering
  \subfigure[\er]{ 
    \includegraphics[width=0.30\textwidth,height=0.18\textheight]{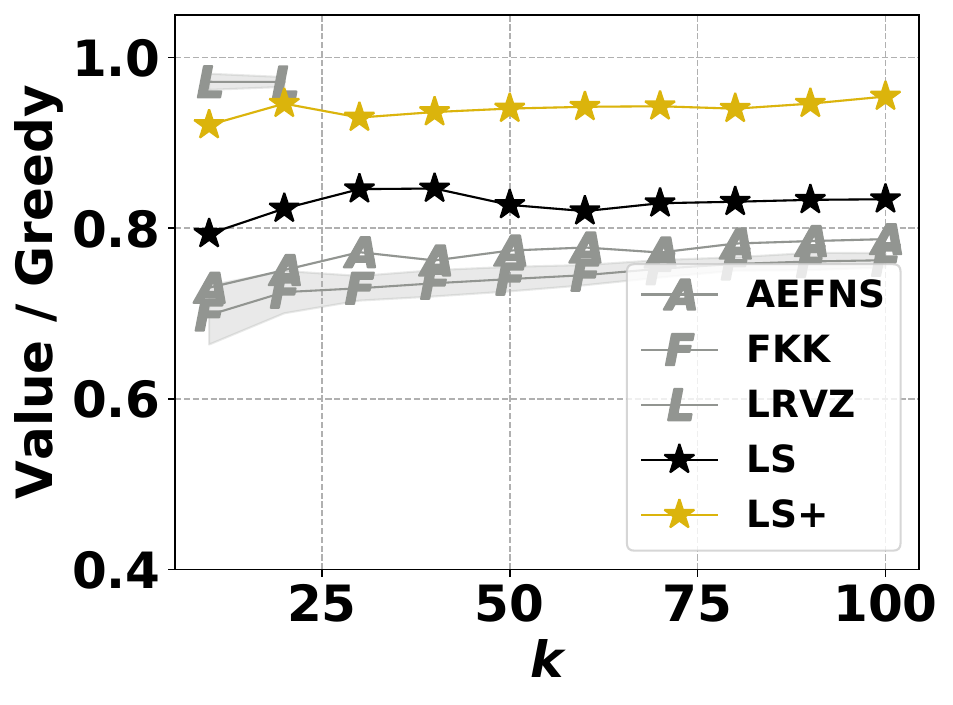}
  }
  \subfigure[\ba]{ 
    \includegraphics[width=0.30\textwidth,height=0.18\textheight]{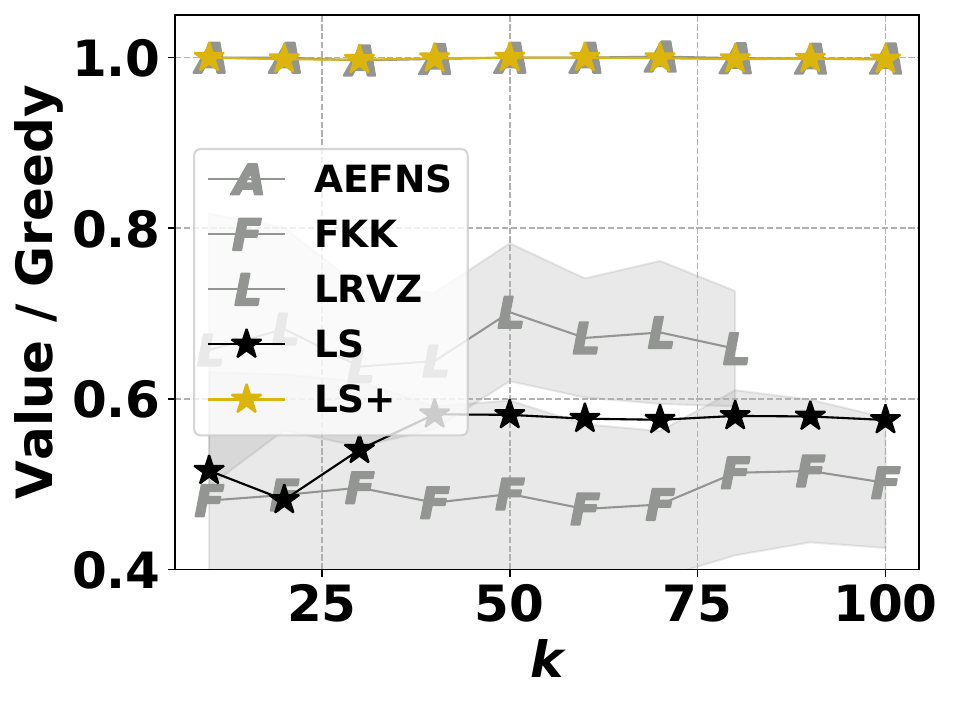}
  }
  \subfigure[\fb]{ 
    \includegraphics[width=0.30\textwidth,height=0.18\textheight]{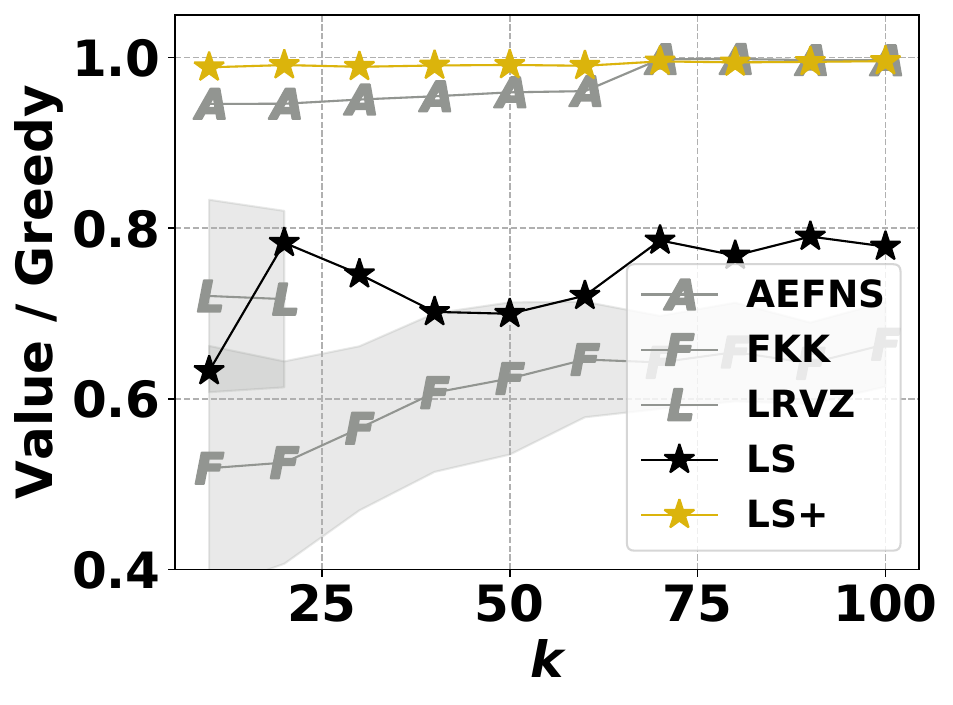}
  }
  
  \subfigure[\slashdot]{ 
    \includegraphics[width=0.30\textwidth,height=0.18\textheight]{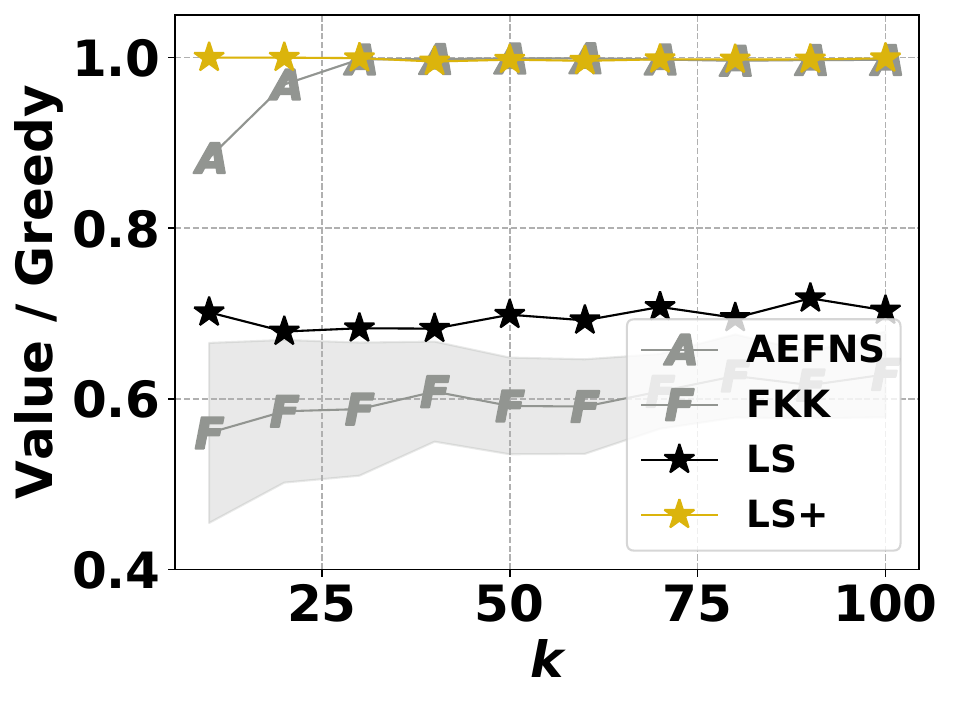}
  }
  \subfigure[\pokec]{ 
    \includegraphics[width=0.30\textwidth,height=0.18\textheight]{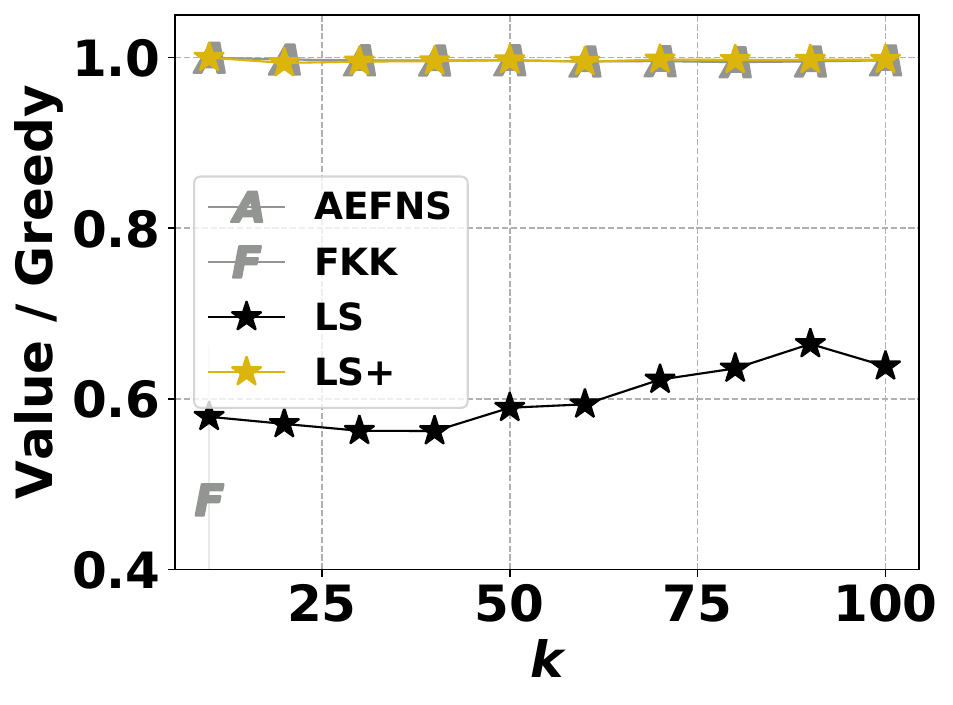}
  }
  \caption{Solution value vs. $k$ for single-pass algorithms for the \revmax application on each dataset.} \label{fig:revmax-val}
\end{figure}
\begin{figure}
  \centering
  \subfigure[\er]{ 
    \includegraphics[width=0.30\textwidth,height=0.18\textheight]{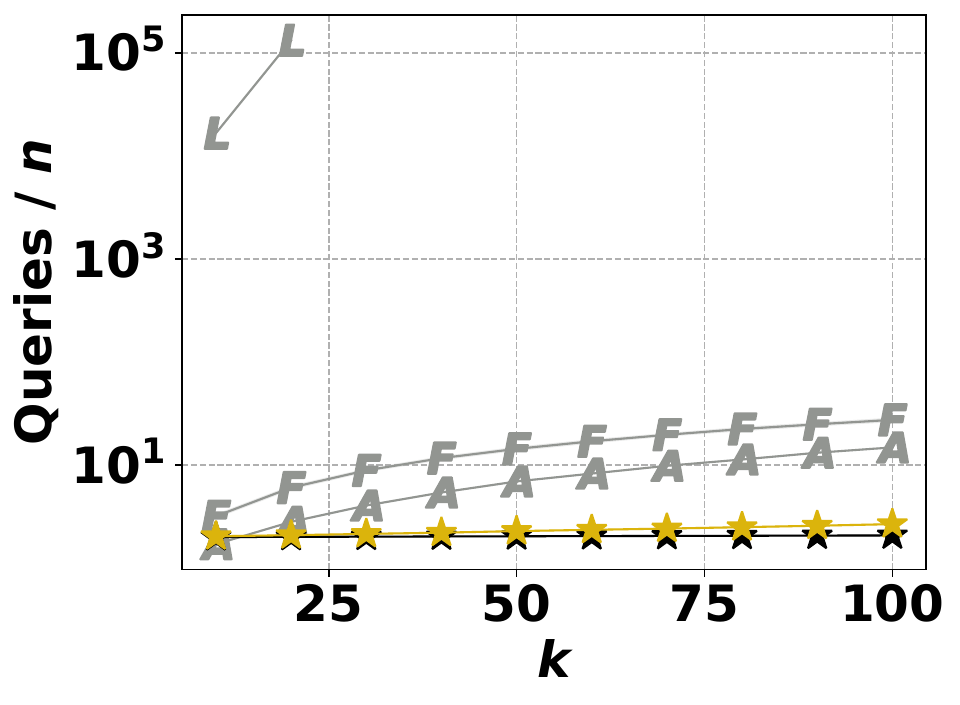}
  }
  \subfigure[\ba]{ 
    \includegraphics[width=0.30\textwidth,height=0.18\textheight]{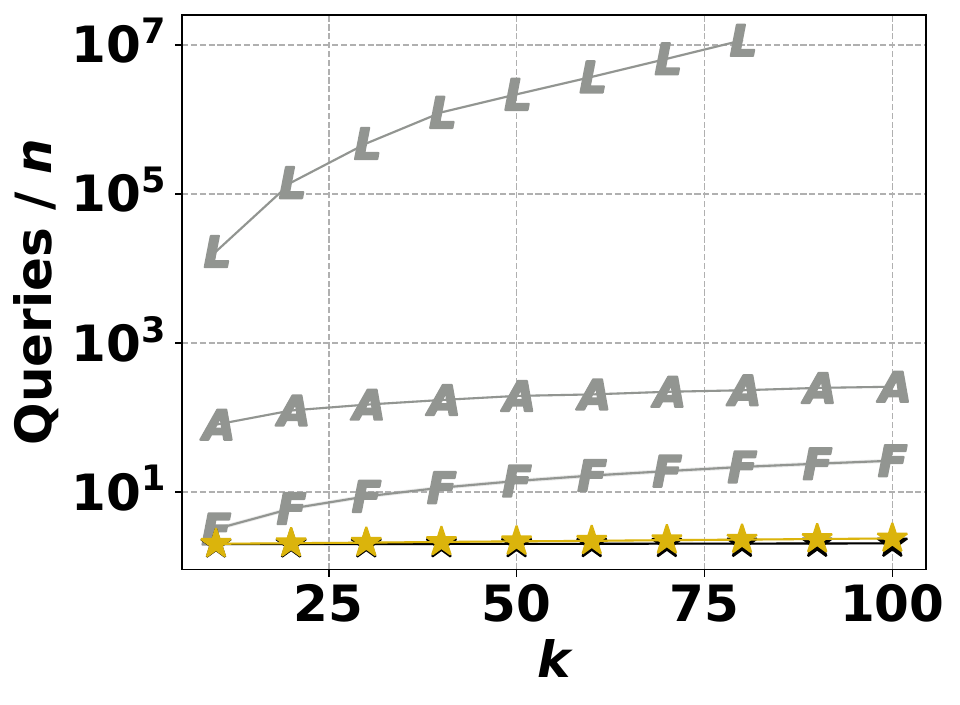}
  }
  \subfigure[\fb]{ 
    \includegraphics[width=0.30\textwidth,height=0.18\textheight]{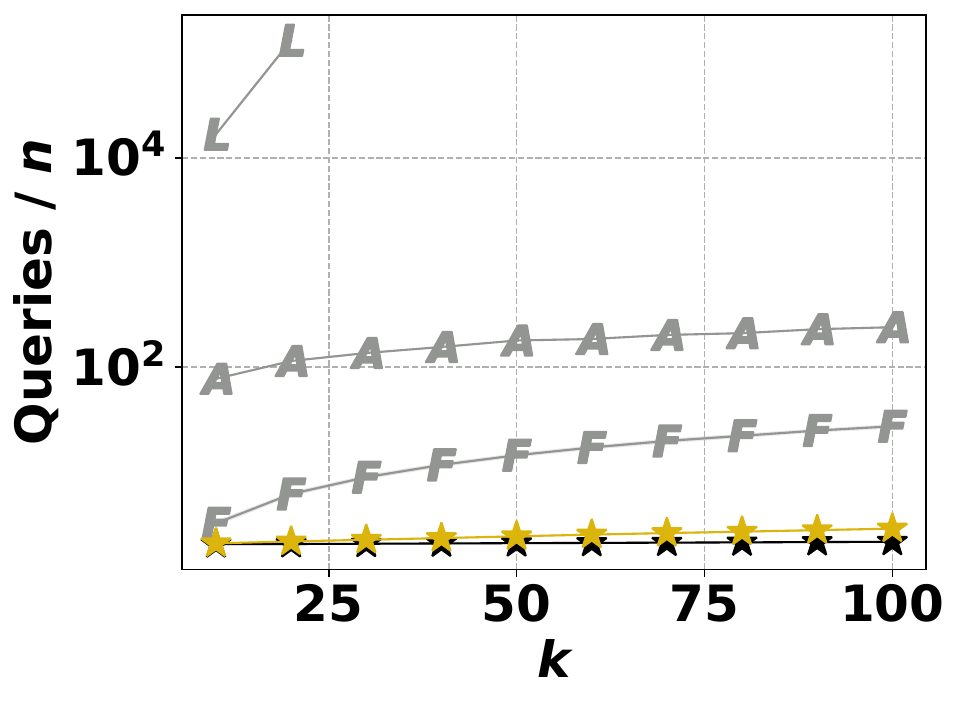}
  }
  
  \subfigure[\slashdot]{ 
    \includegraphics[width=0.30\textwidth,height=0.18\textheight]{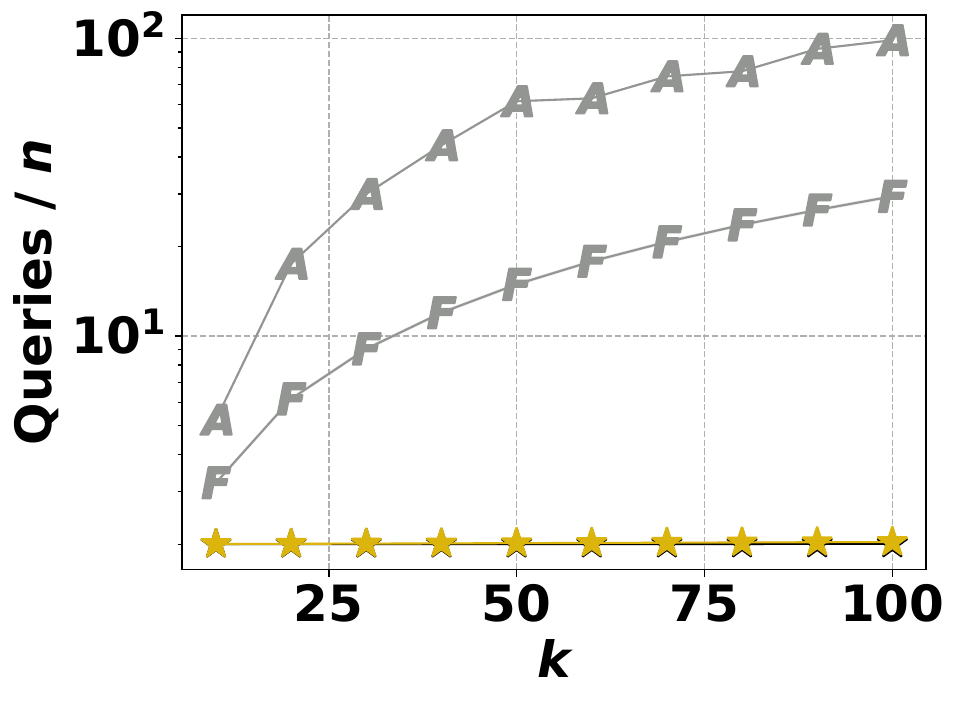}
  }
  \subfigure[\pokec]{ 
    \includegraphics[width=0.30\textwidth,height=0.18\textheight]{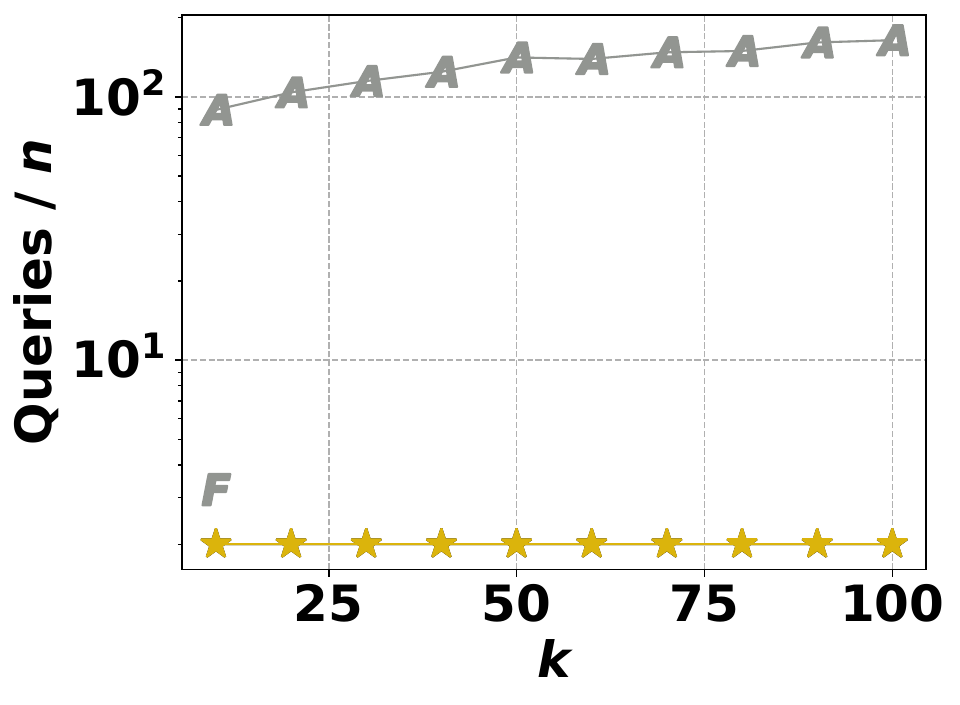}
  }
  \caption{Oracle queries vs. $k$ for single-pass algorithms for the \revmax application on each dataset.} \label{fig:revmax-query}
\end{figure}
\begin{figure}
  \subfigure[\maxcut value, \er]{ \label{mt:val-sd}
    \includegraphics[width=0.32\textwidth,height=0.2\textheight]{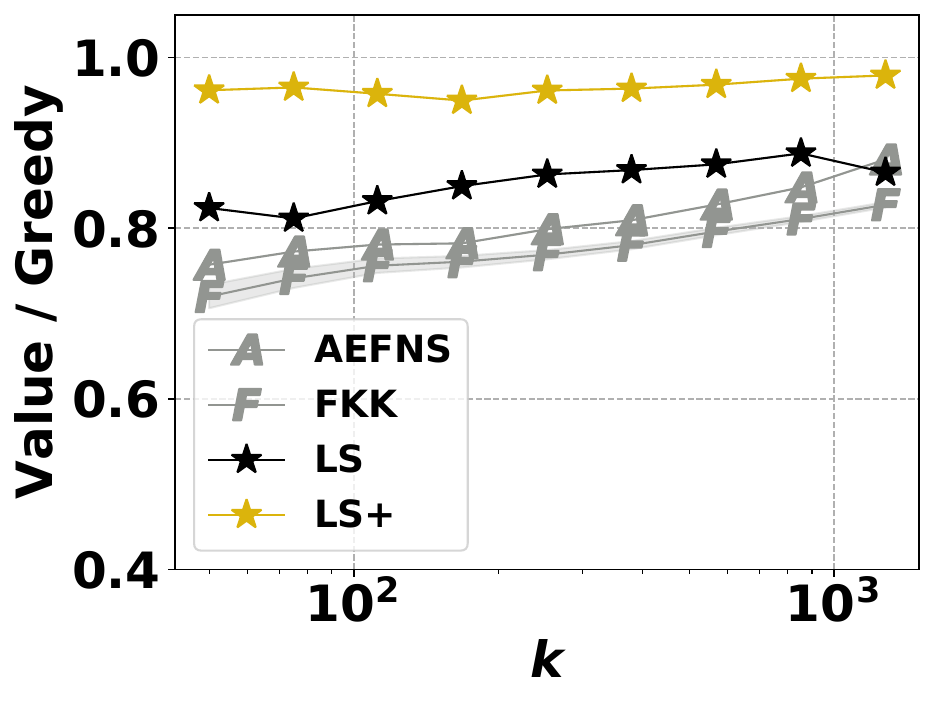}
  }
  \subfigure[\maxcut queries, \er]{ \label{mt:query-sd}
    \includegraphics[width=0.32\textwidth,height=0.2\textheight]{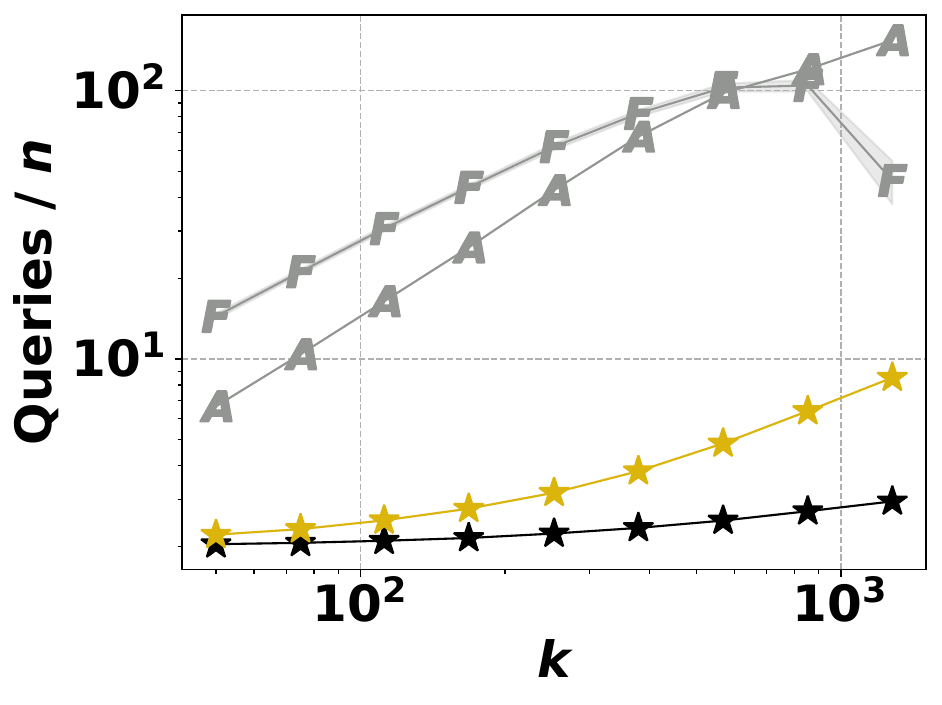}
  }
  \subfigure[\maxcut memory, \er]{ \label{mt:mem-sd}
    \includegraphics[width=0.32\textwidth,height=0.2\textheight]{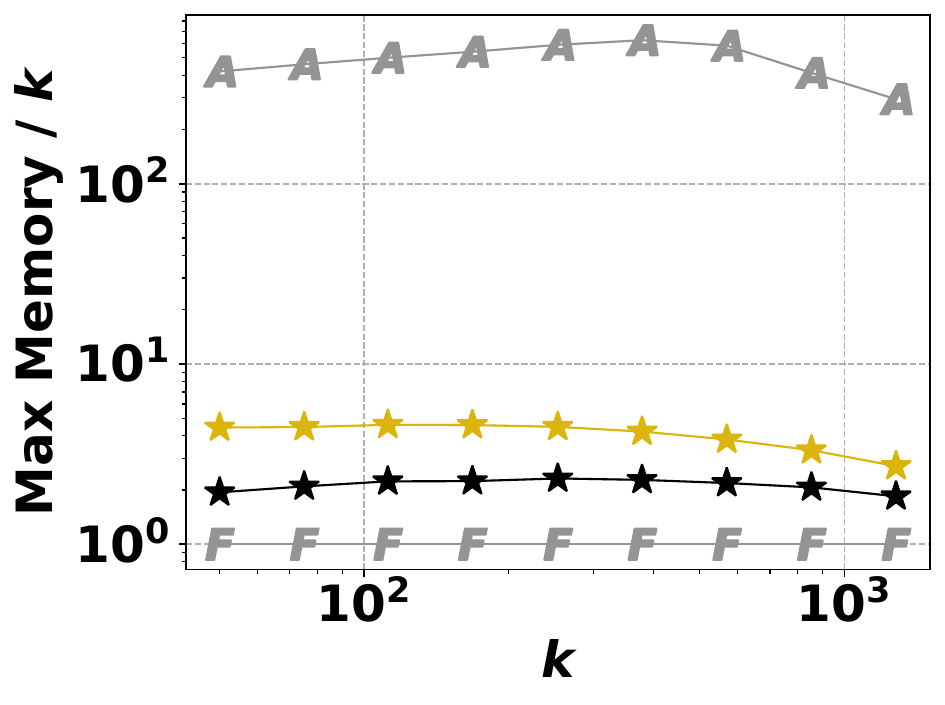}
  }
  \subfigure[\revmax value, \er]{ \label{mt:val-sd}
    \includegraphics[width=0.32\textwidth,height=0.2\textheight]{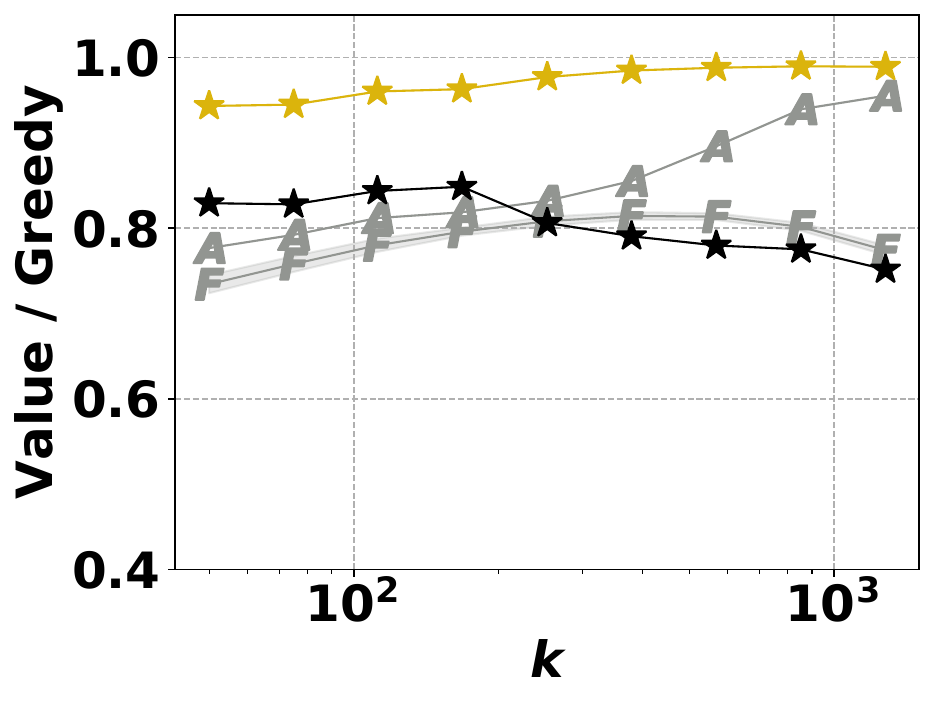}
  }
  \subfigure[\revmax queries, \er]{ \label{mt:query-sd}
    \includegraphics[width=0.32\textwidth,height=0.2\textheight]{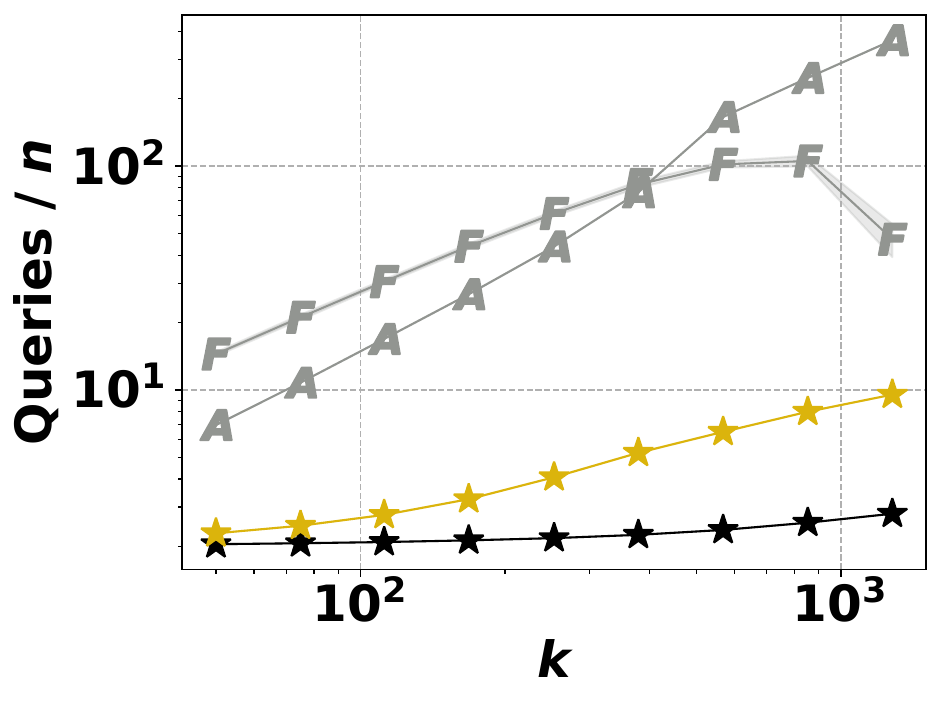}
  }
  \subfigure[\revmax memory, \er]{ \label{mt:mem-sd}
    \includegraphics[width=0.32\textwidth,height=0.2\textheight]{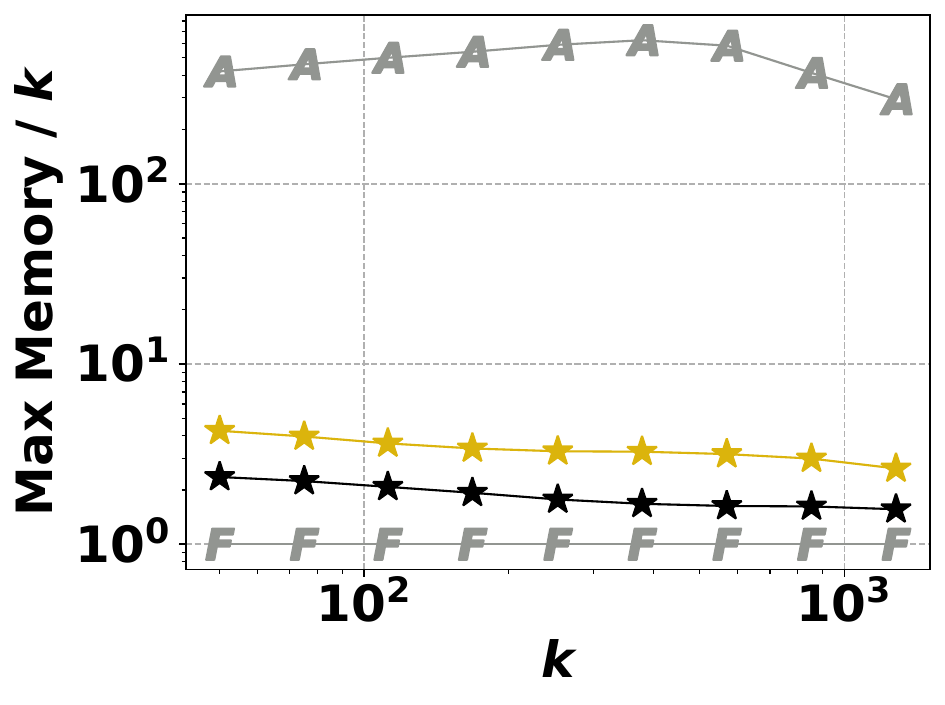}
  }
  \caption{Evaluation of single-pass streaming algorithms on the \er dataset (n=5,000) with the \maxcut and \revmax objectives and large $k$ values}
  \label{fig:er}
\end{figure}
\begin{figure}
  \subfigure[\maxcut value, \slashdot]{ \label{mt:val-sd}
    \includegraphics[width=0.32\textwidth,height=0.2\textheight]{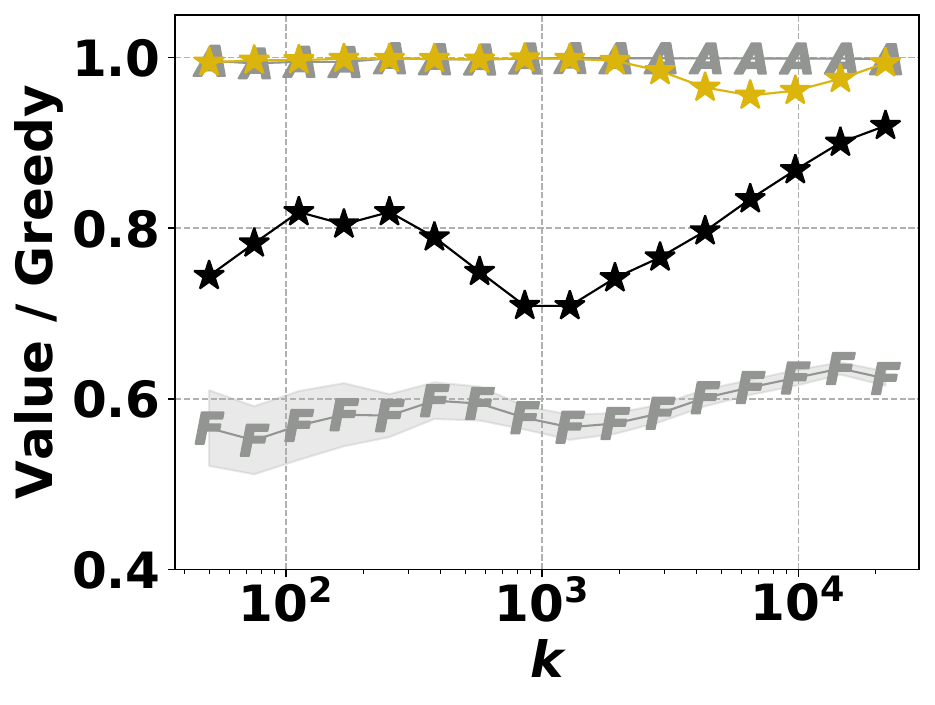}
  }
  \subfigure[\maxcut queries, \slashdot]{ \label{mt:query-sd}
    \includegraphics[width=0.32\textwidth,height=0.2\textheight]{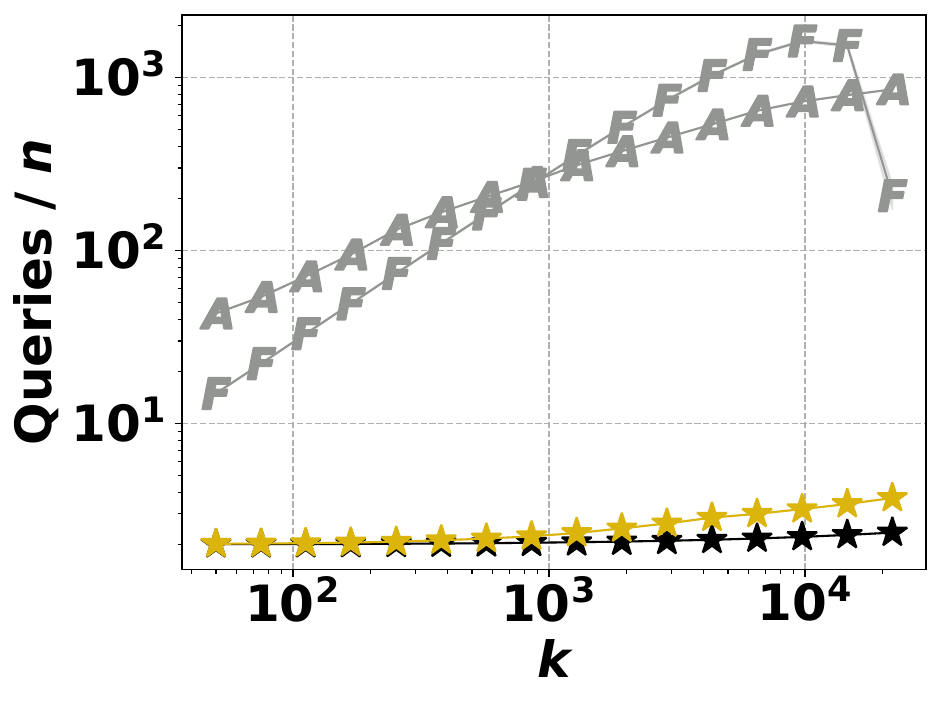}
  }
  \subfigure[\maxcut memory, \slashdot]{ \label{mt:mem-sd}
    \includegraphics[width=0.32\textwidth,height=0.2\textheight]{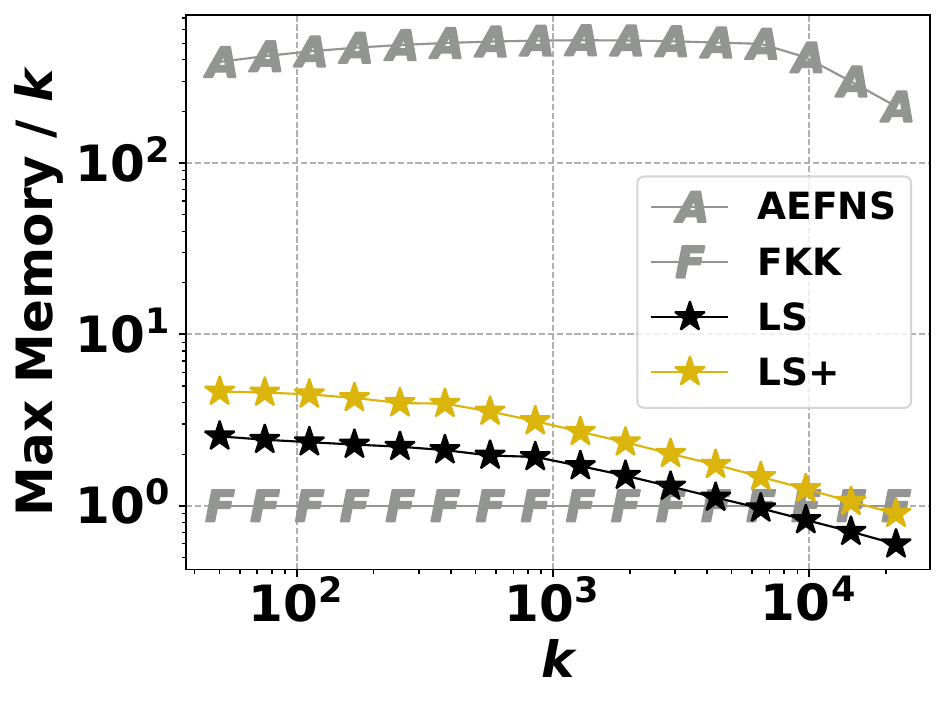}
  }
  \subfigure[\revmax value, \slashdot]{ \label{mt:val-sd}
    \includegraphics[width=0.32\textwidth,height=0.2\textheight]{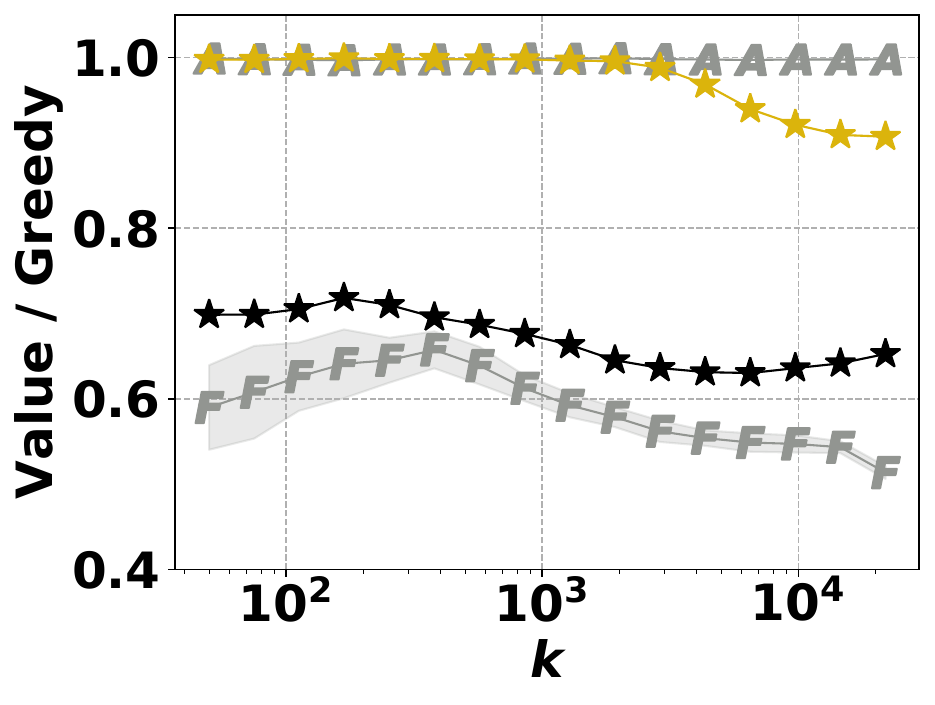}
  }
  \subfigure[\revmax queries, \slashdot]{ \label{mt:query-sd}
    \includegraphics[width=0.32\textwidth,height=0.2\textheight]{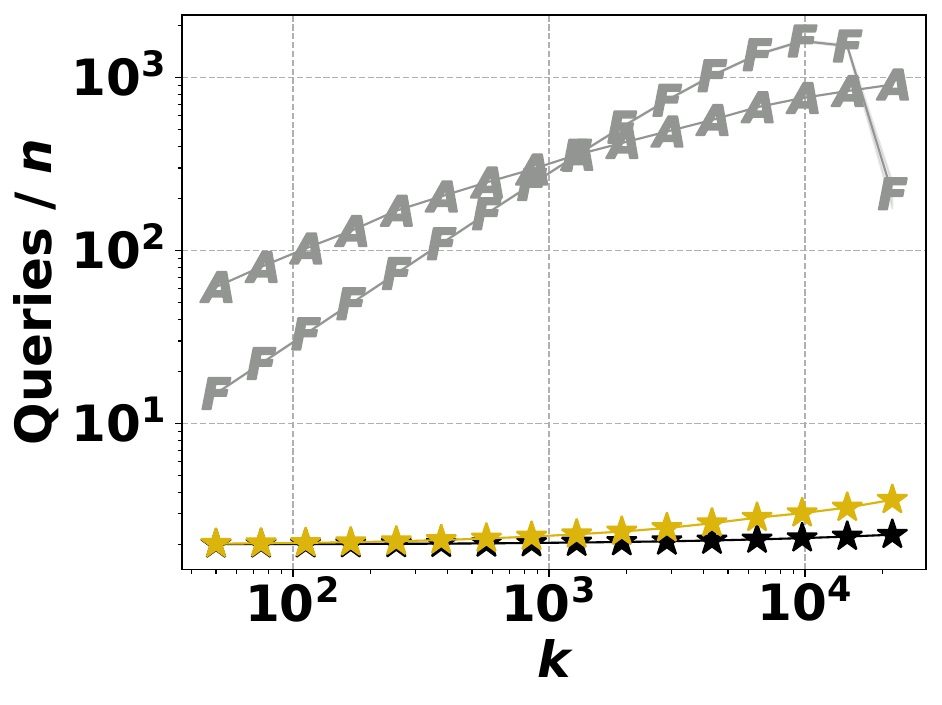}
  }
  \subfigure[\revmax memory, \slashdot]{ \label{mt:mem-sd}
    \includegraphics[width=0.32\textwidth,height=0.2\textheight]{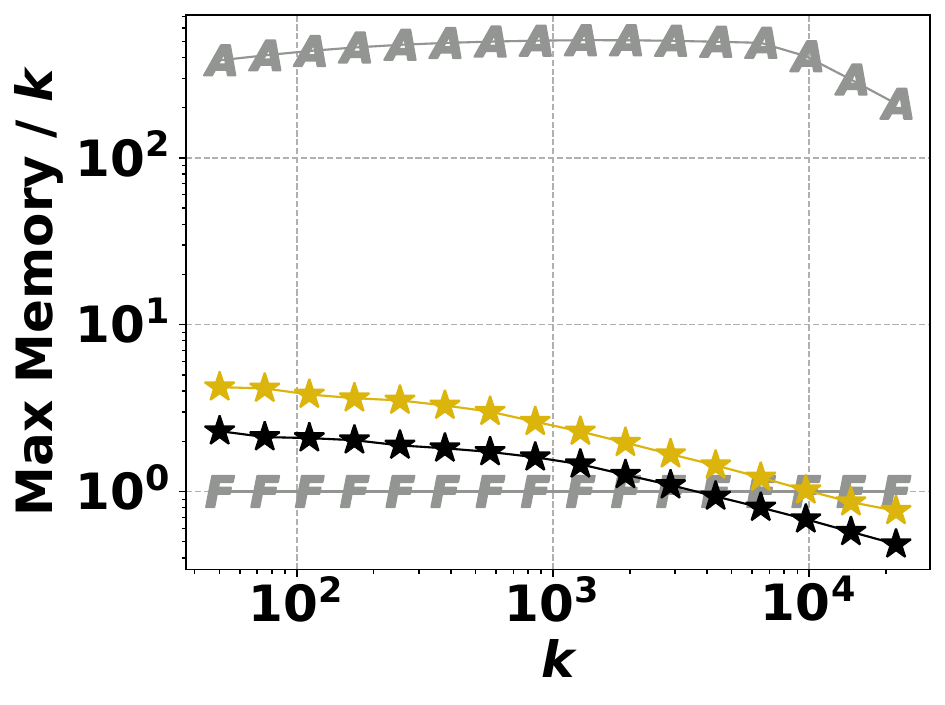}
  }
  \caption{Evaluation of single-pass streaming algorithms on the \slashdot dataset (n=77,360) with the \maxcut and \revmax objectives and large $k$ values}
  \label{fig:slashdot}
\end{figure}
\begin{figure}
  \subfigure[\maxcut value, \fb]{ \label{mt:val-sd}
    \includegraphics[width=0.32\textwidth,height=0.2\textheight]{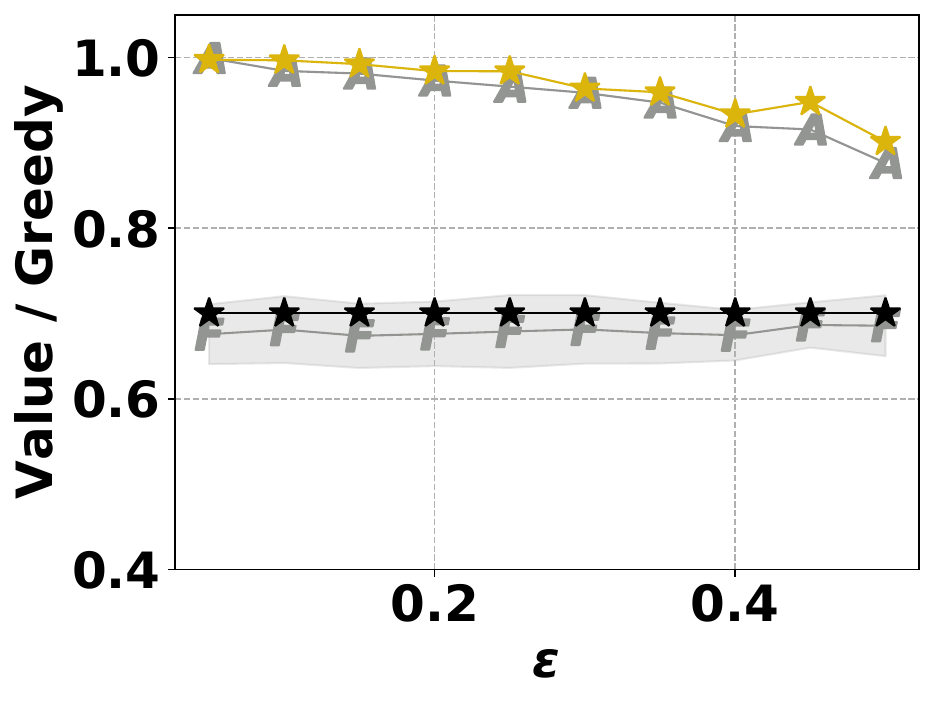}
  }
  \subfigure[\maxcut queries, \fb]{ \label{mt:query-sd}
    \includegraphics[width=0.32\textwidth,height=0.2\textheight]{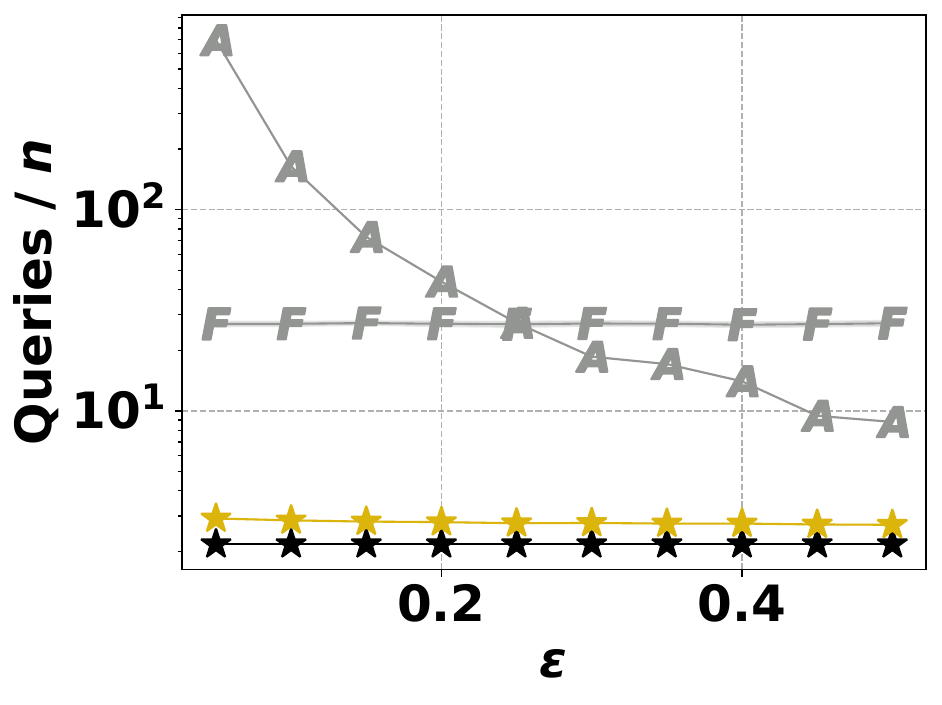}
  }
  \subfigure[\maxcut memory, \fb]{ \label{mt:mem-sd}
    \includegraphics[width=0.32\textwidth,height=0.2\textheight]{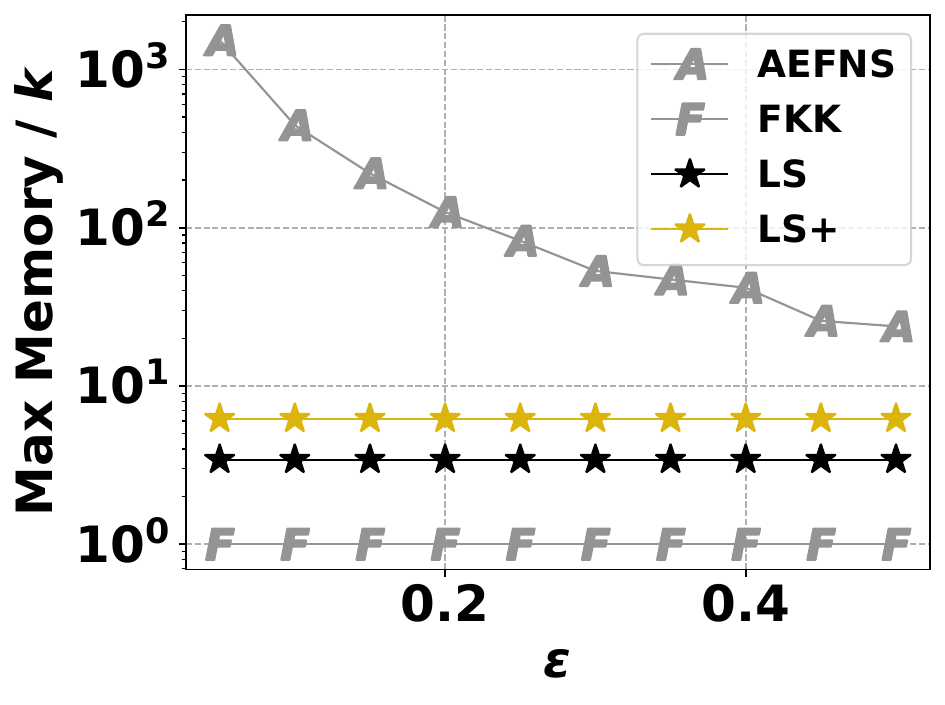}
  }
  \caption{Evaluation of single-pass streaming algorithms on the \fb dataset (n=4,039) with the \maxcut objective and different $\epsi$ values}
  \label{fig:epsi-fb}
\end{figure}
\begin{figure}
  \subfigure[\maxcut value, \slashdot]{ \label{mt:val-sd}
    \includegraphics[width=0.32\textwidth,height=0.2\textheight]{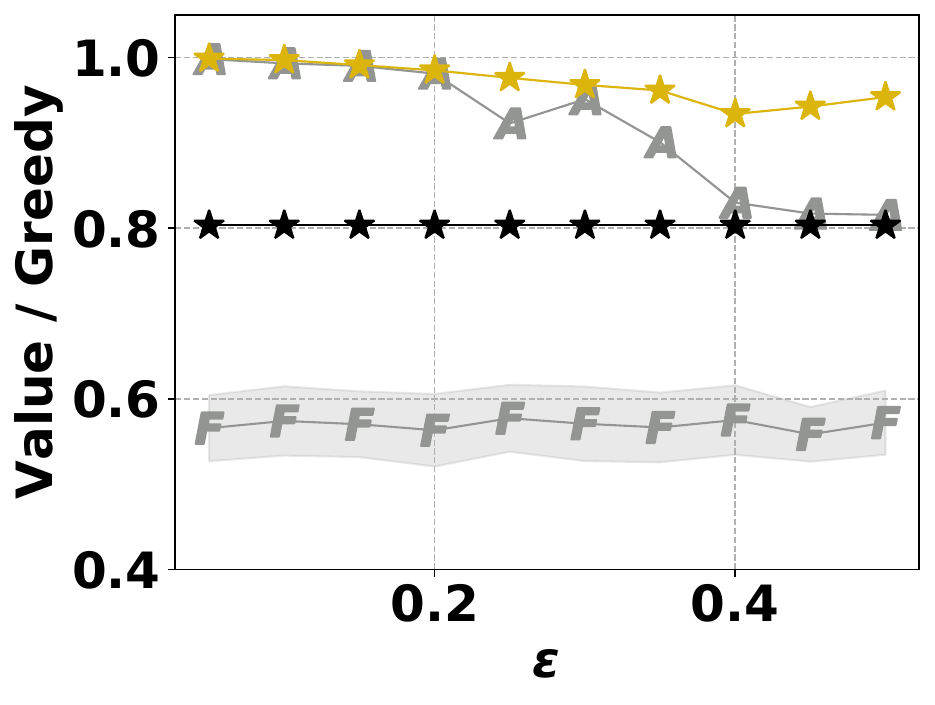}
  }
  \subfigure[\maxcut queries, \slashdot]{ \label{mt:query-sd}
    \includegraphics[width=0.32\textwidth,height=0.2\textheight]{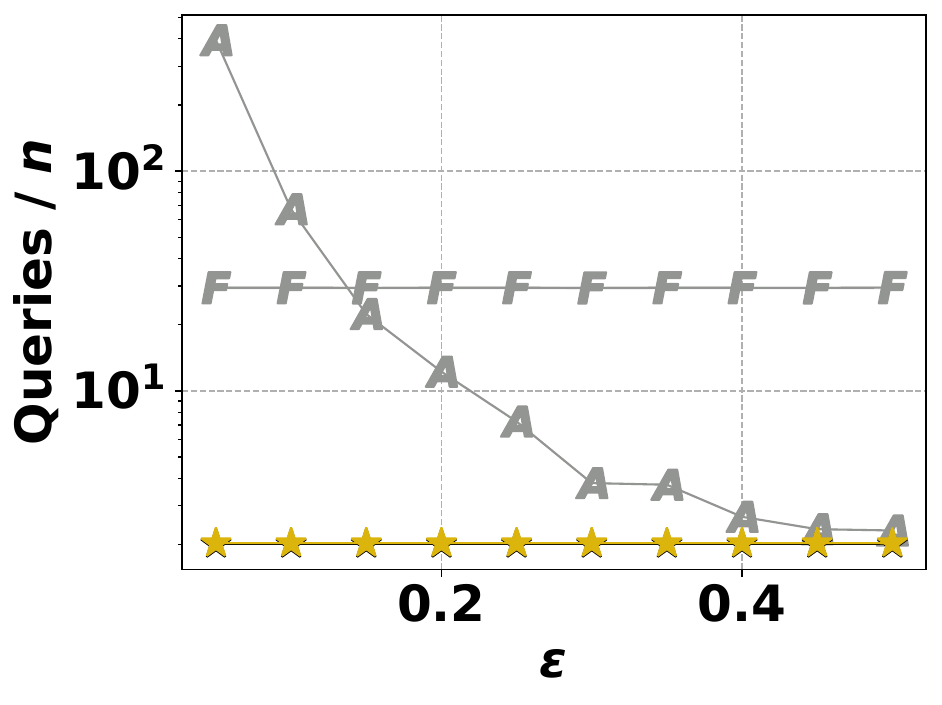}
  }
  \subfigure[\maxcut memory, \slashdot]{ \label{mt:mem-sd}
    \includegraphics[width=0.32\textwidth,height=0.2\textheight]{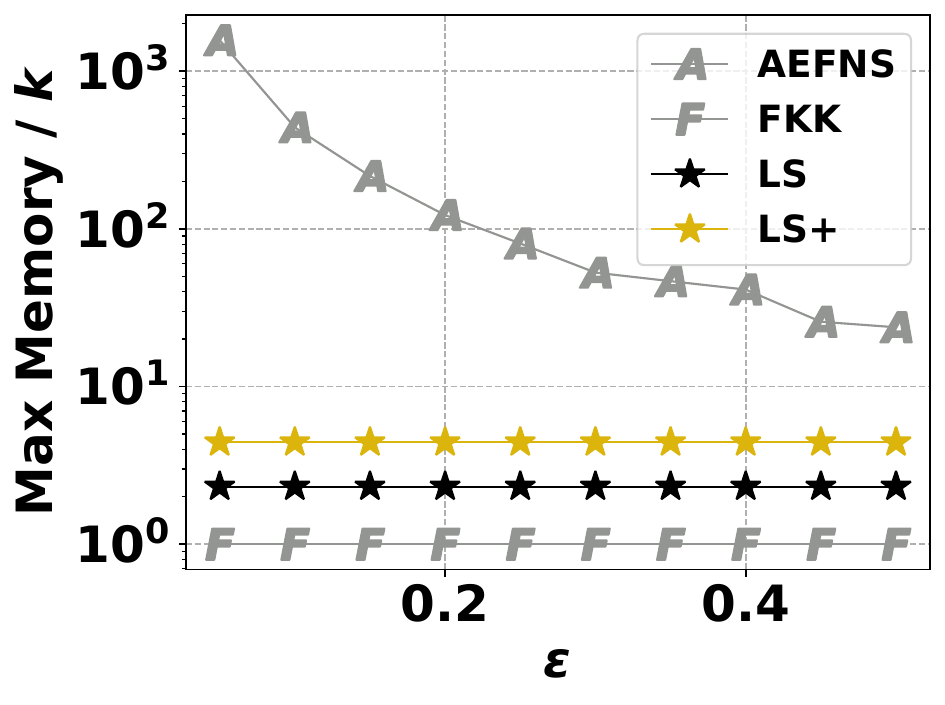}
  }
  \caption{Evaluation of single-pass streaming algorithms on the \slashdot dataset (n=77,360) with the \maxcut objective and different $\epsi$ values}
  \label{fig:epsi-sd}
\end{figure}
\subsection{Applications and Datasets}
The cardinality-constrained maximum cut function is defined as follows.
Given graph $G = (V, E)$, and nonnegative edge weight $w_{ij}$ on each edge
$(i,j) \in E$. For $S \subseteq V$, let
$$f(S) = \sum_{i \in V \setminus S} \sum_{j \in S} w_{ij}.$$
In general, this is a non-monotone, submodular function. 

The revenue maximization objective is defined as follows.
Let graph $G = (V, E)$ represent a social network, 
with nonnegative edge weight $w_{ij}$ on each edge
$(i,j) \in E$.
We use the concave graph model introduced by \citet{Hartline2008}.
In this model, each user $i \in V$ is associated with a non-negative,
concave function $f_i: \reals \to \reals$. The value $v_i(S) = f_i( \sum_{j \in S} w_{ij} )$
encodes how likely the user $i$ is to buy a product if the set $S$ has adopted it.
Then the total revenue for seeding a set $S$ is 
$$f(S) = \sum_{i \in V \setminus S} f_i\left( \sum_{j \in S} w_{ij} \right).$$
This is a non-monotone, submodular function. In our implementation,
each edge weight $w_{ij} \in (0,1)$ is chosen uniformly randomly; further,
$f_i( \cdot ) = ( \cdot )^{\alpha_i}$, where $\alpha_i \in (0,1)$ is chosen
uniformly randomly for each user $i \in V$.

The image summarization objective function is defined as follows.
Given a set $I$ of images, define nonnegative edge weight $s_{ij}$ as the cosine similarity
of the pixel vectors for images $i$ and $j$ in $I$.
\[f(S) = \sum_{i \in \uni} \max_{j \in S}s_{ij} - \frac{1}{n}\sum_{i \in S}\sum_{j \in S}s_{ij}.\]
The first term tries to ensure that the set $S$ is a good summary of the dataset,
while the second promotes diversity within the summary itself. 
This is a non-monotone, submodular objective function.
In this paper, we randomly select 500 images from CIFAR-10 where each image 
is represented by a pixel vector of length 3,072:
$32 \times 32$ pixels with red, green, and blue channels.

We evaluate the algorithms on sythetic random graphs as well as
real social network datasets from the Stanford Network Analysis
Project \citep{snapnets}. The specific datasets
used were as follows:
\begin{itemize}
\item \er, an Erd\H{o}s-Renyi random graph with
  number of nodes $n = 5000$ and edge probability
  $p = 0.01$.
\item \ba, a random graph in the
  Barabási-Albert preferential attachment
  model with parameter $n = 5000$ and
  initially $m_0 = 3$ nodes, and $3$ nodes
  added each iteration.
\item \fb, the ego-Facebook from \citet{snapnets}
  with $n=4039$, $m=88,234$.
\item \slashdot, the soc-Slashdot-0811
  social network from \citet{snapnets}
  with $n = 77,360$, $m=905,468$.
\item \pokec, the social network
  from \citet{snapnets} with $n=1,632,803$,
  and $m=30,622,564$.
\end{itemize}

\subsection{Additional Results}
Figs. \ref{fig:maxcut-val} and
\ref{fig:maxcut-query} show the
solution value and number of oracle
queries for the \maxcut application;
and Figs. \ref{fig:revmax-val} and
\ref{fig:revmax-query} show the same
for the \revmax application;
Figs.~\ref{fig:er} and~\ref{fig:slashdot}
show the results of solution value,
number of oracle queries, and memory usage
for \maxcut and \revmax applications with larger $k$ values;
Figs.~\ref{fig:epsi-fb} and~\ref{fig:epsi-sd}
show the results with different $\epsi$ values.

Observe that while occasionally
\qssp (gold star) obtains a lower
solution value than the other algorithms,
it more consistently returns high
solution values ($\ge 90\%$ of the
greedy algorithm)
across the five datasets and two
applications than the
other algorithms. Moreover, it uses
fewer queries, frequently by
more than an order of magnitude
over the next most efficient algorithm.

As for results comparing different $\epsi$ values,
in summary, LS, LS+ are very robust to changes in $\epsi$. 
This is because $\epsi$ impacts the frequency of deletion, which is a rare event. 
For LS+, it also impacts the post-processing procedure, which is why it does exhibit some dependence of objective value on $\epsi$.

\end{document}